\newcolumntype{W}{>{\raggedleft\arraybackslash}X}
\newcolumntype{C}{>{\centering\arraybackslash}X}
\newcommand{\ie}{\textit{i.e.},\xspace}
\newcommand{\wrt}{w.r.t.\xspace}
\renewcommand{\l}{\ensuremath{\ell}}
\renewcommand{\emptyset}{\ensuremath{\varnothing}}
\newcommand{\rev}{\ensuremath{\mathit{r}}}
\newcommand{\SA}{\ensuremath{\mathit{SA}}}
\newcommand{\LCP}{\ensuremath{\mathit{L}}}
\newcommand{\Occ}{\ensuremath{\mathit{Occ}}}
\newcommand{\FileBWT}{\ensuremath{\mathcal{B}}\xspace}
\newcommand{\FileLCP}{\ensuremath{\mathcal{L}}\xspace}
\newcommand{\FileSA}{\ensuremath{\mathcal{SA}}\xspace}
\newcommand{\FileE}{\ensuremath{\mathcal{E}}}
\newcommand{\FileP}{\ensuremath{\mathcal{P}}}
\newcommand{\FileArc}{\ensuremath{\mathcal{A}}}
\newcommand{\extalphabet}{\Sigma^{\$}}
\newtheorem{theorem}{Theorem}
\newtheorem{lemma}[theorem]{Lemma}
\newtheorem{proposition}[theorem]{Proposition}
\newtheorem{corollary}[theorem]{Corollary}
\theoremstyle{definition}
\newtheorem{definition}{Definition}
\begin{document}

\title{An External-Memory Algorithm for String Graph Construction}
\author{Paola Bonizzoni \and  Gianluca Della Vedova \and  Yuri
  Pirola \and  Marco Previtali \and   Raffaella Rizzi}

\date{DISCo, Univ. Milano-Bicocca, Milan, Italy}

\maketitle

\begin{abstract}
Some recent results~\cite{Rosone2013,Cox2012} have introduced
external-memory algorithms to compute self-indexes of a set of
strings, mainly via computing the Burrows-Wheeler Transform (BWT) of
the input strings.
The motivations for those results stem from Bioinformatics, where a
large number of short strings (called reads) are routinely produced
and analyzed.
In that field,
a fundamental problem is to assemble a genome from a large set of much
shorter samples extracted from the unknown genome.
The
approaches that are currently used to tackle this problem are
memory-intensive.
This fact does not bode well with the ongoing increase in the
availability of genomic data.
A data structure that is used in genome assembly is the string graph,
where vertices correspond to samples and arcs represent two
overlapping samples.
In this paper we address an open problem of~\cite{Simpson2010}: to
design an external-memory algorithm to compute the string graph.
\end{abstract}

\section{Introduction}

Several fields are witnessing a huge increase in the amount of available data,
such as real-time network data, Web usage logs, telephone call records,
financial transactions, and biological data~\cite{chen2002multi,Benson01012014}.
There are three main algorithmic solutions to cope with that amount of data: (1)
data streaming, where only one pass is made over the data and the working memory
is small compared to the input
data~\cite{demetrescu_trading_2009,computing-data-streams-Henzinger-1999}, (2)
parallel algorithms, where input data is split among several
processors~\cite{Valiant:1991:GPP:114872.114890}, and (3)
external memory algorithms~\cite{DBLP:journals/cacm/AggarwalV88,Vitter:2001:EMA:384192.384193} where only part
of the data is kept in main memory and most of the data are on disk.

The latter subfield has blossomed with the seminal paper by Vitter and
Shriver~\cite{vitter-shriver-1}, introducing the \emph{parallel disk
model}, where the performance is measured as the number of I/O operations and
the amount of disk space used.

The above discussion is especially relevant in
Bioinformatics, where we are currently witnessing a tremendous increase in
the data available, mainly thanks to the rise of different Next-Generation Sequencing
(NGS) technologies~\cite{PMID:20644199}.
De novo sequence assembly is still one of the most fundamental problems and is currently
receiving a lot of attention, just as it used to be twenty years ago~\cite{alizadeh_physical_1995,alizadeh_physical_1995-1}.
The assembly problem asks for a superstring $G$ (the unknown genome)
of the set $R$ of input strings (sampled from the unknown genome).
The concatenation of all input strings is a feasible solution of the problem,
but it is clearly a solution void of any biological significance: for this
reason a suitable optimization criterion must be introduced.
The simplest criterion is to find a shortest superstring of $R$~\cite{BJLTY94},
but that model considers neither that the input strings are sampled uniformly
from the genome $G$, nor that the samples may contain some errors (that is $G$
is a superstring only in an approximate sense).
Moreover, data obtained with different technologies or different instruments
have different characteristics, such as the length of the samples and the error
distribution, making difficult to describe a unified computational problem that
actually represents the real-world genome assembly problem.

For all those reasons, the successful assemblers incorporate a number of ideas
and heuristics originating from the biological characteristics of the input data
and of the expected output.
Interestingly, almost all the assemblers are based on some notion of graph to
construct a draft assembly.
Most of the available assemblers~\cite{Simpson2009,Peng2010,bankevich2012spades}
are built upon the notion of de Bruijn graphs, where the vertices are all
distinct $k$-mers (that is the $k$-long substrings appearing in at
least an  input string).
If we want to analyze datasets coming from different technologies, hence with
important variability in read lengths, an approach based on same-length strings is
likely to be limiting.
Moreover, one of the main hurdles to overcome is the main memory space that is
used by those assemblers.
For instance, a standard representation of the de Bruijn graph for the human
genome when $k=27$ requires 15GB (and is unfeasible in metagenomics).
To reduce the memory usage,  a probabilistic version of de
Brujin graphs, based on the notion of Bloom
filter, has been introduced~\cite{DBLP:journals/almob/ChikhiR13} and
uses less than 4GB of memory to store the de Bruijn graph for the human
genome when $k=27$.

The amount of data necessary to assemble a genome emphasizes the need for
algorithmic solutions that are time and space efficient.
An important challenge is to reduce main memory usage while keeping a reasonable
time efficiency.
For this reason, some
alternative approaches have been developed recently, mostly based on
the idea of \emph{string graph}, initially proposed by Myers~\cite{Myers2005}
before the advent of NGS technologies, and further
developed~\cite{Simpson2010,Simpson2012} to incorporate some advances in text
indexing, such as the FM-index~\cite{Ferragina2005}.
These methods build a graph whose vertices are the input reads and a visit of
the  paths of the graph allows to reconstruct the genome.

A practical advantage of string graphs over de Bruijn graphs is that reads are
usually much longer than $k$, therefore string graphs can immediately disambiguate some repeats
that de Bruijn methods might resolve only at later stages.
On the other hand, string graphs are more computationally intensive to compute~\cite{Simpson2012}.
For this reason we have studied the problem of computing the string graph on a
set $R$ of input strings, with the goal of developing an external-memory
algorithm that uses only a limited amount of main memory, while minimizing disk accesses.

Our work has been partially inspired by SGA~\cite{Simpson2010}, the most used string graph assembler.
SGA, just as several other bioinformatics programs, is based on the notions of
BWT~\cite{Burrows1994} and of FM-index constructed from the set $R$ of reads.
In fact, an important distinguishing feature of SGA is its use of the FM-index to compute the arcs of the string graph.
Still, the memory usage of SGA is considerable, since the experimental analysis
in~\cite{Simpson2012} has proved that SGA can successfully
assemble the human genome from a set of $\approx$1 billion 101bp reads, but
uses more than 50GB of RAM to complete the task.
The space improvement achieved in  the latest SGA implementation~\cite{Simpson2012}  required   to apply  a distributed construction algorithm of
the  FM-index for the collection of reads and specific optimization strategies to avoid  keeping the whole indexing of reads in main memory.
Indeed, the authors of SGA~\cite{Simpson2010}
estimated 400GB  of main  memory to  build the FM-index for a
collection of reads at 30x coverage over the human genome.
Since the approach used in SGA~\cite{Simpson2010}  requires to keep in main
memory the entire BWT and the FM-index of all input data, an open problem
of~\cite{Simpson2010} is to  reduce the space requirements by developing an
external memory algorithm to compute the string graph.
In this paper we are going to address this open problem.

Another fundamental inspiration is the sequence of
papers~\cite{Rosone2013,cox_large-scale_2012,bauer_lightweight_2013} that have
culminated in BCRext~\cite{bauer_lightweight_2013}, a lightweight (\ie external-memory) algorithm to compute the
BWT (as well as a number of other data structures) of a set of strings.
In fact, our algorithm receives as input all data structures computed by BCRext
and the set $R$ of input reads,  computing the string graph of $R$.

\section{Definitions}

We  briefly recall some standard definitions that will be used in the following.
Let $\Sigma$ be an ordered finite
alphabet and let $S$ be a string over $\Sigma$.
We denote by $S[i]$ the $i$-th symbol of $S$, by $\l=|S|$ the length of $S$, and by
$S[i:j]$ the substring $S[i]S[i+1] \cdots  S[j]$ of $S$.
The \emph{reverse} of $S$ is the string $S^{\rev}=S[\l] S[\l-1] \cdots S[1]$.
The \emph{suffix} and \emph{prefix} of $S$ of length $k$ are the
substrings $S[\l-k +1: \l]$ (denoted by $S[\l-k +1:]$) and $S[1: k]$ (denoted by
$S[:k]$) respectively.
The $k$-suffix of $S$ is the $k$-long suffix of $S$.

Given two strings $(S_i, S_j)$, we say that $S_i$
\emph{overlaps} $S_j$  iff a nonempty suffix $Z$ of $S_i$ is
also a prefix of $S_j$, that is $S_{i}=XZ$ and $S_j = Z Y$.
In that case we say that that $Z$ is the \emph{overlap} of $S_i$ and $S_j$,  denoted as $ov_{i,j}$,
that $Y$ is the \emph{right extension} of $S_i$ with $S_j$, denoted as
$rx_{i,j}$, and $X$ is
the  \emph{left extension}  of  $S_j$ with $S_i$, denoted as
$lx_{i,j}$.

In the following of the paper we will consider a collection
$R = \{r_1, \dots, r_m\}$ of $m$ strings (also called \emph{reads}, borrowing the term
from the bioinformatics literature)  over $\Sigma$.
As usual, we append a sentinel symbol $\$\notin\Sigma$ to the end of
each string ($\$$ lexicographically precedes all symbols in $\Sigma$) and  we
denote by $\extalphabet$ the extended alphabet $\Sigma \cup \{\$\}$.
We assume that the sentinel symbol $\$$ is not taken into account when
computing overlaps between two strings or when considering the length of a
string.
Moreover, two sentinel symbols do not match when compared with each
other.
This fact implies that two strings consisting only of a sentinel
symbol have a longest common prefix that has length zero.
A technical difficulty that we will overcome is to identify each read
even using a single sentinel.
Using a distinct sentinel for each read would eliminate the problem,
but it would make the alphabet size too large.
We denote by $n$ the total number of characters in the input strings, and by $l$
the maximum length of a string, that is $n=\sum_{i=1}^{m} |r_{i}|$ and $l =
\max_{i=1}^{m} \{ |r_{i}| \}$.

\begin{definition}
\label{definition:GSA-LCP-BWT}
The \emph{Generalized Suffix Array (GSA)}~\cite{Shi1996} of $R$ is the array
$\SA$ where each element $\SA[i]$ is
equal to $(k, j)$ if and only if the $k$-suffix of string $r_{j}$ is the
$i$-th smallest element in the lexicographic ordered set of all suffixes of the
strings in  $R$.
The \emph{Longest Common Prefix (LCP) array} of  $R$, is the $n$-long array
$\LCP$  such that $\LCP[i]$  is equal to
the length of the longest prefix shared by the the $k_{i}$-suffix of
$r_{j_{i}}$ and the  $k_{i-1}$-suffix of $r_{j_{i-1}}$, where $\SA[i]=(k_{i}, j_{i})$ and  $\SA[i-1]=(k_{i-1}, j_{i-1})$.
Conventionally, $\LCP[1]=-1$.
The \emph{Burrows-Wheeler Transform (BWT)} of $R$ is the sequence $B$ such that
$B[i]=r_{j}[k-1]$, if $\SA[i] = (k,j)$ and $k > 1$, or $B[i]= \$$, otherwise.
Informally, $B[i]$ is the symbol that precedes the $k$-suffix of
string $r_j$ where such suffix is the $i$-th smallest suffix in the
ordering given by $\SA$.
\end{definition}

\begin{table}[htb]
\centering
\begin{tabular}{r|r|r|r|c}
$i$  &$LS[i]$&\SA[i]&\LCP[i]&B[i]\\\hline
1 &\$    &$(0, 1)$& -&E\\
2 &\$  &$(0, 3)$& 0&T\\
3 &\$    &$(0, 2)$& 0&N\\
4 &APPLE\$    &$(5, 1)$& 0&\$\\
5 &APRICOT\$  &$(7, 3)$& 2&\$\\
6 &COT\$      &$(3, 3)$& 0&I\\
7 &E\$        &$(1, 1)$& 0&L\\
8 &EMON\$     &$(4, 2)$& 1&L\\
9 &ICOT\$     &$(4, 3)$& 0&R\\
10&LE\$       &$(2, 1)$& 0&P\\
11&LEMON\$    &$(5, 2)$& 2&\$\\
12&MON\$      &$(3, 2)$& 0&E\\
13&N\$        &$(1, 2)$& 0&O\\
14&ON\$       &$(2, 2)$& 0&M\\
15&OT\$       &$(2, 3)$& 1&C\\
16&PLE\$      &$(3, 1)$& 0&P\\
17&PPLE\$     &$(4, 1)$& 1&A\\
18&PRICOT\$   &$(6, 3)$& 1&A\\
19&RICOT\$    &$(5, 3)$& 0&P\\
20&T\$        &$(1, 3)$& 0&O\\\hline
\end{tabular}
\caption{GSA, LCP, BWT on the reads APPLE, LEMON, APRICOT}
\label{tab:example-defs}
\end{table}

The $i$-th smallest (in lexicographic order) suffix  is denoted $LS[i]$, that is
if $SA[i]=(k,j)$ then $LS[i]= r_{j}[|r_{j}| - k + 1:]$.
In the paper, and especially in the statements, we assume that $R$ is a set of
reads, $\SA$ is the generalized suffix array of $R$, $\LCP$ is the LCP array of
$R$, and $B$ is the Burrows-Wheeler Transform of $R$.

Given a string $Q$ and a collection $R$, notice that all suffixes of $R$
whose prefix is $Q$ appear consecutively in LS.
We call \emph{$Q$-interval}~\cite{bauer_lightweight_2013} on $R$ (or
simply $Q$-interval, if the set $R$ is clear from the context) the
maximal interval $[b, e)$ such
that $Q$ is a prefix of $LS[i]$ for each $i$, $b\le i<e$ (we denote the
$Q$-interval by $q(Q)$).
Sometimes we will need to refer to the $Q$-interval on the set
$R^{r}$: in that case the $Q$-interval is denoted by $q^{r}(Q)$.
We define the \emph{length} and \emph{width} of the $Q$-interval $[b,
e)$ on $R$
as $|Q|$ and the difference $e-b$, respectively.
Notice that the width of the $Q$-interval is equal to the number of
occurrences of $Q$ as a substring of some string $r \in R$.
For instance, on the example in Table~\ref{tab:example-defs} the LE-interval is $[10,12)$.
Whenever the string $Q$ is not specified, we will use the term
\emph{string-interval}.
Since the BWT, the LS, the GSA, and the LCP arrays are all closely
related, a string interval can be viewed as an interval on any of
those arrays.

To extend the previous definition of string interval to consider a
string $Q$ that is a string over the alphabet $\extalphabet$, we have some technical
details to fix, related to the fact that a suffix can contain a sentinel $\$$
only as the last character.
For example, we have to establish what is the $\$A$-interval in
Table~\ref{tab:example-defs}, even though no suffix has $\$A$  as a prefix.
To overcome this hurdle, to each suffix $S[i:]$ of the string $S$ we
associate $S[i:]S[:i-1]$, which is a rotation of $S$, and let $LS'$ be
array of the sorted rotations.
Given a string $Q$ over $\extalphabet$, we define the $Q$-interval as
the maximal interval $[b, e)$ such that $Q$ is a prefix of the $i$-th
rotation (in lexicographic order) for each $i$, $b\le i<e$.

Let $S$ be a string over $\Sigma$.
Then the $S\$$-interval $q(S\$)$ contains exactly one suffix extracted
from each read with suffix $S$.
Moreover, the $\$S$-interval $q(\$S)$ contains exactly one suffix extracted
from each read with prefix $S$.
For this reason, we will say that $q(S\$)$ identifies the set
$R^{s}(S)$ of the reads with suffix $S$, and $q(\$S)$ identifies the set
$R^{p}(S)$ of the reads with prefix $S$.

Let $B^{\rev}$ be the BWT of the set $R^{\rev}=\{ r^{\rev} \mid r\in R\}$, let
$[b,e)$ be the $Q$-interval on $R$ for some string $Q$, and let $[b^{r},e^{r})$ be the
$Q^{\rev}$-interval on $R^{\rev}$.
Then, $[b,e)$ and $[b^{r},e^{r})$ are called \emph{linked} string-intervals.
The linking relation is a 1-to-1 correspondence and two linked intervals
have same width and length, hence $e-b = e^{r}-b^{r}$.
Given two strings $Q_{1}$ and $Q_{2}$, the $Q_{1}$-interval
and the $Q_{2}$-interval on $R$ are either contained one in the other (possibly are the
same) or disjoint.
There are  some interesting relations between string-intervals and the LCP
array.

The interval $[i:j]$ of the LCP array is called an lcp-interval of value $k$
(shortly $k$-interval) if $\LCP[i], \LCP[j+1]< k$, while $\LCP[h]\ge k$ for
each $h$ with $i<h\le j$ and there exists $\LCP[h]=k$ with $i<h\le j$~\cite{Abouelhoda2004}.
An immediate consequence is the following proposition.

\begin{proposition}
\label{proposition:interval-LCP}
Let $R$ be a set of reads, let $\LCP$ be the LCP array of $R$,
let $S$ be a string and let $[b, e)$ be the $S$-interval.
Then $\LCP[h]\ge |S|$  for each $h$ with $b<h\le e-1$.
Moreover, if $S$ is the longest string whose $S$-interval is  $[b, e)$, then
$[b: e - 1]$ is a $|S|$-interval.
\end{proposition}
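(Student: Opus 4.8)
The plan is to treat the two assertions separately, since the first is essentially immediate and will be reused inside the second.

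For the first assertion, I would fix an index $h$ with $b<h\le e-1$; then both $h-1$ and $h$ lie in $[b,e)$, so $LS[h-1]$ and $LS[h]$ both have $S$ as a prefix by the definition of the $S$-interval, and since $h\ge 2$ the value $\LCP[h]$ is a genuine longest-common-prefix length (not the conventional $-1$). Comparing these two suffixes symbol by symbol, their first $|S|$ symbols are exactly the symbols of $S$, none of which is a sentinel (as $S$ is a string over $\Sigma$), so all $|S|$ comparisons succeed regardless of the convention on matching $\$$; hence $\LCP[h]\ge|S|$. The only point to watch is a suffix in the interval equal to $S\$$ exactly, but this is harmless: the first $|S|$ positions still agree.

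For the second assertion I would proceed in three steps. First, the maximality of the $S$-interval already yields the two ``boundary'' conditions of a $|S|$-interval, namely $\LCP[b]<|S|$ and $\LCP[(e-1)+1]=\LCP[e]<|S|$: if, say, $\LCP[b]\ge|S|$ then $LS[b-1]$ would agree with $LS[b]$ on its first $|S|$ symbols, hence would have $S$ as a prefix, contradicting that $[b,e)$ is the maximal interval of suffixes prefixed by $S$ (at the array ends one uses $\LCP[1]=-1$ together with the standard convention $\LCP[n+1]=-1$). Second, combined with the first assertion this gives $\LCP[h]\ge|S|$ for every $h$ with $b<h\le e-1$. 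Third, I must exhibit an interior index with $\LCP[h]=|S|$, and for this I would show that $S$ equals the longest common prefix $P$ of the suffixes $LS[b],\dots,LS[e-1]$: clearly $S$ is a prefix of $P$, and if $|S|<|P|$ one sets $S'=P[:|S|+1]$ and checks that $S'$ is still a prefix of every suffix in the interval (being a prefix of $P$) but a prefix of neither $LS[b-1]$ nor $LS[e]$ (since $\LCP[b],\LCP[e]<|S|<|S'|$), so the $S'$-interval is again $[b,e)$ --- contradicting the assumption that $S$ is the longest such string. Hence $S=P$, so $|S|=\min\{\LCP[h]:b<h\le e-1\}$, and any index attaining this minimum has $\LCP[h]=|S|$. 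The four conditions together say that $[b:e-1]$ is a $|S|$-interval.

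The step I expect to be the real obstacle is the identity $S=P$: that the longest string whose string-interval is $[b,e)$ is exactly the longest common prefix of the suffixes it selects. This is the only point where the hypothesis ``$S$ is longest'' genuinely interacts with the maximality of $[b,e)$, mediated by the boundary LCP values $\LCP[b]$ and $\LCP[e]$; the remainder is bookkeeping with the definitions. I would also stay alert to the degenerate width-one interval --- where a $|S|$-interval in the sense of~\cite{Abouelhoda2004} has no interior index, so the statement must be read as tacitly excluding that case --- and to the two ends of the LCP array.
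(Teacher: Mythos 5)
Your proof is correct and complete; the paper offers no proof at all for this proposition (it is asserted as an immediate consequence of the definitions of string-interval and lcp-interval), and your argument is exactly the routine unfolding of those definitions that the authors had in mind, including the key use of the maximality of $[b,e)$ for the two boundary conditions and of the ``longest $S$'' hypothesis to force $|S|=\min\{\LCP[h] : b<h\le e-1\}$ so that the minimum is attained at some interior index. Your observation that the width-one case $e-b=1$ must be tacitly excluded (a $k$-interval requires an interior index attaining the value $k$, and a singleton interval has none) is a genuine, if minor, caveat that the paper silently ignores.
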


Proposition~\ref{proposition:interval-LCP} relates the notion of
string-intervals with that of lcp-intervals.
It is immediate to associate to each $k$-interval $[b,e)$ the string $S$
consisting of the common $k$-long prefix of all suffixes in $LS[i]$ with $b\le i< e$.
Such string $S$ is called the \emph{representative} of the $k$-interval.
Moreover, given a  $k$-interval $[b,e)$, we will say that $b$ is its
\emph{opening position} and that $e$ is its \emph{closing position}.

\begin{proposition}
\label{proposition:test-prefix}
Let $S_{1}$, $S_{2}$ be two strings such that the $S_{2}$-interval
$[b_{2}, e_{2})$  is nonempty, and let $[b_{1}, e_{1})$ be the
$S_{1}$-interval.
Then $S_{1}$ is a proper prefix of $S_{2}$ if and only if $[b_{1}, e_{1})$
contains $[b_{2}, e_{2})$ and $|S_{1}|<|S_{2}|$.
\end{proposition}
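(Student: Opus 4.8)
The plan is to prove both implications by exploiting the characterization of a $Q$-interval as the maximal block of $LS'$ (the sorted rotations) sharing the prefix $Q$, together with the containment dichotomy already recorded for string-intervals.

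For the forward direction, suppose $S_1$ is a proper prefix of $S_2$, so $|S_1| < |S_2|$ is immediate. Take any index $i$ with $b_2 \le i < e_2$; by definition the $i$-th sorted rotation has $S_2$ as a prefix, hence (since $S_1$ is a prefix of $S_2$) it has $S_1$ as a prefix as well, so $i$ lies in some maximal block of rotations prefixed by $S_1$. Because the $S_1$-interval is precisely that maximal block and it is nonempty (it contains the nonempty $[b_2,e_2)$), we get $[b_2,e_2) \subseteq [b_1,e_1)$. This gives the "contains" half with no case analysis; the only mild care needed is when $S_1$ or $S_2$ involves the sentinel, but that is exactly the situation the rotation-based definition of $Q$-interval was set up to handle, so the same argument goes through verbatim on $LS'$.

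For the converse, assume $[b_1,e_1)$ contains $[b_2,e_2)$ and $|S_1| < |S_2|$. Pick any rotation index $i$ in $[b_2,e_2) \subseteq [b_1,e_1)$: the $i$-th sorted rotation has both $S_1$ and $S_2$ as prefixes. Two strings that are both prefixes of a common string are comparable under the prefix order, and since $|S_1| < |S_2|$ it must be that $S_1$ is a prefix of $S_2$, and a proper one by the length inequality. This is the step I expect to be the crux — not because it is hard, but because one must justify that $[b_2,e_2)$ is genuinely nonempty so that such an $i$ exists; that is given in the hypothesis, so the argument is complete. (Alternatively one can invoke the containment/disjointness dichotomy: the $S_1$- and $S_2$-intervals are nested, and a strict length inequality forces the prefix relation to run from $S_1$ to $S_2$.)

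I would present the forward direction first since it is the one that directly uses the definition of the interval as a maximal block, and the converse second as a short consequence of "common prefixes are comparable." No heavy computation is involved; the entire proof is a careful unwinding of definitions, and the only place to be attentive is the interaction with the sentinel, which is absorbed by working with sorted rotations rather than sorted suffixes throughout.
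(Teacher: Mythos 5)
Your proof is correct and follows essentially the same route as the paper's: both directions are a careful unwinding of the definition of a string-interval, with the converse resting on the fact that two prefixes of a common string are comparable, so $|S_1|<|S_2|$ forces $S_1$ to be a proper prefix of $S_2$. The only difference is organizational: the paper splits the converse into the proper-containment case and the equal-intervals case (using the representative of the interval in the latter), whereas you handle both uniformly by picking a witness rotation in the nonempty $[b_2,e_2)\subseteq[b_1,e_1)$ that has both $S_1$ and $S_2$ as prefixes -- a slightly cleaner packaging of the same idea.
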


\begin{proof}
The only if direction is immediate, therefore we only consider the case when $[b_{1},
e_{1})$ contains $[b_{2}, e_{2})$ and $|S_{1}|<|S_{2}|$.

If the containment is proper, the proof is again immediate from the definition
of string-interval.
Therefore assume that  $b_{1}=b_{2}$ and  $e_{1}= e_{2}$.
Since $[b_{1}, e_{1})$ is nonempty, the representative $S$ of $[b_{1}, e_{1})$
is not the empty string.
Moreover both $S_{1}$ and $S_{2}$ are prefixes of $S$, since $S$ is
the longest common prefix of $LS[b,e)$.
Since $|S_{1}|<|S_{2}|$, $S_{1}$ is a proper prefix of $S_{2}$.
\end{proof}

Notice that Proposition~\ref{proposition:test-prefix} is restricted to
nonempty $S_{2}$-intervals, since the $Q$-interval is empty for each
string $Q$ that is not a substring of a read  in $R$.
Therefore relaxing that condition would falsify Proposition~\ref{proposition:test-prefix}.
On the other hand, we do not need to impose the $S_{1}$-interval to be
nonempty, since it is an immediate consequence of the  assumption that $S_{1}$ is a prefix of $S_{2}$.

Given a $Q$-interval and a symbol $\sigma^{\$}\in \Sigma$, the \emph{backward
  $\sigma$-extension} of the $Q$-interval is the $\sigma Q$-interval (that is,
the interval on the GSA of the suffixes sharing the common prefix $\sigma Q$).
We say that a $Q$-interval has a \emph{nonempty} (\emph{empty}, respectively)
backward $\sigma$-extension if the resulting interval has width greater than 0
(equal to 0, respectively).
Conversely, the \emph{forward $\sigma$-extension} of a $Q$-interval is the
$Q \sigma$-interval.
We recall that the FM-index~\cite{Ferragina2005} is essentially made of the
two arrays $C$ and $\Occ$, where
$C(\sigma)$, with $\sigma \in \Sigma$, is the number of occurrences in
$B$ of symbols that are alphabetically smaller than $\sigma$, while
$\Occ(\sigma,i)$ is the number of occurrences of $\sigma$ in the prefix $B[ : i
- 1]$ (hence $\Occ(\cdot, 1)=0$).
It is immediate to obtain from $C$ a function $C^{-1}(i)$ that returns the first
character of $LS[i]$; this fact allows to represent a character with
an integer.
We can now state a fundamental characterization of
extensions of a string-interval~\cite{Ferragina2005,Lam2009}.
This characterization allows to compute efficiently all extensions,
via $C$ and $Occ$.

\begin{proposition}
\label{prop:extension-formulae}
Let $S$ be a string, let $q(S)=[b,e)$ be the $S$-interval and let $\sigma$ be a character.
Then the backward $\sigma$-extension of  $[b,e)$ is $q(\sigma S) = [C(\sigma) +
Occ(\sigma, b) + 1, C(\sigma) + Occ(\sigma, e))$, and the forward
$\sigma$-extension of  $[b,e)$ is $q(S\sigma) = [b + \sum_{c< \sigma} (Occ(\sigma, e) - Occ(\sigma, b)), b + \sum_{c\le \sigma} (Occ(\sigma, e) - Occ(\sigma, b))$.
\end{proposition}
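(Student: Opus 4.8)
The plan is to derive both formulae from the standard LF-mapping / backward-search machinery of the FM-index, carefully adapting it to the fact that we are working on a \emph{generalized} suffix array (with one sentinel shared by all reads) rather than on a single string. The key bookkeeping fact is that sorting the rotations $LS'$ is the same, within each $\sigma$-block, as sorting the suffixes that those rotations extend backwards into: if $LS[i]$ starts with a character $c$, then $B[i]$ is the character preceding that occurrence of $LS[i]$ in its read, and appending $B[i]$ in front of $LS[i]$ gives a longer suffix (or a rotation) whose rank is governed by $C$ and $\Occ$.

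For the \textbf{backward extension} I would argue as follows. Fix $\sigma$ and consider the suffixes of $R$ that begin with $\sigma S$. Each such suffix, with its leading $\sigma$ stripped, is a suffix beginning with $S$, i.e.\ it corresponds to some index $i$ with $b\le i<e$ and $B[i]=\sigma$; conversely every $i$ in $[b,e)$ with $B[i]=\sigma$ yields, by prepending $\sigma$, a suffix beginning with $\sigma S$. So the $\sigma S$-interval has width exactly $\Occ(\sigma,e)-\Occ(\sigma,b)$, matching the claimed endpoints. It remains to locate this block absolutely: all suffixes whose first character is $\sigma$ occupy the contiguous block starting at position $C(\sigma)+1$ (that is exactly what $C$ counts, since $\$$ precedes every symbol of $\Sigma$), and \emph{within} that block the LF-mapping preserves the relative order of the suffixes obtained by the prepend operation — because comparing $\sigma X$ with $\sigma Y$ reduces to comparing $X$ with $Y$. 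Hence the suffixes in $q(S)$ with $B[i]=\sigma$ map to the consecutive sub-block of the $\sigma$-block of size $\Occ(\sigma,b)$ on the left and $\Occ(\sigma,e)$ on the right, giving $q(\sigma S)=[C(\sigma)+\Occ(\sigma,b)+1,\;C(\sigma)+\Occ(\sigma,e))$. One should note here the role of the sentinel convention ("two sentinel symbols do not match"): it guarantees that the rotations corresponding to distinct reads with identical suffixes are still linearly ordered, so the LF-mapping is a genuine bijection on each block and the ranks add up cleanly.

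For the \textbf{forward extension} I would dualize the argument via the reverse collection, or equivalently work directly on $LS$. The $S$-interval $[b,e)$ splits into the sub-intervals $q(Sc)$ for the distinct characters $c$ that may follow $S$, and these sub-intervals are \emph{consecutive and in alphabetical order of $c$} (since $Sc_1 < Sc_2$ lexicographically whenever $c_1<c_2$). So the $S\sigma$-interval starts at $b$ shifted right by the total width of all $q(Sc)$ with $c<\sigma$, and ends after additionally adding the width of $q(S\sigma)$. The width of $q(Sc)$ equals the width of the backward $c$-extension of the \emph{linked} interval on $R^{\rev}$, which by the backward formula just proved is $\Occ_{B^{\rev}}(c,e^{r})-\Occ_{B^{\rev}}(c,b^{r})$; identifying this with the quantity written as $\Occ(\sigma,e)-\Occ(\sigma,b)$ in the statement (a mild notational slip for $\Occ(c,e)-\Occ(c,b)$ under the linking correspondence) yields the displayed endpoints.

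The main obstacle I expect is not the LF-mapping identity itself, which is classical, but the verification that the generalized-suffix-array setup with a single shared sentinel does not break the contiguity and order-preservation properties — in particular, making sure that the $\$$-rows (suffixes of length $0$, i.e.\ the sentinels themselves, where $B[i]=\$$) are handled consistently so that $C(\sigma)$ really does count exactly the rows lexicographically before the $\sigma$-block, and that "mismatching sentinels" do not produce spurious equalities that would collapse two distinct suffixes into one rank. Once that is pinned down, both formulae follow by counting.
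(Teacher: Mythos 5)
Your argument is correct and is precisely the classical LF-mapping / backward-search derivation that the paper itself does not spell out but delegates to its citations of Ferragina--Manzini and Lam \emph{et al.}, including the one point that genuinely needs care in this setting, namely that the shared sentinel does not break contiguity or the order-preservation of the mapping (which the paper handles by its stated re-ordering of the $\$$-rows by the lexicographic order of the reads). You are also right that the forward-extension formula as printed is garbled --- the summand should be $\Occ(c,e)-\Occ(c,b)$ with the occurrence counts taken over the linked interval in the BWT of $R^{\rev}$ --- and your charitable reading together with the dualization through the reverse collection is the intended and correct proof of that half.
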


Proposition~\ref{prop:extension-formulae} presents a technical problem
when $\sigma$ is the sentinel $\$$.
More precisely, since all reads share the same sentinel $\$$, we might
not have a correspondence between suffixes in the $q(\$S)$ and reads
with prefix $S$.
More precisely, if  $q(\$S) = [b ,e)$, $b\le i<e$, and $SA[i]=(k,j)$,
we do not know whether $r_{j}$ has prefix $S$ (which is needed to
preserve the correspondence between $q(\$S)$ and $R^{p}(S)$.
For this reason, we sort the suffixes that are equal to the sentinel
$\$$ (corresponding to the positions $i$ such that $SA[i] = (0,
\cdot)$) according to the lexicographic order of the reads.
In other words, we enforce that, for each $i_{1}$, $i_{2}$ where
$SA[i_{1}] = (0, j_{1})$, $SA[i_{2}] = (0, j_{2})$, if $i_{1}<i_{2}$
then $r_{j_{1}}$ lexicographically precedes $r_{j_{2}}$.
For that purpose, it suffices a coordinated scan of the GSA and of the
BWT, exploiting the fact that $B[i] = \$$ and $SA[i] = (k,j)$ iff the
read $r_{j}$ is $k$ long.

\begin{definition}[Overlap graph]
\label{definition:overlap-graph}
Given a set $R$ of reads, its \emph{overlap graph}~\cite{Myers2005} is the
directed graph $G_{O}=(R, A)$ whose vertices are the reads in $R$, and where two reads
$r_i, r_j$ form the arc $(r_i, r_j)$ if they have a nonempty overlap.
Moreover, each arc  $(r_i,r_j)$ of  $G_{O}$
is labeled  by the left extension  $lx_{i,j}$ of $r_i$ with  $r_j$.
\end{definition}

The main use of string graph is to compute the \emph{assembly} of each
path, corresponding to the sequence that can be read by traversing the
reads corresponding to vertices of the path and overlapping those reads.
More formally, given a path
$r_{i_{1}},r_{i_{2}}, \ldots , r_{i_{k}}$ of $G_{O}$, its assembly is
the string
$lx_{i_{1}, i_{2}} lx_{i_{2}, i_{3}} \cdots lx_{i_{k-1}, i_{k}} r_{i_{k}}$.

The original definition of overlap graph~\cite{Myers2005} differs from ours since the
label of the arc $(r_i,r_j)$ consists of the right extension
$rx_{i,j}$ as well as
$lx_{i,j}$.
Accordingly, also their definition of assembly uses the right extensions
instead of the left extensions.
The following lemma establishes the equivalence of those two definitions in terms of \emph{assembly} of a path.

\begin{lemma}
\label{lemma:prefix-interval-path}
Let $G_{O}$ be the overlap graph for $R$ and let
$r_{i_{1}},r_{i_{2}}, \ldots , r_{i_{k}}$ be a path of $G_{O}$.
Then,
$lx_{i_{1}, i_{2}} lx_{i_{2}, i_{3}} \cdots lx_{i_{k-1}, i_{k}} r_{i_{k}} =
r_{i_{k}} rx_{i_{1}, i_{2}} rx_{i_{2}, i_{3}} \cdots rx_{i_{k-1}, i_{k}}$.
\end{lemma}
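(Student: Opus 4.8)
The plan is to prove the identity by induction on the path length $k$, exploiting the basic relation between overlaps and extensions for a single arc. Recall that when $r_a$ overlaps $r_b$ with overlap $ov_{a,b}$, we have $r_a = lx_{a,b}\, ov_{a,b}$ and $r_b = ov_{a,b}\, rx_{a,b}$, so that $r_a\, rx_{a,b} = lx_{a,b}\, ov_{a,b}\, rx_{a,b} = lx_{a,b}\, r_b$. This single-arc identity $r_a\, rx_{a,b} = lx_{a,b}\, r_b$ is the engine of the whole argument.

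For the base case $k=2$, the claimed equality reads $lx_{i_1,i_2}\, r_{i_2} = r_{i_2}\, rx_{i_1,i_2}$, which is exactly the single-arc identity above applied to the arc $(r_{i_1}, r_{i_2})$ (after noting both sides equal $r_{i_1}\, rx_{i_1,i_2}$, or equivalently $lx_{i_1,i_2}\, ov_{i_1,i_2}\, rx_{i_1,i_2}$). For the inductive step, assume the identity holds for the path $r_{i_2}, \ldots, r_{i_k}$, i.e.
\[
lx_{i_2, i_3}\, lx_{i_3,i_4} \cdots lx_{i_{k-1}, i_k}\, r_{i_k} = r_{i_k}\, rx_{i_2, i_3}\, rx_{i_3, i_4} \cdots rx_{i_{k-1}, i_k}.
\]
Prepend $lx_{i_1,i_2}$ to the left-hand side; by the inductive hypothesis this equals $lx_{i_1,i_2}\, r_{i_k}\, rx_{i_2,i_3} \cdots rx_{i_{k-1},i_k}$. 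Now I want to move $lx_{i_1,i_2}$ past $r_{i_k}$ using the single-arc identity for the arc $(r_{i_1}, r_{i_2})$, which gives $lx_{i_1,i_2}\, r_{i_2} = r_{i_2}\, rx_{i_1,i_2}$; but the string sitting to the right of $lx_{i_1,i_2}$ is $r_{i_k}$, not $r_{i_2}$.

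The main obstacle is precisely this mismatch: the single-arc identity is phrased in terms of $r_{i_2}$, whereas after applying the inductive hypothesis we have $r_{i_k}$ there. The resolution is to reorganize the induction so that $r_{i_2}$ is exposed at the right moment. Concretely, I would instead induct by \emph{peeling the last arc}: assume the identity for the path $r_{i_1}, \ldots, r_{i_{k-1}}$, giving $lx_{i_1,i_2} \cdots lx_{i_{k-2},i_{k-1}}\, r_{i_{k-1}} = r_{i_{k-1}}\, rx_{i_1,i_2} \cdots rx_{i_{k-2},i_{k-1}}$. Then the left-hand side of the $k$-path identity is $lx_{i_1,i_2} \cdots lx_{i_{k-2},i_{k-1}}\, (lx_{i_{k-1},i_k}\, r_{i_k})$; apply the single-arc identity to the arc $(r_{i_{k-1}}, r_{i_k})$ to rewrite $lx_{i_{k-1},i_k}\, r_{i_k} = r_{i_{k-1}}\, rx_{i_{k-1},i_k}$, obtaining $lx_{i_1,i_2} \cdots lx_{i_{k-2},i_{k-1}}\, r_{i_{k-1}}\, rx_{i_{k-1},i_k}$; now the inductive hypothesis applies verbatim to the prefix $lx_{i_1,i_2} \cdots lx_{i_{k-2},i_{k-1}}\, r_{i_{k-1}}$, turning it into $r_{i_{k-1}}\, rx_{i_1,i_2} \cdots rx_{i_{k-2},i_{k-1}}$, and appending the trailing $rx_{i_{k-1},i_k}$ yields exactly the right-hand side. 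The only subtlety to record is that the convention excluding the sentinel $\$$ from overlaps guarantees the decompositions $r_a = lx_{a,b}\, ov_{a,b}$ and $r_b = ov_{a,b}\, rx_{a,b}$ are honest string factorizations over $\Sigma$, so all the concatenations above are well-defined and associative as ordinary strings.
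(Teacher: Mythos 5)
Your final argument is, in substance, exactly the paper's proof: establish the single-arc identity $lx_{a,b}\,r_b = lx_{a,b}\,ov_{a,b}\,rx_{a,b} = r_a\,rx_{a,b}$, peel off the \emph{last} arc of the path, and apply the inductive hypothesis to the prefix path $r_{i_1},\ldots,r_{i_{k-1}}$. The one issue to flag is that the lemma as printed contains a typo: the right-hand side should be $r_{i_1}\,rx_{i_1,i_2}\cdots rx_{i_{k-1},i_k}$, not $r_{i_k}\,rx_{i_1,i_2}\cdots rx_{i_{k-1},i_k}$ (the paper's own proof ends by deriving $r_{i_1}\,rx_{i_1,i_2}\cdots rx_{i_{k-1},i_k}$, and already for $k=2$ the single-arc identity gives $lx_{i_1,i_2}\,r_{i_2} = r_{i_1}\,rx_{i_1,i_2}$, not $r_{i_2}\,rx_{i_1,i_2}$). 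Because you took the printed statement at face value, your write-up inherits this inconsistency: in the base case you assert parenthetically that $r_{i_2}\,rx_{i_1,i_2}$ equals $r_{i_1}\,rx_{i_1,i_2}$, which is false for distinct reads, and in the inductive step your hypothesis delivers a string beginning with $r_{i_{k-1}}$, which matches neither the printed right-hand side (beginning with $r_{i_k}$) nor the correct one (beginning with $r_{i_1}$), yet you declare it to be "exactly the right-hand side." If you restate the identity with $r_{i_1}$ leading the right-hand side and carry that through the inductive hypothesis, every step you wrote goes through verbatim and coincides with the paper's argument; your preliminary discussion of why the "peel the first arc" induction stalls, and the remark on the sentinel convention, are extra but harmless.
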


\begin{proof}
We will prove the lemma by induction on $k$. Let $(r_{h}, r_{j})$ be an arc of
$G_{O}$.
Notice that the path $r_{h} r_{j}$ represents $lx_{h,j} ov_{h,j} rx_{h,j}$.
Since $r_{h} = lx_{h,j} ov_{h,j}$ and
$r_{j}= ov_{h,j} rx_{h,j}$, the case $k=2$ is immediate.

Assume now that the lemma holds for paths of length smaller than $k$ and
consider the path $(r_{i_{1}}, \ldots , r_{i_{k}})$.
The same argument used for $k=2$ shows that
$lx_{i_{1}, i_{2}} lx_{i_{2}, i_{3}} \cdots lx_{i_{k-1}, i_{k}} r_{i_{k}} =
lx_{i_{1}, i_{2}} lx_{i_{2}, i_{3}} \cdots lx_{i_{k-2}, i_{k-1}} r_{i_{k-1}} rx_{k-1,k}$.
By inductive hypothesis
$lx_{i_{1}, i_{2}} lx_{i_{2}, i_{3}} \cdots lx_{i_{k-2}, i_{k-1}} r_{i_{k-1}}
rx_{k-1,k} =
r_{i_{1}} rx_{i_{1}, i_{2}} rx_{i_{2}, i_{3}} \cdots rx_{i_{k-2}, i_{k-1}}
rx_{i_{k-1}, i_{k}}$, completing the proof.
\end{proof}

This definition models the actual use of string graphs to reconstruct a genome~\cite{Myers2005}.
If we have perfect data and no relevant repetitions, the overlap
graph is a directed acyclic graph (DAG) with a unique topological sort, which in
turn reveals a
peculiar structure; the graph is made of tournaments~\cite{diestel_graph_2005}.
More formally, let $<r_{1}, \ldots , r_{n}>$ be the
topological order of $G_{O}$.
If $(r_{i}, r_{j})$ is an arc of $G_{O}$ then also
all $(r_{h}, r_{k})$ with $i<h<k<j$ are arcs of $G_{O}$.
Notice that in this case, all paths from $r_{i}$ to $r_{j}$ have the same assembly.

Less than ideal conditions might violate the previous property.
In fact insufficient coverage (where we do not have reads extracted from some
parts of the original genome) or sequencing errors (where the read is not a
substring of the genome) might result in a disconnected graph, while spurious
overlaps or long
repetitions might result in a graph that is not a DAG.
Nonetheless, the ideal case points out that we can have (and we actually have in
practice) multiple paths with the same assembly.

This suggests that it is possible (and auspicable) to remove some arcs
of the graph without modifying the set of distinct assemblies.
An arc $(r_i, r_j)$ of $G_{O}$  is called \emph{reducible}~\cite{Myers2005}  if
there exists another path from $r_i$ to $r_j$ with the same assembly (\ie the
string $lx_{i,j} r_{j}$).
After removing all reducible arcs we obtain the \emph{string
graph}~\cite{Myers2005}.

In this paper we are going to develop two external-memory algorithms, the first
to  compute the overlap graph associated to a set of reads, and the second to reduce an
overlap graph into a string graph.

For simplicity, and to emphasize that our algorithms are suited also for an
in-memory implementation, we use lists as main data structures.
An actual external-memory implementation will replace such lists with
files that can be accessed only sequentially.
We will use an
array-like notation to denote each element, but accessing those elements sequentially.
Moreover, we will assume that the set of reads $R$ has been processed with the
BCRext algorithm~\cite{bauer_lightweight_2013} to compute the BWT, GSA and LCP of $R$.

\section{Computing the overlap graph}

Our algorithm for computing the overlap graph is composed of two main parts: (i)
computing the unlabeled overlap graph, (ii) labeling the arcs.
Notice that, given a string $S$, the cartesian product $R^{s}(S) \times
R^{p}(S)$, where $R^{p}(S)$, $R^{s}(S)$ are respectively the set of reads in $R$
whose prefix and suffix (respectively) is $S$, consists of  the arcs whose overlap is $S$.
Observe that the pair $(q(S\$), q(\$S))$ of string-intervals represents the set
of arcs whose overlap is $S$, since $q(S\$)$ and $q(\$S))$ represent the sets $R^{s}(S)$ and $R^{p}(S)$, respectively.
Characterizing also the arc labels is more complicated, as pointed out by Definition~\ref{def:arc-interval}.


A consequence of Lemma~\ref{lemma:prefix-interval-path} is that we can label each arc with its left extension.
Indeed, given a read $r_i = PS$ we use the
$P$-interval to label  the arcs  $(r_{i}, r_{j})$ (outgoing from $r_i$) with overlap $ov_{i,j}=S$ and
left extension $lx_{i,j}=P$.
Anyway, to compute the $P$-interval we will need also the $PS\$$-interval.
Moreover, our procedure that reduces an overlap graph is
based on a property relating the reverse $P^{r}$ of the left extension $P$;
for this reason we need to encode $P^{r}$ as well as $P$.

\begin{definition}\label{def:arc-interval}
\label{def:arc-label}
Let $P$ and $S$ be two strings.
Then, the tuple $(q(PS\$), q(\$S), q(P),  q^{\rev}(P^{\rev}), |P|, |S|)$ is
the $(P,S)$-\emph{encoding} (or simply encoding) of all arcs with left extension $P$ and overlap $S$.
Moreover, the $(P,S)$-encoding is \emph{terminal} if the $PS\$$-interval has a nonempty
backward $\$$-extension, and it is \emph{basic} if $P$ is the empty string.
\end{definition}

Notice that a basic $(\epsilon, S)$-encoding is equal to $(q(S\$), q(\$S), q(\epsilon),
q(\epsilon), 0, |S|)$, where the interval $q(\epsilon)$ is $[1, n+m+1)$,
where $n + m$ is the overall number of characters in the input reads,
included the sentinels.
Moreover, the differences between basic and non-basic encodings consist of the
information on $P$, that is the arc label.
In other words, the basic encodings already represent the topology of
the overlap graph.
For this reason, the first part of our algorithm will be to compute all basic encodings.
Moreover, we want to read sequentially three lists---\FileBWT, \FileSA and
\FileLCP---that have been previously computed via
BCRext~\cite{bauer_lightweight_2013} containing the BWT $B$, the GSA $\SA$ and the LCP array $\LCP$, respectively.
Another goal of our approach is
to minimize the number of passes over those lists, as a
simpler adaptation of the algorithm of~\cite{Simpson2010}  would require a number
of passes equal to the sum of the lengths of the input reads in the worst case, which would
clearly be inefficient.




\begin{figure}[htb]
\centering
\begin{tikzpicture}[thick]
\node (r1) at (-1,2) {$r_{1}$};
\node (r2) at (-1,1) {$r_{2}$};
\node (r1p) at (0.9,2) {ATATCATC};
\node (r1s) at (3.8,2) {GATCTACTATTAC};
\node (r2s) at (3.8,1) {GATCTACTATTAC};
\node (r2x) at (6.7,1) {TTCATATC};
\node (s1pt) at (0.9,0.1) {$P=lx_{1,2}$};
\node (s2st) at (3.8,0.1) {$S=ov_{1,2}$};
\node (s2xt) at (6.7,0.1) {$X=rx_{1,2}$};
\draw (-0.3, 2.3) rectangle (5.6,1.7);
\draw (2, 1.3) rectangle (7.8,0.7);
\draw (2,2.2) edge [dashed] (2,-0.3);
\draw (5.6,2.2) edge [dashed] (5.6,-0.3);
\draw (-0.3,2.2) edge [dashed] (-0.3,-0.3);
\draw (7.8,1.2) edge [dashed] (7.8,-0.3);
\end{tikzpicture}
\caption{Example of arc of the overlap graph.
The read $r_{1}$ is equal to ATATCATCGATCTACTATTAC, while the read $r_{2}$ is
equal to GATCTACTATTACTTCATATC.
}
\label{fig:arc-overlap}
\end{figure}
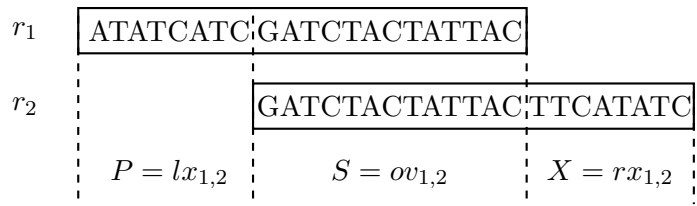

\begin{definition}
\label{definition:seed}
Let $S$ be a proper substring of some read of $R$.
If both the $S\$$-interval and the $\$S$-interval are nonempty, then
$S$ is called a \emph{seed}.
\end{definition}

Finding all basic encodings is mostly equivalent to finding all seeds of $R$.
Now we will prove that  a position $p$ can be the opening position of several
lcp-intervals, but it
can be the opening position of only one seed.

\begin{lemma}
\label{lemma:1-seed-opening-position}
Let $S$ be a $k$-long seed and let $[b,e)$ be the corresponding $k$-interval.
Then $k= \FileLCP[b+1]$ and $k > \FileLCP[b]$.
\end{lemma}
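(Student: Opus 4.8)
The plan is to read both equalities off Proposition~\ref{proposition:interval-LCP} together with the sentinel ordering, after first identifying $S$ with the representative of $[b,e)$ and checking that $[b,e)$ has width at least $2$. The key preliminary observation is that $S$ is the \emph{longest} string whose $S$-interval is $[b,e)$: since the $S\$$-interval is nonempty, some read ends with $S$, so among the suffixes in $[b,e)$ there is one that equals $S$ exactly (of length $k$), and, as every suffix in $[b,e)$ has $S$ as a prefix, the longest common prefix of all of them is precisely $S$. Moreover, because $S$ is a \emph{seed} — a proper substring of some read, with both the $S\$$-interval and the $\$S$-interval nonempty — the $S$-interval also contains a suffix that \emph{properly} extends $S$ (coming from a read that begins with $S$, or from a non-suffix occurrence of $S$ inside a read of which $S$ is a proper substring), and this suffix is distinct from the length-$k$ one above; hence $e-b\ge 2$. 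By Proposition~\ref{proposition:interval-LCP} applied to the representative $S$, the interval $[b:e-1]$ is then a genuine $k$-interval.

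Granting this, the bound $k>\FileLCP[b]$ is immediate from the definition of a $k$-interval, which requires $\FileLCP[b]<k$. For the other inequality I would first note that, since $e-b\ge 2$, the index $b+1$ satisfies $b<b+1\le e-1$, so Proposition~\ref{proposition:interval-LCP} gives $\FileLCP[b+1]\ge |S|=k$. It remains to show $\FileLCP[b+1]\le k$.

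For the upper bound the idea is to show $LS[b]=S$, so that $|LS[b]|=k$ caps any common prefix involving it and hence $\FileLCP[b+1]\le |LS[b]|=k$, giving $\FileLCP[b+1]=k$. To see $LS[b]=S$: every suffix in $[b,e)$ starts with $S$, hence is either equal to $S$ (and therefore, in the sorted order, followed by the sentinel $\$$) or equal to $S$ followed by a genuine symbol of $\Sigma$; since $\$$ precedes every symbol of $\Sigma$, the suffixes of the first kind are lexicographically smaller than those of the second kind, and the nonemptiness of the $S\$$-interval guarantees at least one suffix of the first kind lies in $[b,e)$. Therefore the lexicographically smallest suffix of the interval — the one at position $b$ — equals $S$.

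The only delicate points are the two uses of the sentinel conventions (that a suffix equal to $S$ genuinely occurs inside $[b,e)$, which is exactly what nonemptiness of the $S\$$-interval buys, and that such a suffix is the minimum of the interval), together with the short case check that $[b,e)$ has width at least $2$; all of these are routine once the definitions of ``seed'' and ``$k$-interval'' are unwound, so I do not expect a substantial obstacle — the proof is essentially careful bookkeeping.
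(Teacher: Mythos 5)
Your proof is correct and follows essentially the same route as the paper's: both arguments hinge on the sentinel sorting first (so the suffix equal to $S$ sits at position $b$, capping $\FileLCP[b+1]$ at $k$), on the interval having width at least $2$ so that $b+1$ lies inside it, and on maximality of the $S$-interval for $\FileLCP[b]<k$. Your detour through Proposition~\ref{proposition:interval-LCP} merely repackages what the paper computes directly from $LS[b]=S\$$ and $LS[b+1]=S\alpha$, so the difference is cosmetic.
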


\begin{proof}
Since $S$ is a seed, $S$ has a forward $\$$-extension.
Moreover $\$$ is the smallest character, hence $S\$$ is the first suffix of the
$S$-interval, that is the $LS[b]=S\$$.
Since $S$ is a seed,  $S$ is not an entire read.
Hence the $\$S$-interval and the $S\$$-interval are both nonempty and are disjoint.
Consequently, $S$ is a prefix of $LS[b+1]$, that is $LS[b+1]=S\alpha$
for some string $\alpha$ on the alphabet $\Sigma\cup\{\$\}$.
By definition, $\FileLCP[b+1]$ is the length of the longest common prefix of
$S\$$ and $S\alpha$, that is $\FileLCP[b+1]=|S|$.

If $b=1$, by definition $\FileLCP[b]=-1$, hence $\FileLCP[b] < |S|$.
If $b>1$, $S$ is not a prefix of $LS[b]$, hence $\FileLCP[b] < |S|$.
\end{proof}

\begin{corollary}
\label{corollary:seed-has-one-opening}
Let $[b,e)$ be an interval.
Then $b$ is the opening position of at most one seed.
\end{corollary}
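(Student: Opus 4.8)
The plan is to derive the corollary directly from Lemma~\ref{lemma:1-seed-opening-position}, which pins down, for any seed, the exact relationship between its length and the LCP values at the boundary of its interval. The key observation is that the lemma forces the length of a seed opening at position $b$ to be a \emph{function} of the arrays alone, namely $\FileLCP[b+1]$, so two distinct seeds cannot share the same opening position.

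First I would suppose, for contradiction, that $b$ is the opening position of two distinct seeds $S_{1}$ and $S_{2}$, with corresponding lcp-intervals $[b, e_{1})$ and $[b, e_{2})$. By Lemma~\ref{lemma:1-seed-opening-position} applied to each seed, $|S_{1}| = \FileLCP[b+1] = |S_{2}|$, so the two seeds have the same length $k$. Next I would argue that two strings of the same length both having $b$ as the opening position of their $k$-interval must in fact be equal: both $S_{1}$ and $S_{2}$ are prefixes of $LS[b]$ (since $b$ lies in both intervals), and a string has exactly one prefix of any given length, so $S_{1} = S_{2}$, contradicting the assumption that they are distinct. Hence $b$ is the opening position of at most one seed.

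A small point to check is the edge case where one of the candidate intervals is trivial or where $b$ might not literally be "in" the interval as a half-open interval $[b, e)$ — but since $b \le i < e$ includes $i = b$ whenever $e > b$, and a seed's interval is nonempty by Definition~\ref{definition:seed} (both the $S\$$-interval and $\$S$-interval are nonempty, so the $S$-interval has width at least two), the position $b$ genuinely indexes a suffix of which $S_{1}$ (resp.\ $S_{2}$) is a prefix. I do not anticipate a real obstacle here; the entire content has been front-loaded into Lemma~\ref{lemma:1-seed-opening-position}, and the corollary is essentially the remark that "$\FileLCP[b+1]$ determines the seed length, and the seed length together with $b$ determines the seed." The only thing to be careful about is phrasing the prefix-uniqueness step cleanly rather than appealing again to the interval machinery.
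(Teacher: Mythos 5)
Your proof is correct and follows exactly the route the paper intends: the corollary is stated without proof as an immediate consequence of Lemma~\ref{lemma:1-seed-opening-position}, and your argument (the lemma forces any seed opening at $b$ to have length $\FileLCP[b+1]$, and a string of fixed length that is a prefix of $LS[b]$ is unique) is precisely that derivation. The only nitpick is the parenthetical claim that nonemptiness of both the $S\$$- and $\$S$-intervals gives the $S$-interval width at least two (the $\$S$-interval is not contained in the $S$-interval), but all you need is that $b$ indexes a suffix with prefix $S$, which holds since $LS[b]=S\$$.
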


In the presentation of our algorithm,
we  need a simple procedure called \textbf{Merge}.
This procedure operates on
lists of encodings $([b,e), \cdot, \cdot, \cdot,  \cdot,  \cdot)$ that
are sorted by increasing $b$ (records with the same $b$ are sorted by
decreasing $e$).
We do not actually need to write a new list, as \textbf{Merge}
consists of choosing the list from which to read the next record.

To compute all basic encodings,
the procedure \textbf{BuildBasicArcIntervals} (Algorithm~\ref{alg:BAI}) reads
sequentially the lists \FileBWT,
\FileLCP and \FileSA while keeping in $\#_\$$ the number of sentinels in the portion of BWT before
the current position.
Moreover we maintain a stack $Z$ that is used to store the
relevant $k$-intervals whose opening position has been read, but whose
closing position has not.

\begin{algorithm}[htb!]
\SetKwInOut{PRE}{Input}
\SetKwInOut{POST}{Output}
\PRE{Three lists \FileBWT, \FileLCP, and \FileSA containing the BWT, the LCP array and the GSA of the set $R$, respectively.}
\POST{%
  A set of lists $\FileE(\sigma, \l_{S}, \l_{S})$ each containing the
  $(\epsilon, S)$-encoding for the seeds $S$ whose first character is $\sigma$
  and whose length is $\l_{S}$.
The encodings $([b, e), \cdot, \cdot, \cdot, \cdot, \cdot)$ in each list are
sorted by increasing values of $b$.
}

$\#_{\$} \gets 0$\;
\If{$\FileBWT[1] = \$$}{
    $\#_{\$} \gets 1$\label{alg:BAI:sharp-init}\;
  }
  $p\gets 2$\;
$Z \gets$ empty stack\;
\While{$p \le |\FileLCP|$}{
  \If{ $\FileLCP[p] < \FileLCP[p-1]$\label{alg:BAI:closing}}{\label{alg:BAI:begin-close}
    $([b, e_1), \l_S, b_{\$}) \gets top(Z)$\;
    \While{$Z$ is not empty and $\l_S > \FileLCP[p]$}{
      \If{$\#_{\$} > $ $b_{\$}$}{
        append $([b, e_1), [b_{\$}$, $\#_{\$})$, $[1, |\FileBWT| +1)$, $[1,
        |\FileBWT| +1)$, $0$, $\l_S)$ to the list $\FileE(C^{-1}(p), \l_S, \l_S)$\label{alg:BAI:output}\;
      }
      pop(Z)\;
      $([b, e_1), \l_S, b_{\$}) \gets top(Z)$\label{alg:BAI:end-close}\;
    }
  }

  \If{$\FileLCP[p] > \FileLCP[p-1]$\label{alg:BAI:test-opening}}{\label{alg:BAI:begin-open}
    $(k,j)\gets \FileSA[p-1]$\label{alg:BAI:fix-k}\;
    $(k^{*},j^{*})\gets \FileSA[p]$\;
    $q \gets p$\;
\If{$\FileLCP[p] = k$\label{alg:BAI:test-S-dollar}}{%
    \While{$\FileLCP[q + 1] = \FileLCP[p] = k^{*} = k$\label{alg:BAI:begin-Sdollar}}{
        $q \gets q+1$\;
        $(k^{*},j^{*})\gets \FileSA[q]$\;
      }
      push $( [p-1, q+1), \FileLCP[p], \#_{\$})$ to $Z$\label{alg:BAI-push}\;
      $p \gets q$\;
      \label{alg:BAI:end-open}
    }
}
  \If{$\FileBWT[p] = \$$} {
    $\#_{\$} \gets \#_{\$}+1$\label{alg:BAI:sharp-inc}\;
  }
  $p \gets p+1$\;
}
\While{$Z$ is not empty\label{alg:BAI:begin-empty-stack}}{
    $([b, e_1), \l_S, b_{\$}) \gets top(Z)$\;
      \If{$\#_{\$} > $ $b_{\$}$}{
        append $([b, e_1), [b_{\$}$, $\#_{\$})$, $[1, |\FileBWT| +1)$, $[1,
        |\FileBWT| +1)$, $0$, $\l_S)$ to the list $\FileE(C^{-1}(p), \l_S, \l_S)$\;
      }
      pop(Z)\;\label{alg:BAI:end-empty-stack}
}
\caption{BuildBasicArcIntervals}
\label{alg:BAI}
\end{algorithm}

When the current  position is $p$, the only interesting cases are if
$p-1$ is an opening position or $p$ is a closing position.
In the first case (see lines~\ref{alg:BAI:begin-open}--\ref{alg:BAI:end-open}),
Lemma~\ref{lemma:1-seed-opening-position} and
Corollary~\ref{corollary:seed-has-one-opening} show that only the
$\FileLCP[p]$-long interval whose opening position is $p-1$ might have a
seed as representative.
Let $S$ be the representative of such   $\FileLCP[p]$-long interval.
First, at lines~\ref{alg:BAI:begin-Sdollar}--\ref{alg:BAI:end-open},
we compute the $S\$$-interval $[p-1, e_1)$ (since such string interval is an
initial portion of the interval, we only need to read some records from the
input lists).
If the $S\$$-interval is nonempty, then it is pushed onto $Z$ together with the
current value of $\#_\$$ and the length of $S$.
Notice that the closing position of the
seed is currently unknown and will be determined only later, but the
information in $Z$ will suffice (together with some information
available only when closing the interval) to reconstruct the basic
encoding relative to the seed $S$, \ie the $(\epsilon, S)$-encoding.

In the second case, $p$ is a closing position
(lines~\ref{alg:BAI:begin-close}--\ref{alg:BAI:end-close}) and the procedure removes
from the stack $Z$ all the records $([b, e_1), \l_S, b_\$)$
corresponding to an $S$-interval $[b,p)$ whose opening position is $b$ and whose
forward $\$$-extension is $[b, e_1)$.
The backward $\$$-extension is $[b_{\$}, \#_\$)$, since $b_{\$}$ is the number
of sentinels in $B[:b-1]$, while $\#_\$$ is
the number of sentinels in $B[:p-1]$.
Clearly $[b_{\$}, \#_\$)$ is nonempty (and $S$ is  a seed) if and
only if $\#_\$ > b_\$$.

Notice that the stack $Z$ always contains a nested hierarchy of distinct
seeds (whose ending position might be currently unknown),
that all $k$-intervals whose closing position is $p$ are exactly the
intervals with $k>\FileLCP[p]$, and they are found at the top of $Z$.

After all iterations, the stack $Z$ contains the intervals
whose closing position is $p=|\FileLCP|$.
Those intervals are managed at
lines~\ref{alg:BAI:begin-empty-stack}--\ref{alg:BAI:end-empty-stack}.

There is a final technical detail:
each output basic encoding associated to the overlap $S$ is output to the list
$\FileE(S[1], |S|, |S|)$.
In fact we will use some different lists $\FileE(\sigma, \l_{S}, \l_{PS})$,
each containing the encodings corresponding to seed $S$ and
left extension $P$, where $\sigma$ is the first character of $PS$,
$\l_{S}=|S|$ and $\l_{PS}=|P|+|S|$.

Moreover, a list $\FileE(\sigma, l_{s}, l_{PS})$ is \emph{correct} if
it contains exactly the $(P,S)$-encodings such that $|S|=\l_{S}$ and
$|S|+|P|=\l_{PS}$ and the encodings $([b,e), \cdot, \cdot, \cdot,
\cdot, \cdot)$ are sorted by increasing values of $b$.

\begin{lemma}
\label{lemma:all-seeds-are-found}
Let $R$ be a set of reads, and let $S$ be a seed of $R$.
Then the $(\epsilon, S)$-encoding
$(q(S\$), q(\$S), [1, |\FileBWT| +1), [1,|\FileBWT| +1), 0, |S|)$ is output by Algorithm~\ref{alg:BAI}.
\end{lemma}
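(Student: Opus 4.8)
The plan is to show that every seed $S$ triggers exactly one push onto $Z$ during the execution of Algorithm~\ref{alg:BAI}, and that the matching pop produces precisely the claimed $(\epsilon,S)$-encoding. First I would invoke Lemma~\ref{lemma:1-seed-opening-position}: if $S$ is a $k$-long seed with $k$-interval $[b,e)$, then $k=\FileLCP[b+1]$ and $k>\FileLCP[b]$. Hence when the main loop reaches position $p=b+1$, the test at line~\ref{alg:BAI:test-opening} ($\FileLCP[p]>\FileLCP[p-1]$) succeeds, so $p-1=b$ is recognized as an opening position. I then need to check that the inner test at line~\ref{alg:BAI:test-S-dollar} ($\FileLCP[p]=k$, where $k$ comes from $\FileSA[p-1]=(k,j)$) also succeeds: since $S$ is a seed it has a nonempty forward $\$$-extension, so by the argument in the proof of Lemma~\ref{lemma:1-seed-opening-position} the suffix $LS[b]$ is exactly $S\$$, i.e. $\FileSA[b]=(|S|,j)$ for the read $r_j$ with suffix $S$; therefore the $k$ read at line~\ref{alg:BAI:fix-k} equals $|S|=\FileLCP[p]$. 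This is the step I expect to require the most care, because one must argue that the algorithm's notion of ``$k$'' (suffix length stored in the GSA) coincides with $|S|$ exactly when $S\$$ is the first suffix of the $S$-interval, and that this in turn is equivalent to $S$ having a forward $\$$-extension.

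Next I would analyze the while-loop at lines~\ref{alg:BAI:begin-Sdollar}--\ref{alg:BAI:end-open}: it advances $q$ as long as $\FileLCP[q+1]=\FileLCP[p]=k^{*}=k$, i.e. as long as the next suffix is still $S\$$ read from another read (the $k^{*}=k$ condition ensures it is a full-length suffix equal to $S$, the $\FileLCP$ condition ensures the common prefix with $S\$$ has length exactly $|S|$). Thus at exit $[p-1,q+1)$ is exactly the $S\$$-interval $q(S\$)$ — this is the interval containing one suffix per read in $R^{s}(S)$. At line~\ref{alg:BAI-push} the triple $([p-1,q+1),\FileLCP[p],\#_{\$})$ is pushed, recording the $S\$$-interval, the length $|S|$, and the count $b_{\$}$ of sentinels in $B[:b-1]$. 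Since by Corollary~\ref{corollary:seed-has-one-opening} position $b$ is the opening position of at most one seed, $S$ is pushed exactly once, and no spurious record for $S$ is ever created.

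Finally I would track the popping of this record. The record stays on $Z$ until the main loop reaches the closing position of $[b,e)$: by Proposition~\ref{proposition:interval-LCP} and the stack invariant stated after the algorithm (``all $k$-intervals whose closing position is $p$ are exactly the intervals with $k>\FileLCP[p]$, found at the top of $Z$''), when $p=e$ the condition $\l_S=|S|>\FileLCP[p]$ becomes true and the record is popped at lines~\ref{alg:BAI:begin-close}--\ref{alg:BAI:end-close} (or, if $e=|\FileLCP|+1$, at lines~\ref{alg:BAI:begin-empty-stack}--\ref{alg:BAI:end-empty-stack}). At that moment $\#_{\$}$ equals the number of sentinels in $B[:e-1]$, call it $e_{\$}$, so the backward $\$$-extension recorded is $[b_{\$},e_{\$})=q(\$S)$; since $S$ is a seed this interval is nonempty, so $\#_{\$}>b_{\$}$ holds and line~\ref{alg:BAI:output} fires, appending
\[
([b,e),\;[b_{\$},e_{\$}),\;[1,|\FileBWT|+1),\;[1,|\FileBWT|+1),\;0,\;|S|)
\]
to the list $\FileE(C^{-1}(\cdot),|S|,|S|)$. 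Since $[b,e)=q(S\$)$ and $[b_{\$},e_{\$})=q(\$S)$ and $[1,|\FileBWT|+1)=q(\epsilon)=q^{\rev}(\epsilon)$ by the remark following Definition~\ref{def:arc-interval}, this is exactly the claimed $(\epsilon,S)$-encoding; a brief check that $C^{-1}$ evaluated at the relevant index returns $S[1]$ confirms it lands in the list $\FileE(S[1],|S|,|S|)$. It remains only to note that the argument used the hypothesis that $S$ is a seed in exactly two places — to get the forward $\$$-extension (guaranteeing the push) and the backward $\$$-extension (guaranteeing the output) — so no seed is missed.
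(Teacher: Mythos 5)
Your proof is correct and follows essentially the same route as the paper's: detect the opening at $p=b+1$, show the while-loop captures $q(S\$)$ and pushes it onto $Z$ with the current sentinel count, then show the pop at the closing position (or at the final stack flush) emits the encoding because the $\$S$-interval is nonempty. The only cosmetic difference is that you invoke Lemma~\ref{lemma:1-seed-opening-position} for the LCP inequalities at the opening position, where the paper re-derives them inline.
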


\begin{proof}
Let $[b_{S}, e_{S})$ be the $S$-interval, let  $[b_{S\$}, e_{S\$})$ be the $S\$$-interval, and
let  $[b_{\$S}, e_{\$S})$ be the $\$S$-interval.
Since $S$ is a seed, all those intervals are nonemtpy.
Moreover, the sentinel is the smallest character, hence $b_{S\$}=b_{S}$.

When $p=b_{S}+1$, since $S$ is not an entire read (by definition of seed), $S$
is a prefix of both $LS[b_{S}+1]$ and  $LS[b_{S}]$, hence
$\FileLCP[p]\ge |S|$.
Moreover, since the $S\$$-interval is not empty,  $S\$$ is a prefix of
$LS[b_{S}]$ hence $\FileLCP[p]= |S|$, as the sentinel is not part of a common
prefix.

By definition of $S$-interval, $\FileLCP[p-1]< |S|$, hence the condition at
line~\ref{alg:BAI:test-opening} is satisfied and at line~\ref{alg:BAI:fix-k}
$k = \FileLCP[p] = |S|$.
When reaching line~\ref{alg:BAI-push}, $\FileLCP[x] = k =  LS[p-1]$ for all $x$ with $p\le x\le q$.
All those facts and the observation that $S\$$ is a prefix of
$LS[p-1]$, imply that  $S\$$ is a prefix of all suffixes $LS[i]$ with $p-1 \le
i\le q$.
Since the while condition does not currently hold (as we have exited from the
while loop),  $S\$$ is not a prefix of $LS[q+1]$, hence $[p-1, q+1)$ is
the $S\$$-interval.
Consequently at line~\ref{alg:BAI:test-opening} we push the triple $(q(S\$),
|S|, \#_{\$})$ on $Z$, where $\#_{\$}$ is the number of sentinels in
$\FileBWT[:p]$.

We distinguish two cases: either $e_{S}\le n$ or  $e_{S}> n$.
If $e_{S}\le n$ then there is an iteration where $p=e_{S}$.
During such iteration the condition at line~\ref{alg:BAI:closing} holds, hence all
entries $(q(T\$), |T|, T_{\$})$ at the top of $Z$ such that $|T| > \FileLCP[p]$
are popped and the corresponding encoding is output at line~\ref{alg:BAI:output}.
Since $[b_{S}, e_{S})$ is an $S$-interval,  $|S| > \FileLCP[p]$ hence the
interval $(q(S\$), |S|, \#_{\$})$ is popped.
Since $\$$ is the first symbol of the alphabet,
$q(\$S)=[b_{S\$}, e_{S\$})=[b_{\$}, \#_{\$})$.

If $e_{S\$} > |\FileLCP|$, then the condition of line~\ref{alg:BAI:closing} is
never satisfied.
Anyway, the stack $Z$ is completely emptied at
lines~\ref{alg:BAI:begin-empty-stack}--\ref{alg:BAI:end-empty-stack} and the
same reasoning applies to show that the $(\epsilon, S)$-encoding
is output.
\end{proof}

\begin{lemma}
\label{lemma:only-seeds-are-found}
Let $R$ be a set of reads, and let
$([b, e_1), [b_{\$}$, $\#_{\$}), [1, |\FileBWT| +1), [1,|\FileBWT| +1), 0, \l_{S})$ be an encoding
output by Algorithm~\ref{alg:BAI}.
Then $q(S\$) = [b, e_1)$, $q(\$S) = [b_{\$}$, $\#_{\$})$, $\l_{S}=|S|$ for some
seed $S$ of $R$.
\end{lemma}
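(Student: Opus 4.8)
This is the converse of Lemma~\ref{lemma:all-seeds-are-found}: rather than showing that every seed is output, we must show that every output record comes from a seed. The plan is to run the argument of that lemma backwards --- take an arbitrary output record, trace it through the stack back to the push that produced it, and read a seed $S$ off the guards that were in force at the push. Concretely, every output record is emitted while popping a triple $([b,e_1),\l_S,b_{\$})$ off $Z$ (either at line~\ref{alg:BAI:output} or in the final loop at lines~\ref{alg:BAI:begin-empty-stack}--\ref{alg:BAI:end-empty-stack}), and only when the test $\#_{\$}>b_{\$}$ holds; and every triple on $Z$ was placed there by the single push at line~\ref{alg:BAI-push}. So I would fix the iteration, with current position $p=p_0$, at which this triple was pushed: then $b=p_0-1$, $e_1=q_0+1$ where $q_0$ is the value of $q$ when the inner \textbf{while} loop terminated, $\l_S=\FileLCP[p_0]$, and $b_{\$}$ is the number of sentinels among the first $p_0-1$ symbols of the BWT. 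Since the push lies inside both guarded blocks, $\FileLCP[p_0]>\FileLCP[p_0-1]$ and $\FileLCP[p_0]=k$ with $(k,j)=\FileSA[p_0-1]$; unwinding the definitions of the GSA and LCP arrays as in the proof of Lemma~\ref{lemma:all-seeds-are-found} (see also Lemma~\ref{lemma:1-seed-opening-position}) forces $LS[p_0-1]=S\$$, where $S$ is the $k$-suffix of $r_j$. Hence $|S|=k=\FileLCP[p_0]=\l_S$ --- which already pins the last component of the output --- and $S$ is a suffix of a read of $R$.

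Next I would establish $[p_0-1,q_0+1)=q(S\$)$ together with the fact that $S$ is a \emph{proper} substring of a read. The identity is exactly the computation carried out in the proof of Lemma~\ref{lemma:all-seeds-are-found}: the inner loop scans precisely the positions $i$ whose suffix is $S\$$ --- each satisfies $\FileSA[i]=(|S|,\cdot)$ and $\FileLCP[i]=|S|$, since two occurrences of $\$$ do not match --- and it stops at the first position whose suffix is no longer $S\$$; Proposition~\ref{proposition:interval-LCP} and the maximality clause in the definition of a string-interval then identify $[p_0-1,q_0+1)$ with $q(S\$)$, which is nonempty because it contains $p_0-1$. For properness, $\FileLCP[p_0]=|S|$ makes $S$ a proper prefix of the suffix $LS[p_0]$, so in some read $S$ is followed by a further symbol and is therefore a proper substring (the only degenerate exception, where every read containing $S$ equals $S$ itself, does not occur once reads that are substrings of other reads are excluded).

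For the remaining components, at the pop $\#_{\$}$ counts the sentinels among all BWT symbols scanned so far, so $[b_{\$},\#_{\$})$ is exactly the interval obtained from $q(S\$)=[p_0-1,e_1)$ by the backward $\$$-extension formula of Proposition~\ref{prop:extension-formulae} (recall $C(\$)=0$); because the suffixes equal to $\$$ are ordered by their reads, this interval equals $q(\$S)$, and the guard $\#_{\$}>b_{\$}$ tested before the output is precisely the requirement that it be nonempty. Thus $q(\$S)=[b_{\$},\#_{\$})\ne\emptyset$, so $S$ is a proper substring of a read whose $S\$$-interval and $\$S$-interval are both nonempty, \ie a seed (Definition~\ref{definition:seed}); and the output record equals $(q(S\$),q(\$S),[1,|\FileBWT|+1),[1,|\FileBWT|+1),0,|S|)$, the basic $(\epsilon,S)$-encoding, as claimed. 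The main obstacle is this interval bookkeeping: one has to verify that the inner loop halts at exactly the right place --- the boundary between the $S\$$-block and the remainder of the $S$-interval, where the LCP value is still $|S|$ --- and that the counter $\#_{\$}$, together with the special ordering of the suffixes equal to $\$$, reconstructs $q(\$S)$ rather than a shifted interval; everything else is routine and reuses Lemmas~\ref{lemma:all-seeds-are-found} and~\ref{lemma:1-seed-opening-position}.
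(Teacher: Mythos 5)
Your proposal is correct and follows essentially the same route as the paper's own proof: trace each output record back to the unique push at line~\ref{alg:BAI-push}, read the candidate string $S$ off the guards $\FileLCP[p]>\FileLCP[p-1]$ and $\FileLCP[p]=k$ (via Lemma~\ref{lemma:1-seed-opening-position}), and use the nonemptiness of the computed $S\$$-interval together with the test $\#_{\$}>b_{\$}$ to conclude that $S$ is a seed. You merely spell out in more detail the steps the paper leaves implicit (the identification of $[b_{\$},\#_{\$})$ with $q(\$S)$ through the sentinel ordering, and the properness of $S$), which is fine.
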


\begin{proof}
Notice that encodings are output only if a triple  $( [p-1, q+1), \FileLCP[p],
\#_{\$})$ is pushed on $Z$, which can happen only if $\FileLCP[p] >
\FileLCP[p-1]$.
By Lemma~\ref{lemma:1-seed-opening-position}, $p-1$ can be the opening
position only of the seed $S$ obtained by taking the $\FileLCP[p]$-long prefix
of  $LS[p]$.
By the condition at line~\ref{alg:BAI:test-S-dollar}, the triple
$( [p-1, q+1), \FileLCP[p], \#_{\$})$ is pushed on $Z$ only if the
$S\$$-interval is not empty.

Since $\#_{\$} > b_{\$}$ and  by the value of $\#_{\$}$, also the $\$S$-interval
is nonempty, hence $S$ is a seed.
\end{proof}

\begin{lemma}
\label{lemma:1-list-disjoint-intervals}
Let $\FileE(\sigma, \l_{PS}, \l_{PS})$ be a list output by Algorithm~\ref{alg:BAI}.
Let $f_{1}=(q(S_{1}\$), q(\$S_{1}), \cdot,  \cdot, \cdot, |S_{1}|)$ and
$f_{2}=(q(S_{2}\$), q(\$S_{2}),  \cdot,  \cdot, \cdot, |S_{2}|)$ be two
encodings in $\FileE(\sigma, 0, \l_{PS})$,
$q(S_{1}\$) = [b_{1}, e_{1})$ and $q(S_{2}\$) = [b_{2}, e_{2})$.
Then the intervals $q(S_{1}\$)$ and $q(S_{2}\$)$ are disjoint.
Moreover, $f_{1}$ precedes $f_{2}$ in $\FileE(\sigma, \l_{S}, \l_{PS})$ iff
$b_{1}< b_{2}$.
\end{lemma}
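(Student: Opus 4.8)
The plan is to combine two facts that were already established. First, by Corollary~\ref{corollary:seed-has-one-opening}, every position $p$ is the opening position of \emph{at most one} seed, so the map sending an encoding $f=(q(S\$),q(\$S),\cdot,\cdot,\cdot,|S|)$ in the list to the opening position $b$ of its $S\$$-interval is injective on the list. Second, by Lemma~\ref{lemma:1-seed-opening-position}, if $f$ corresponds to the seed $S$ with $k$-interval $[b_S,e_S)$, then $b=b_{S\$}=b_S$ (since $\$$ is the smallest symbol, the $S\$$-interval is an initial segment of the $S$-interval). So distinct encodings in the list come from distinct seeds, and hence from distinct $S$-intervals, and therefore carry distinct opening positions $b$. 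I would first record these observations and extract the consequence that $b_1\neq b_2$ whenever $f_1\neq f_2$.

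Next I would prove disjointness of $q(S_1\$)$ and $q(S_2\$)$. Assume $b_1<b_2$ (the case $b_2<b_1$ is symmetric). Recall $q(S_1\$)=[b_1,e_1)$ is the $S_1\$$-interval and, since $\$$ is the smallest symbol, $[b_1,e_1)$ is an initial segment of the $S_1$-interval $[b_{S_1},e_{S_1})$ with $b_{S_1}=b_1$; likewise $b_{S_2}=b_2$. If the intervals $q(S_1\$)$ and $q(S_2\$)$ were not disjoint, then (being string-intervals, which by the nesting property stated in the Definitions section are either nested or disjoint) one would contain the other; since $b_1<b_2$, we would need $q(S_2\$)\subseteq q(S_1\$)$, i.e. $b_1<b_2<e_1$. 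But then $LS[b_2]=S_1\$\,\text{-continued}$, so $S_1\$$ is a prefix of $LS[b_2]$; on the other hand $LS[b_2]=S_2\$$, and $S_2\$$ is a string over $\Sigma$ followed by a single sentinel. The only way $S_1\$$ is a prefix of $S_2\$$ is $S_1=S_2$ (a sentinel inside $S_1\$$ would have to match a non-sentinel symbol of $S_2$), contradicting $b_1\neq b_2$. Hence $q(S_1\$)$ and $q(S_2\$)$ are disjoint. In particular, for disjoint intervals with $b_1\neq b_2$ we have $b_1<b_2$ iff $q(S_1\$)$ lies entirely to the left of $q(S_2\$)$, so the ordering by $b$ is a total order on the encodings.

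Finally, for the ``precedes'' claim, I would simply invoke the postcondition of Algorithm~\ref{alg:BAI}: the encodings $([b,e),\cdot,\cdot,\cdot,\cdot,\cdot)$ appended to each list $\FileE(\sigma,\l_S,\l_S)$ are appended in nondecreasing order of $b$ — this follows because the main loop processes positions $p$ in increasing order and an encoding for a seed with opening position $b=p-1$ is output only while closing that interval, at some position $p'>p$, so the output order of encodings respects the order of their opening positions (and by the disjointness just shown these opening positions are in fact strictly increasing within a fixed list). Combining strict monotonicity with the injectivity of $b$ on the list gives: $f_1$ precedes $f_2$ iff $b_1<b_2$.

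The main obstacle is the disjointness step: one must argue carefully, using the convention that a sentinel only appears as the last character of a suffix and that two sentinels do not match, to rule out the nested case — i.e. to show that $S_1\$$ being a prefix of $LS[b_2]=S_2\$$ forces $S_1=S_2$. Everything else is bookkeeping about the execution order of Algorithm~\ref{alg:BAI} and a direct appeal to Corollary~\ref{corollary:seed-has-one-opening}.
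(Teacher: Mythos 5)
Your proof is correct, and the ordering part is essentially the paper's argument: an encoding is emitted only when its interval is closed, and since the intervals are disjoint, $b_1<b_2$ forces $e_1\le b_2$, so $f_1$ is popped and appended at iteration $p=e_1$ before the entry for $[b_2,e_2)$ is even pushed at iteration $p=b_2+1$ (state this implication explicitly rather than via ``output order respects opening positions,'' which taken alone is false for nested intervals and only becomes true after disjointness is in hand). Where you genuinely diverge is the disjointness step. The paper's argument is a one-liner: both encodings live in the same list $\FileE(\sigma,\l_{PS},\l_{PS})$, hence $|S_1|=|S_2|$, and two string-intervals of equal-length (distinct) strings cannot be nested, so by the nested-or-disjoint dichotomy they are disjoint. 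You instead ignore the equal-length information and rule out nesting by the sentinel conventions: if $q(S_2\$)\subseteq q(S_1\$)$ then $S_1\$$ is a prefix of the rotation $S_2\$\cdots$, which forces $S_1=S_2$ because a sentinel cannot match a symbol of $\Sigma$, contradicting $b_1\ne b_2$ (which you correctly extract from Corollary~\ref{corollary:seed-has-one-opening}). Your route is sound and slightly more general---it would prove disjointness even for encodings of seeds of different lengths---but it is longer and leans on the delicate conventions about sentinels, whereas the equal-length observation that the list structure hands you for free makes the whole step immediate; it is worth noticing that shortcut.
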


\begin{proof}
By construction, $\l_{S}=|S_{1}|=|S_{2}| = \l_{PS}$
and $\sigma = S_{1}[1]=S_{2}[1]$.
Since $|S_{1}|=|S_{2}|$, the two string-intervals $q(S_{1}\$)$
and $q(S_{2}\$)$ cannot be nested, hence they are disjoint.

Notice that, since  $[b_{1}, e_{1})$
and $[b_{2}, e_{2})$ are disjoint, then $b_{1}< e_{1} \le b_{2}$ or
$b_{1}\ge e_{2}> b_{2}$.
Assume that $b_{1}< e_{1} \le b_{2}$ and let us
consider the iteration when $p=e_{1}$, \ie when $f_{1}$ is output.
Since $e_{1} \ge b_{2}$, the entry $([b_{2}, e_{2} - 1], \cdot, \cdot)$ has not
been pushed to $Z$ yet, hence
$f_{1}$ precedes $f_{2}$ in $\FileE(\sigma, 0, \l_{PS})$.

If $b_{1}\ge e_{2}> b_{2}$ the same argument shows that $f_{2}$ precedes $f_{1}$
in $\FileE(\sigma, 0, \l_{PS})$, completing the proof.
\end{proof}

\begin{corollary}
\label{corollary:basic-encodings-correct}
Let $\FileE(\sigma, 0, \l_{PS})$ be a list computed by Algorithm~\ref{alg:BAI}.
Then $\FileE(\sigma, 0, \l_{PS})$ contains exactly the $(\epsilon,
S)$-encodings of all seed $S$ such that $\sigma=S[1]$.
\end{corollary}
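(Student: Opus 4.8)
The plan is to combine the three preceding lemmas about Algorithm~\ref{alg:BAI}: Lemma~\ref{lemma:all-seeds-are-found} (every seed produces its $(\epsilon,S)$-encoding), Lemma~\ref{lemma:only-seeds-are-found} (every output encoding corresponds to a genuine seed), and Lemma~\ref{lemma:1-list-disjoint-intervals} (within a single list the encodings are pairwise disjoint and sorted by opening position). First I would fix $\sigma$ and $\l_{PS}$ and argue set equality between the multiset of encodings in $\FileE(\sigma, 0, \l_{PS})$ and the set of $(\epsilon, S)$-encodings for seeds $S$ with $S[1]=\sigma$ and $|S|=\l_{PS}$ (recall that for a basic encoding $\l_S=\l_{PS}$ since $P=\epsilon$).

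For the containment ``$\subseteq$'': take an arbitrary encoding in $\FileE(\sigma, 0, \l_{PS})$. By inspection of Algorithm~\ref{alg:BAI}, every encoding appended to a list has the shape $([b,e_1),[b_\$,\#_\$),[1,|\FileBWT|+1),[1,|\FileBWT|+1),0,\l_S)$, so Lemma~\ref{lemma:only-seeds-are-found} applies and yields a seed $S$ with $q(S\$)=[b,e_1)$, $q(\$S)=[b_\$,\#_\$)$, $|S|=\l_S$; the routing rule for the output (each basic encoding of overlap $S$ is sent to $\FileE(S[1],|S|,|S|)$) forces $S[1]=\sigma$ and $|S|=\l_{PS}$, and the tuple is exactly the $(\epsilon,S)$-encoding as described after Definition~\ref{def:arc-interval}. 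For the reverse containment ``$\supseteq$'': let $S$ be any seed with $S[1]=\sigma$ and $|S|=\l_{PS}$. By Lemma~\ref{lemma:all-seeds-are-found}, Algorithm~\ref{alg:BAI} outputs the $(\epsilon,S)$-encoding, and again by the routing rule it is appended precisely to $\FileE(\sigma,|S|,|S|)=\FileE(\sigma,0,\l_{PS})$.

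To get ``exactly'' rather than merely ``at least,'' I would rule out duplicates: two distinct seeds $S_1\neq S_2$ of the same length have distinct $S_i\$$-intervals (they have the same length, so by Proposition~\ref{proposition:test-prefix} neither is a prefix of the other, hence their string-intervals are disjoint and in particular different), so distinct seeds give distinct encodings; and within one execution each seed is pushed onto $Z$ at most once (its unique opening position, by Corollary~\ref{corollary:seed-has-one-opening}) and hence popped and output at most once. Finally, the sortedness clause of the corollary's claim (inherited from the definition of a \emph{correct} list) is exactly Lemma~\ref{lemma:1-list-disjoint-intervals}: the encodings are disjoint and appear in increasing order of opening position $b$. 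Assembling these observations completes the proof.

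The only mildly delicate point is bookkeeping rather than mathematics: making sure that the two output sites in the algorithm (line~\ref{alg:BAI:output} inside the main loop, and the stack-draining loop at lines~\ref{alg:BAI:begin-empty-stack}--\ref{alg:BAI:end-empty-stack}) are both covered by Lemma~\ref{lemma:only-seeds-are-found} and Lemma~\ref{lemma:all-seeds-are-found} — but Lemma~\ref{lemma:all-seeds-are-found} already splits its case analysis exactly along this $e_S\le n$ versus $e_S>n$ boundary, so there is nothing new to do. I expect no real obstacle; the corollary is essentially a packaging of the three lemmas.
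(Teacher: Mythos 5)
Your proof is correct and follows exactly the route the paper intends: the corollary is stated there without proof as an immediate packaging of Lemma~\ref{lemma:all-seeds-are-found} (completeness), Lemma~\ref{lemma:only-seeds-are-found} (soundness), and Lemma~\ref{lemma:1-list-disjoint-intervals} (disjointness and ordering), which is precisely what you assemble. Your extra care about duplicates and the two output sites is sound bookkeeping that the paper leaves implicit.
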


There is an important observation on the sorted lists of encoding that we will manage.
Let $f_{1}=(q(P_{1}S_{1}\$), q(\$S_{1}), q(P_{1}),  q^{\rev}(P_{1}^{\rev}),
|P_{1}|, |S_{1}|)$ and $f_{2}=(q(P_{2}S_{2}\$), q(\$S_{2}), q(P_{2}),  q^{\rev}(P_{2}^{\rev}),
|P_{2}|, |S_{2}|)$ be two encodings that are stored in the same list
$\FileE(\sigma, \l_{S}, \l_{PS})$, hence $|S_{1}|=|S_{2}|$ and  $|P_{1}S_{1}|=|P_{2}S_{2}|$.
Since $|P_{1}S_{1}|=|P_{2}S_{2}|$, the two string-intervals $q(P_{1}S_{1}\$)$
and $q(P_{2}S_{2}\$)$ are disjoint (as long as we can guarantee that
$P_{1}S_{1}\neq P_{2}S_{2}$), hence sorting them by their opening position
of the interval implies sorting  also by closing position.

\subsection{Labeling the overlap graph.}

To complete the encoding of each arc, we need to compute the left extension.
Such step will be achieved with the  \textbf{ExtendEncodings} procedure
(Algorithm~\ref{alg:EAI}), where the $(P,S)$-encodings are elaborated,
mainly via backward $\sigma$-extensions, to
obtain the $(\sigma P,S)$-encodings.
Moreover, when $PS$ has a nonempty backward $\$$-extension, we have determined
that the encoding is terminal, hence we output the arc encoding to the lists
$\FileArc(|P|, z)$ which will contain the arc encodings of the arcs
incoming in the read $r_{z}$ and whose
left extension has length $|P|$.

The first fundamental observation is that a  $(P, S)$-encoding
can be obtained by extending a $(\epsilon, S)$-encoding with (if $|P| = 1$), or by extending
a $(P_{1}, S)$-encoding (if $|P| > 1$ and $P_{1} = P[2:]$).
Those extensions are computed in two phases: the first phase computes
all partially extended encodings $(q(\sigma PS\$), q(\$S), q(P),  q^{\rev}(P^{\rev}),
|P|, |S|)$ of the $(P,S)$-encodings.
The second phase starts from the partially extended encodings and
completes the extensions obtaining all $(\sigma P, S)$-encodings.

We iterate the
procedure \textbf{ExtendEncodings} for increasing
values of $\l_{PS}$, where each step scans all lists
$\FileE(\cdot, \cdot, \l_{PS})$, and writes
the lists $\FileP(\cdot, \cdot, \l_{PS} + 1)$, by
computing all backward  $\sigma$-extensions.
The lists are called \FileP{} as a
mnemonic for the fact that those encodings have been extended only partially.
Those lists will be then fed to the \textbf{CompleteExtensions} procedure to
complete the extensions, storing the result in the lists $\FileE(\cdot, \cdot, \l_{PS} + 1)$.

\begin{algorithm}[tb!]
\SetKwInOut{PRE}{Input}
\SetKwInOut{POST}{Output}
\PRE{
  Two lists \FileBWT and \FileSA containing the BWT and the GSA of the set $R$,   respectively.
  The correct lists $\FileE(\cdot, \cdot, \cdot)$ containing the $(P,S)$-encodings
  such that $|P| = \l_{P}$.
}
\POST{
  The correct lists $\FileP(\cdot, \cdot, \cdot)$ containing the
  partially extended $(\sigma P, S)$-encoding.
  The arcs of the overlap graph outgoing from reads of length
  $\l_{PS}-1$, incoming in a read $r_{z}$, and
  with
  left extension long $\l_{P}$ are   appended to the file
  $\FileArc(\l_{P}, z)$.
}
$\Pi, \pi \gets$ $|\Sigma|$-long vectors $\bar{0}$\;
$p\gets 0$\;
\ForEach{%
  $([b,e), q(\$ S), q(P), q^{r}(P^{r}), l_P, \l_{PS}) \in$
Merge$(\{ \FileE(\sigma, \l_{PS} - \l_{P}, \l_{PS}) :
\sigma\in\Sigma, \l_{PS} \ge \l_{P}\})$}{%
  $\Pi(\sigma) \gets \Pi[\sigma]+\pi[\sigma]$, for each $\sigma \in \Sigma$\label{alg:EAI-Pi-value}\;
  \While{$p<b$\label{alg:EAI-update-Pi-begin}}{%
    $\Pi[B[p]] \gets \Pi[B[p]] +1$\;
    $p\gets p+1$\label{alg:EAI-update-Pi-end}
  }
  $\pi \gets\bar{0}$\label{alg:EAI-reset-pi}\label{alg:EAI-Pi-value-fixed}\;
  \For{$p\gets b$ \KwTo $e-1$\label{alg:EAI-for-pi}}{%
    \If{$\FileBWT[p] \ne \$$}{%
      Increment $\pi[\FileBWT[p]]$ by $1$\label{alg:EAI-increment-pi}\;
    }
    \If{$\FileBWT[p] = \$$ and $\l_{P}>0$\label{alg:EAI-nonemtpy-P}}{%
      $(k, j) \gets \SA[p]$\;
      \ForEach{read $r_{z}\in q(\$ S)$}{%
        Append the arc $\langle j, i, q^{r}(P^{r})\rangle$ to
        $\FileArc(\l_{P}, z)$\label{alg:EAI-nonemtpy-P-end}\;
      }
    }
  }
  \ForEach{$\sigma \in \Sigma$ such that $\pi[\sigma] > 0$\label{alg:EAI-extension}}{%
    \If{$\l_{P} = \l_{PS}$}{%
      $q(P)\gets q(\sigma)$; $q^{r}(P^{r}) \gets q(\sigma)$
    }
    $b' \gets C[\sigma] + \Pi[\sigma] + 1$\label{alg:EAI-b}\;
    $e' \gets b' + \pi[\sigma]$\label{alg:EAI-e}\;
    Append $([b',e'), q(S\$), q(P), q^{r}(P^{r}), l_P+1, \l_{PS}+1)$ to
    $\FileP(C^{-1}(b'), \l_{PS}-\l_{P}, \l_{PS}+1)$\label{alg:EAI-extension-end}\;
  }
}
\label{alg:EAI-extend-P}
\caption{ExtendEncodings$(\l_{P})$}
\label{alg:EAI}
\end{algorithm}

The procedure \textbf{ExtendEncodings} basically extends a sequence
of $PS\$$-intervals $[b,e)$ that are sorted by increasing values of $b$.
The procedure consists of a few parts; up to
line~\ref{alg:EAI-increment-pi} the procedure maintains the arrays $\Pi$ and
$\pi$ that are respectively equal to the number of occurrences of each symbol
$\sigma$ in $\FileBWT[:b-1]$ (resp. in $\FileBWT[b:p-1]$).
The correctness of this part is established by Lemmas~\ref{lemma:Pi-meaning},~\ref{lemma:pi-meaning}.

Lines~\ref{alg:EAI-nonemtpy-P}--\ref{alg:EAI-nonemtpy-P-end} determine if the
representative of $[b,e)$ corresponds to an entire read, \ie if the current
encoding is terminal; in that case some arcs of
the overlap graph have been found and are output to the appropriate list.

The third part (lines~\ref{alg:EAI-extension}--\ref{alg:EAI-extension-end})
computes all backward $\sigma$-extensions of the current
$PS\$$-interval $[b,e)$, obtaining the partially extended encodings.
At line~\ref{alg:EAI-extend-P} we call the procedure that completes the
extensions of the encodings.

In the following we will say that a list $\FileE(\sigma, \l_{S},
\l_{PS})$ of encodings is correct if it contains exactly all $(P,S)$-encodings such
that $\sigma$ is the first character of $PS$, $\l_{P} = |P|$, and $\l_{PS} = |PS|$.
Moreover the encodings  $([b,e), \cdot, \cdot, \cdot,  \cdot,  \cdot)$
are sorted by increasing values of $b$.

A list  $\FileP(\sigma, \l_{S},
\l_{PS})$ of partially extended encodings is correct if it contains
exactly all partially extended $(\sigma P,S)$-encodings such
that $\l_{P} = |P|$, and $\l_{PS} = |PS|$.
Moreover the partially extended encodings  $([b,e), \cdot, \cdot,
\cdot,  \cdot,  \cdot)$
are sorted by increasing values of $b$.

Finally, we would like to point out that each list $\FileE(\cdot, \cdot,
\cdot)$ and $\FileP(\cdot, \cdot, \cdot)$ contains disjoint intervals.
If we can guarantee that the intervals in each list are sorted in
non-decreasing order of the end boundary (we will prove this property of
\textbf{CompleteExtensions}), then those intervals are also  sorted in
non-decreasing order of the start boundary (as required for the correctness of
successive iterations of \textbf{ExtendEncodings}).

\begin{lemma}
\label{lemma:pi-meaning}
Let the lists $\FileE(\cdot, \cdot, \l_{PS})$ be the input of
Algorithm~\ref{alg:EAI} and assume that all those lists are correct.
Let $([b,e) = q(PS\$), q(\$ S), q(P), q^{r}(P^{r}), l_P, \l_{PS})$
be the current encoding.
Then
at line~\ref{alg:EAI-extension} of Algorithm~\ref{alg:EAI}, $\pi[\sigma]$ is
equal to the number of occurrences of $\sigma$ in $\FileBWT[b:e-1]$.
\end{lemma}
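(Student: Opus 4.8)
The plan is to trace the value of the vector $\pi$ through one full iteration of the main \textbf{foreach} loop of Algorithm~\ref{alg:EAI}, from the moment it is reset to $\bar 0$ at line~\ref{alg:EAI-reset-pi} until control reaches line~\ref{alg:EAI-extension}. First I would observe that, just after line~\ref{alg:EAI-reset-pi}, $\pi = \bar 0$, and that the only statement between line~\ref{alg:EAI-reset-pi} and line~\ref{alg:EAI-extension} that modifies $\pi$ is the increment at line~\ref{alg:EAI-increment-pi}, which is executed exactly once for each index $p$ with $b \le p \le e-1$ such that $\FileBWT[p] \neq \$$, adding $1$ to the entry $\pi[\FileBWT[p]]$. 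Hence, at line~\ref{alg:EAI-extension}, for every $\sigma \in \Sigma$ (so $\sigma \neq \$$), the value $\pi[\sigma]$ equals the number of positions $p \in [b, e-1]$ with $\FileBWT[p] = \sigma$, which is by definition the number of occurrences of $\sigma$ in $\FileBWT[b:e-1]$. This is essentially a loop-invariant argument and the arithmetic is routine.

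The one point that needs care is that the \textbf{for} loop of line~\ref{alg:EAI-for-pi} actually ranges over exactly the positions $b, b+1, \dots, e-1$ and that the loop counter $p$ is the same variable used to maintain $\Pi$ earlier in the iteration; I would note that the \textbf{while} loop at lines~\ref{alg:EAI-update-Pi-begin}--\ref{alg:EAI-update-Pi-end} leaves $p = b$ (using that the interval boundaries fed to the procedure are non-decreasing in $b$, as guaranteed by correctness of the input lists $\FileE(\cdot,\cdot,\l_{PS})$ and the sortedness clause in their definition), so that when the \textbf{for} loop starts it indeed initializes $p \gets b$ and runs through $e-1$. Since $[b,e) = q(PS\$)$ is nonempty by hypothesis, this range is well-defined.

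I expect the only mild obstacle to be bookkeeping about the shared loop variable $p$: one must check that entering the \textbf{for} loop at line~\ref{alg:EAI-for-pi} does not conflict with the earlier use of $p$ for advancing $\Pi$, and that no path reaches line~\ref{alg:EAI-extension} without having executed line~\ref{alg:EAI-reset-pi} and then the full \textbf{for} loop. Both follow by straight inspection of the control flow, so the proof is short. Note that the companion claim about $\Pi$ — namely that $\Pi[\sigma]$ counts occurrences of $\sigma$ in $\FileBWT[:b-1]$ — is the content of the separate Lemma~\ref{lemma:Pi-meaning} and may be invoked, but it is not needed here: the statement about $\pi$ alone depends only on the reset at line~\ref{alg:EAI-reset-pi} and the increments at line~\ref{alg:EAI-increment-pi}.
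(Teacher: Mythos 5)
Your proposal is correct and follows essentially the same argument as the paper's proof: $\pi$ is reset to $\bar 0$ at line~\ref{alg:EAI-reset-pi}, is incremented only at line~\ref{alg:EAI-increment-pi}, and the range of the for loop at line~\ref{alg:EAI-for-pi} gives exactly the positions $b,\dots,e-1$. The extra bookkeeping you flag about the shared variable $p$ and the restriction to $\sigma\in\Sigma$ is sound but not something the paper dwells on.
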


\begin{proof}
Notice that $\pi$ is reset to zero at line~\ref{alg:EAI-reset-pi} and is
incremented only at line~\ref{alg:EAI-increment-pi}.
The condition of the for loop (line~\ref{alg:EAI-for-pi}) implies the lemma.
\end{proof}

\begin{lemma}
\label{lemma:Pi-meaning}
Let the lists $\FileE(\cdot, \cdot, \l_{PS})$ be the input of
Algorithm~\ref{alg:EAI} and assume that all those lists are correct.
Let $([b,e) = q(PS\$), q(\$ S), q(P), q^{r}(P^{r}), l_P, \l_{PS})$
be the current encoding.
Then
at line~\ref{alg:EAI-Pi-value-fixed} of Algorithm~\ref{alg:EAI}, $\Pi[\sigma]$ is
equal to the number of occurrences of $\sigma$ in $\FileBWT[:b-1]$.
\end{lemma}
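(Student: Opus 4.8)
The plan is to show that the pointer $p$ and the vector $\Pi$ are maintained in a coordinated way so that, each time the main \texttt{ForEach} body reaches line~\ref{alg:EAI-Pi-value-fixed}, we have $p = b$ (where $[b,e)$ is the interval of the \emph{current} encoding) and $\Pi[\sigma]$ counts the occurrences of $\sigma$ in $\FileBWT[:b-1]$. The argument is an induction over the sequence of encodings processed by the \texttt{Merge} iterator. The crucial external fact we are allowed to use is that \texttt{Merge} delivers the encodings of all the input lists $\FileE(\cdot,\cdot,\l_{PS})$ sorted by non-decreasing opening position $b$; this is exactly the correctness hypothesis on those lists together with the behaviour of \texttt{Merge}. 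So throughout the run, the values $b$ seen at successive iterations are non-decreasing, and hence the \texttt{while} loop at lines~\ref{alg:EAI-update-Pi-begin}--\ref{alg:EAI-update-Pi-end} never needs to move $p$ backwards.

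First I would set up the induction precisely. The loop invariant, to be checked at the point just after line~\ref{alg:EAI-Pi-value-fixed} (equivalently, line~\ref{alg:EAI-Pi-value-fixed}=line~\ref{alg:EAI-reset-pi}), is: (a) $p = b$ for the current encoding's interval $[b,e)$, and (b) for every $\sigma$, $\Pi[\sigma] = |\{\, t : 1 \le t \le b-1,\ \FileBWT[t] = \sigma \,\}|$. For the base case, before the first iteration $p=0$ and $\Pi = \bar 0$, so $\Pi$ correctly counts occurrences in the empty prefix $\FileBWT[:{-1}]$; when the first encoding arrives with interval $[b,e)$, line~\ref{alg:EAI-Pi-value} does nothing since $\pi=\bar 0$, and the \texttt{while} loop runs $p$ from $0$ up to $b$, incrementing $\Pi[\FileBWT[p]]$ for each $p \in \{0,\dots,b-1\}$ — wait, the loop adds $\FileBWT[p]$ for $p$ from the old value up to $b-1$ inclusive and then sets $p \gets b$. (I would double-check the off-by-one against the array indexing convention of the paper — BWT is $1$-indexed, but $p$ is initialised to $0$, so the first increment touches $\FileBWT[0]$; this is the one delicate bookkeeping point and I expect it is just a convention that $\FileBWT$ is accessed consistently, or that indices run $0,\dots,n+m-1$.) After the loop, (a) and (b) hold for the first encoding.

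For the inductive step, assume the invariant held at line~\ref{alg:EAI-reset-pi} during the previous iteration, for the previous encoding with interval $[b_{\mathrm{prev}}, e_{\mathrm{prev}})$; at that moment $p = b_{\mathrm{prev}}$ and $\Pi$ counted occurrences in $\FileBWT[:b_{\mathrm{prev}}-1]$. After line~\ref{alg:EAI-reset-pi}, $\pi$ is reset, and the \texttt{for} loop at lines~\ref{alg:EAI-for-pi}--\ref{alg:EAI-nonemtpy-P-end} runs $p$ from $b_{\mathrm{prev}}$ to $e_{\mathrm{prev}}-1$, leaving $p = e_{\mathrm{prev}}$ and, by Lemma~\ref{lemma:pi-meaning}, $\pi[\sigma]$ equal to the number of occurrences of $\sigma$ in $\FileBWT[b_{\mathrm{prev}}:e_{\mathrm{prev}}-1]$. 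The inner \texttt{for} loop of lines~\ref{alg:EAI-extension}--\ref{alg:EAI-extension-end} does not touch $p$ or $\Pi$. Now the next encoding, with interval $[b,e)$, arrives; since $b \ge b_{\mathrm{prev}}$ and the intervals in a correct list are disjoint (as observed just before this lemma in the excerpt), we in fact have $b \ge e_{\mathrm{prev}}$ whenever the two encodings are distinct, so $p = e_{\mathrm{prev}} \le b$. Line~\ref{alg:EAI-Pi-value} performs $\Pi[\sigma] \mathrel{+}= \pi[\sigma]$ for all $\sigma$, which updates $\Pi$ from counting $\FileBWT[:b_{\mathrm{prev}}-1]$ to counting $\FileBWT[:e_{\mathrm{prev}}-1]$ (the two ranges $[:b_{\mathrm{prev}}-1]$ and $[b_{\mathrm{prev}}:e_{\mathrm{prev}}-1]$ partition $[:e_{\mathrm{prev}}-1]$). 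Then the \texttt{while} loop advances $p$ from $e_{\mathrm{prev}}$ to $b$, adding $\FileBWT[p]$ to $\Pi[\FileBWT[p]]$ for each $p \in \{e_{\mathrm{prev}}, \dots, b-1\}$, so afterwards $\Pi[\sigma]$ counts $\FileBWT[:b-1]$ and $p = b$. This re-establishes the invariant, and in particular gives the conclusion of the lemma at line~\ref{alg:EAI-Pi-value-fixed}.

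The main obstacle is the monotonicity claim $b \ge e_{\mathrm{prev}}$ — i.e. that \texttt{Merge} truly produces a globally sorted stream of pairwise-disjoint intervals across all the lists $\FileE(\sigma, \l_{PS}-\l_P, \l_{PS})$ for varying $\sigma$ and $\l_{PS}$, not just within one list. For a fixed length class this is Lemma~\ref{lemma:1-list-disjoint-intervals} together with the paragraph preceding this lemma; across different lengths $\l_{PS}$ one must note that a $P_1 S_1 \$$-interval and a $P_2 S_2 \$$-interval with $|P_1 S_1| \ne |P_2 S_2|$ are either nested or disjoint, but here both are $\$$-terminated intervals of the form $q(W\$)$, and two such intervals $q(W_1\$)$, $q(W_2\$)$ with $W_1 \ne W_2$ are always disjoint (a suffix cannot have both $W_1\$$ and $W_2\$$ as a prefix when $W_1\$ \ne W_2\$$), so disjointness holds regardless of lengths; combined with the fact that \texttt{Merge} always reads the record with the smallest $b$ next, the stream is sorted. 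If $P_1 S_1 = P_2 S_2$ as strings then the two encodings coincide and the \texttt{while} loop simply does nothing ($p = b$ already), which is harmless. I would state this monotonicity as a short preliminary observation and then the rest of the proof is the routine induction sketched above.
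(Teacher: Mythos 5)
Your proof is correct and follows essentially the same route as the paper's: induction on the number of encodings read, with Lemma~\ref{lemma:pi-meaning} supplying the count for $\FileBWT[b_{\mathrm{prev}}:e_{\mathrm{prev}}-1]$ so that line~\ref{alg:EAI-Pi-value} plus the while loop at lines~\ref{alg:EAI-update-Pi-begin}--\ref{alg:EAI-update-Pi-end} bring $\Pi$ up to position $b-1$. The one place you go beyond the paper is in explicitly justifying the monotonicity $b \ge e_{\mathrm{prev}}$ across the merged lists (via pairwise disjointness of the $\$$-terminated intervals), a fact the paper's proof uses implicitly and only motivates in the surrounding prose.
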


\begin{proof}
We will prove the lemma by induction on the number of the encodings
that have been read.
When extending the first input encoding, $\pi$ consists of zeroes and
the lemma holds, since the while loop at
lines~\ref{alg:EAI-update-Pi-begin}--\ref{alg:EAI-update-Pi-end} increments
$\Pi$ by the number of occurrences of each symbol $\sigma$ up to $b-1$.

Let $k$ be the number of encodings that have been read, with $k\ge 2$,
and let $([b_{1},e_{1}), \cdot, \cdot, \cdot, \cdot, \cdot)$  be the
$(k-1)$-th encoding read.
By Lemma~\ref{lemma:pi-meaning}, $\pi[\sigma]$ is
equal to the number of occurrences of $\sigma$ in $\FileBWT[b_{1},
e_{1})$, hence after line~\ref{alg:EAI-Pi-value} $\Pi$
contains the number of occurrences of each symbol in $\FileBWT[:p-1]$.
The while loop at
lines~\ref{alg:EAI-update-Pi-begin}--\ref{alg:EAI-update-Pi-end} increments
$\Pi$ by the number of occurrences of each symbol $\sigma$ in the portion of
$\FileBWT$ between $e_{1}$ and $b-1$, completing the proof.
\end{proof}

\begin{lemma}
\label{lemma:ExtendEncodings-correct}
Let the lists $\FileE(\cdot, \cdot, \l_{PS})$ be the input of
Algorithm~\ref{alg:EAI} and assume that all those lists are correct.
Let $([b,e) = q(PS\$), q(\$ S), q(P), q^{r}(P^{r}), l_P, \l_{PS})$
a generic input encoding, and let $\sigma$ be a character of $\Sigma$.
If $[b,e)$ has a nonempty
backward $\sigma$-extension $[b_{1},e_{1})$, then \textbf{ExtendEncodings}
outputs the partially extended encoding
$([b_{1},e_{1}), q(\$S), q(P), q^{r}(P^{r}), \l_P, \l_{PS}+1)$ to the list
$\FileP(\sigma, |S|, \l_{PS + 1})$.
\end{lemma}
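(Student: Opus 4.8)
The plan is to trace a single input encoding through Algorithm~\ref{alg:EAI} and show that exactly the claimed partially extended encoding reaches the list $\FileP(\sigma, |S|, \l_{PS}+1)$. First I would invoke Proposition~\ref{prop:extension-formulae}: the backward $\sigma$-extension of $[b,e)=q(PS\$)$ is the interval $[C(\sigma)+\Occ(\sigma,b)+1,\;C(\sigma)+\Occ(\sigma,e))$, and this is $q(\sigma PS\$)$, the first component of the partially extended $(\sigma P,S)$-encoding. So it suffices to show that when the current encoding is processed, lines~\ref{alg:EAI-extension}--\ref{alg:EAI-extension-end} append precisely the record $([b',e'), q(\$S), q(P), q^{r}(P^{r}), \l_P, \l_{PS}+1)$ with $b'=C[\sigma]+\Pi[\sigma]+1$ and $e'=b'+\pi[\sigma]$, and that this record equals the target interval and lands in the right list.

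The key steps, in order: (1) Since $[b,e)$ has a nonempty backward $\sigma$-extension, $\sigma$ occurs in $\FileBWT[b:e-1]$, so by Lemma~\ref{lemma:pi-meaning} $\pi[\sigma]>0$ and the body of the foreach at line~\ref{alg:EAI-extension} is executed for this $\sigma$. (2) By Lemma~\ref{lemma:Pi-meaning}, $\Pi[\sigma]=\Occ(\sigma,b)$ (number of occurrences of $\sigma$ in $\FileBWT[:b-1]$), so $b'=C[\sigma]+\Occ(\sigma,b)+1$, matching the left endpoint from Proposition~\ref{prop:extension-formulae}. (3) By Lemma~\ref{lemma:pi-meaning}, $\pi[\sigma]=\Occ(\sigma,e)-\Occ(\sigma,b)$, the number of occurrences of $\sigma$ in $\FileBWT[b:e-1]$; hence $e'=b'+\pi[\sigma]=C[\sigma]+\Occ(\sigma,e)$, matching the right endpoint. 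Thus $[b',e')=q(\sigma PS\$)$. (4) The remaining components are copied verbatim from the current encoding at line~\ref{alg:EAI-extension-end}: $q(\$S)$ is unchanged because $S$ does not change under a left extension; $q(P)$ and $q^{r}(P^{r})$ are unchanged because the partially extended encoding still records $P$, not $\sigma P$ (completing to $\sigma P$ is deferred to \textbf{CompleteExtensions}); $l_P$ stays, and the total length becomes $\l_{PS}+1$. (I would note the $\l_P=\l_{PS}$ branch only fires when $P$ is empty, which is consistent with $q(P)$ being reset to $q(\sigma)$ there, and does not affect the generic case.) (5) Finally, the destination list is $\FileP(C^{-1}(b'), \l_{PS}-\l_P, \l_{PS}+1)$; since $b'$ is the opening position of $q(\sigma PS\$)$ and $\sigma$ is the first character of $\sigma PS$, we have $C^{-1}(b')=\sigma$; and $\l_{PS}-\l_P=|S|$, so the list is exactly $\FileP(\sigma,|S|,\l_{PS}+1)$, as claimed.

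The main obstacle I anticipate is step (5), specifically arguing that $C^{-1}(b')=\sigma$ — i.e. that the first character of $LS[b']$ is $\sigma$. This follows because $b'$ lies in the $\sigma PS\$$-interval, which is contained in the $\sigma$-interval $q(\sigma)=[C(\sigma)+1,C(\sigma+1)+1)$, so $C^{-1}$ applied to any position in that range returns $\sigma$; but one must be slightly careful that the backward extension is genuinely nonempty (given by hypothesis) so that $b'$ is a legal index and the containment argument applies. The one subtlety worth spelling out is that $\l_P$ in the output is still the old $\l_P$ (the partially extended encoding has not yet incremented the left-extension length), which is why the list index $\l_{PS}-\l_P$ correctly evaluates to $|S|$ rather than $|S|-1$; everything else is a direct reading of the pseudocode combined with the two preceding lemmas and Proposition~\ref{prop:extension-formulae}.
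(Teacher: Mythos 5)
Your proposal is correct and follows essentially the same route as the paper's own proof: both rest on Proposition~\ref{prop:extension-formulae} for the backward-extension formula together with Lemmas~\ref{lemma:Pi-meaning} and~\ref{lemma:pi-meaning} to identify $b'=C[\sigma]+\Occ(\sigma,b)+1$ and $e'=C[\sigma]+\Occ(\sigma,e)$, and both use $\pi[\sigma]>0$ to characterize when the extension is nonempty and hence output. Your additional care about the destination list (that $C^{-1}(b')=\sigma$ and that the index $\l_{PS}-\l_P$ equals $|S|$ because $\l_P$ has not yet been incremented) is a detail the paper leaves implicit, but it does not change the argument.
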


\begin{proof}
By  Lemma~3.1 in~\cite{Ferragina2005}, the backward $\$$-extension of $[b,e)$
is equal to $[C[\sigma] + Occ(\sigma, b) + 1, C[\sigma] + Occ(\sigma, e) )$.
By Lemmas~\ref{lemma:Pi-meaning},~\ref{lemma:pi-meaning}, the values of $b'$ and
$e^{r}$ computed at lines~\ref{alg:EAI-b}--\ref{alg:EAI-e} is correct.

Notice that $\pi[\sigma]>0$ iff $Occ(\sigma, e) > Occ(\sigma, b) + 1$, hence the
partially extended encoding $([b_{1},e_{1}), q(\$S), q(P), q^{r}(P^{r}), l_P, \l_{PS}+1)$
is output iff $e_{1} > b_{1}$, that is the backward $\sigma$-extension is nonempty.
\end{proof}

\begin{lemma}
\label{lemma:ExtendEncodings-correct-all}
Let $([b_{1},e_{1}), q(\$S), q(P), q^{r}(P^{r}), \l_P, \l_{PS}+1)$ be a partially extended
encoding that is output by \textbf{ExtendEncodings}.
Let $\sigma = C^{-1}(b_{1})$.
Then $([b,e), q(\$S), q(P), q^{r}(P^{r}), \l_P, \l_{PS})$ is an encoding in
$\FileP(\sigma, |S|, \l_{PS})$ and $[b_{1},e_{1})$ is the backward
$\sigma$-extension of $[b,e)$.
\end{lemma}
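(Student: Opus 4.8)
The plan is to show the converse direction of Lemma~\ref{lemma:ExtendEncodings-correct}: every partially extended encoding that the procedure emits arises from a genuine input encoding via a genuine backward extension. First I would trace the only place in Algorithm~\ref{alg:EAI} where an element is appended to a \FileP{} list, namely lines~\ref{alg:EAI-extension}--\ref{alg:EAI-extension-end}. An append happens only inside the \textbf{foreach} over characters $\sigma$ with $\pi[\sigma]>0$, and this loop is executed while processing some current encoding $([b,e), q(\$S), q(P), q^{r}(P^{r}), \l_P, \l_{PS})$ read from the input lists $\FileE(\cdot,\cdot,\l_{PS})$. Since the input lists are assumed correct, this current encoding is a bona fide $(P,S)$-encoding, so in particular $[b,e) = q(PS\$)$; this supplies the encoding claimed to live in $\FileP(\sigma,|S|,\l_{PS})$ (note the statement's list index should read $\FileE$, not $\FileP$, matching the input; I would word it as: the pre-image encoding belongs to the correct input list $\FileE(\sigma',|S|,\l_{PS})$ for the appropriate first-character label).

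Next I would identify the emitted record. By inspection of lines~\ref{alg:EAI-b}--\ref{alg:EAI-extension-end}, the appended tuple is $([b',e'), q(S\$), q(P), q^{r}(P^{r}), \l_P+1, \l_{PS}+1)$ with $b' = C[\sigma] + \Pi[\sigma] + 1$ and $e' = b' + \pi[\sigma]$. By Lemma~\ref{lemma:Pi-meaning}, at this point $\Pi[\sigma] = Occ(\sigma,b)$, and by Lemma~\ref{lemma:pi-meaning}, $\pi[\sigma]$ equals the number of occurrences of $\sigma$ in $\FileBWT[b:e-1]$, i.e. $Occ(\sigma,e) - Occ(\sigma,b)$. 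Substituting gives $b' = C[\sigma] + Occ(\sigma,b) + 1$ and $e' = C[\sigma] + Occ(\sigma,e)$, which by Proposition~\ref{prop:extension-formulae} is exactly the backward $\sigma$-extension $q(\sigma PS\$)$ of $[b,e)$. Hence the emitted interval $[b_1,e_1) = [b',e')$ is the backward $\sigma$-extension of $[b,e)$, as required.

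It remains to check that the recovered $\sigma$ is the one the statement names, $\sigma = C^{-1}(b_1)$. Since $q(\sigma PS\$)$ is a $\sigma$-prefixed string interval, its first suffix $LS[b_1]$ begins with $\sigma$, so $C^{-1}(b_1) = \sigma$; this also matches the routing at line~\ref{alg:EAI-extension-end}, where the record is filed into $\FileP(C^{-1}(b'),\ldots)$. Finally, since the \textbf{foreach} condition $\pi[\sigma]>0$ held, $e_1 = b' + \pi[\sigma] > b'$, so the extension is nonempty and the pre-image pair $(\text{encoding},\sigma)$ is legitimate input for Lemma~\ref{lemma:ExtendEncodings-correct}. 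The main obstacle is purely bookkeeping: being careful that the value of $\Pi[\sigma]$ read at line~\ref{alg:EAI-b} is the post-update value guaranteed by Lemma~\ref{lemma:Pi-meaning} (which is stated at line~\ref{alg:EAI-Pi-value-fixed}, the line immediately preceding the \textbf{for} loop that computes $\pi$), so that the two lemmas compose to yield precisely the Proposition~\ref{prop:extension-formulae} formula; no genuine difficulty lies beyond this alignment of indices.
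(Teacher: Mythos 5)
Your proof is correct and follows essentially the same route as the paper's own: identify the current input encoding at the moment of the append, and combine Lemma~\ref{lemma:Pi-meaning} and Lemma~\ref{lemma:pi-meaning} with Proposition~\ref{prop:extension-formulae} to recognize $[b_{1},e_{1})$ as the backward $\sigma$-extension of $[b,e)$. Your side remark that the statement's $\FileP(\sigma, |S|, \l_{PS})$ should be read as the \emph{input} list $\FileE(\cdot, |S|, \l_{PS})$ (with the first-character index taken from $PS$ rather than $\sigma PS$) correctly identifies a typo that the paper's terser proof silently glosses over.
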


\begin{proof}
Let $([b,e), q(\$S), q(P), q^{r}(P^{r}), \l_P, \l_{PS})$ be the
current input encoding when $([b_{1},e_{1}), q(\$S), q(P),
q^{r}(P^{r}), \l_P, \l_{PS}+1)$ is output by
\textbf{ExtendEncodings}.
When computing the partially extended interval
(line~\ref{alg:EAI-extension-end}), $b_{1} = C[\sigma] + \Pi[\sigma] +
1$ and $e_{1} = b_{1} + \pi[\sigma]$.
The lemma is a direct consequence of
Proposition~\ref{prop:extension-formulae} and Lemmas~\ref{lemma:pi-meaning},~\ref{lemma:Pi-meaning}.
\end{proof}

\begin{lemma}
\label{lemma:sorted-extend-arc}
Let $\FileP(\sigma, \l_{PS}-\l_{P}, \l_{PS}+1)$ be any list written by ExtendEncodings.
Then  the encodings
$([b',e'), q(S\$), q(P), q^{r}(P^{r}), \l_P, \l_{PS})$ in
$\FileE(\sigma, \l_{PS}-\l_{P}, \l_{PS}+1)$ are sorted by increasing values of
$e'$.
Moreover the intervals $[b',e')$ are disjoint.
\end{lemma}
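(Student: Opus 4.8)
The statement asserts two things about each list $\FileP(\sigma, \l_{PS}-\l_{P}, \l_{PS}+1)$ written by \textbf{ExtendEncodings}: that its records are appended in increasing order of their end boundary $e'$, and that the underlying intervals are pairwise disjoint. The plan is to track the order in which records are appended to a fixed such list, and to exploit the monotonicity hypothesis on the input (the lists $\FileE(\cdot,\cdot,\l_{PS})$ fed through \textbf{Merge} are ordered by increasing opening position $b$, hence—since they consist of disjoint intervals of equal length $\l_{PS}$—also by increasing closing position $e$).

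\textbf{Step 1: disjointness.} First I would argue that all intervals appearing in $\FileP(\sigma, \l_{PS}-\l_{P}, \l_{PS}+1)$ are $\sigma P S\$$-intervals for strings $PS$ of the \emph{same} length $\l_{PS}$ (this is exactly how the list index is defined), hence the prepended strings $\sigma P S$ all have length $\l_{PS}+1$. By the observation recorded just before Section~3.1 (two string-intervals of equal length cannot be nested, hence are disjoint, provided the underlying strings differ), it suffices to note that distinct records in this list come from distinct $(P,S)$-encodings with distinct $PS\$$-intervals—guaranteed by correctness of the input lists—and a fixed backward $\sigma$-extension is injective on disjoint source intervals by Proposition~\ref{prop:extension-formulae}. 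So the intervals $[b',e')$ in the list are disjoint.

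\textbf{Step 2: sortedness by $e'$.} Now fix the target list and consider two records appended to it, coming from input encodings with $PS\$$-intervals $[b_1,e_1)$ and $[b_2,e_2)$, processed in that order by the \textbf{foreach} loop; by the input ordering $b_1 \le b_2$, and since these are disjoint equal-length intervals, in fact $e_1 \le b_2 < e_2$, so $e_1 < e_2$. The records these produce in the target list have, by lines~\ref{alg:EAI-b}--\ref{alg:EAI-e}, end boundaries $e'_t = C[\sigma] + \Pi[\sigma] + 1 + \pi[\sigma]$ evaluated at the moment of processing encoding $t$. By Lemma~\ref{lemma:Pi-meaning} and Lemma~\ref{lemma:pi-meaning}, $\Pi[\sigma] + \pi[\sigma]$ equals the number of occurrences of $\sigma$ in $\FileBWT[:e_t - 1] = Occ(\sigma, e_t)$ at that moment, so $e'_t = C[\sigma] + Occ(\sigma, e_t)$, which is exactly the right endpoint of the backward $\sigma$-extension of $[b_t, e_t)$ in Proposition~\ref{prop:extension-formulae}. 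Since $Occ(\sigma,\cdot)$ is nondecreasing and $e_1 < e_2$, we get $e'_1 \le e'_2$; and in fact, since $\pi[\sigma] > 0$ for both records (they were appended) and $e_1 \le b_2$, the occurrences counted in $[b_2, e_2)$ are strictly beyond those counted in $[b_1,e_1)$, giving $e'_1 < e'_2$. Thus records are appended in strictly increasing order of $e'$.

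\textbf{Main obstacle.} The one point needing care is that two \emph{different} input encodings in the merged stream could in principle produce records feeding the \emph{same} target list but out of the expected order if their source intervals were not comparable in the way claimed—so the crux is really the input invariant: that the $\FileE$-lists consist of disjoint intervals of a common length and are fed in increasing $b$ order, which forces increasing $e$ order. This is supplied by the correctness hypothesis on the inputs together with the pre-Section-3.1 observation, so the argument closes; but it is worth stating explicitly that without the equal-length property, "sorted by $b$" would not imply "sorted by $e$," and the lemma would fail. (A minor bookkeeping subtlety: the statement as written refers to $\FileE(\sigma,\ldots,\l_{PS}+1)$ in its conclusion while the hypothesis names $\FileP$; I read this as a typo for $\FileP$ throughout, since line~\ref{alg:EAI-extension-end} appends to $\FileP$, and the proof is about the order of those appends.)
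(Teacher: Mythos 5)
Your proof is correct and rests on the same underlying fact as the paper's own argument: the $\sigma$-occurrence counter accumulates monotonically across the $b$-sorted, disjoint input intervals, so each new record's start boundary $C[\sigma]+\Pi[\sigma]+1$ is at least the previous record's end boundary $e'$, which yields the ordering by $e'$ and the disjointness simultaneously. The paper phrases this directly in terms of the running counters $\Pi$ and $\pi$ (the next $\Pi[\sigma]$ is the old one incremented by $e'-b'=\pi[\sigma]$) rather than through $Occ$ and Proposition~\ref{prop:extension-formulae}; your separate Step~1 for disjointness is sound but redundant given your Step~2, and your reading of the $\FileE$/$\FileP$ discrepancy in the statement as a typo matches what the algorithm actually does at line~\ref{alg:EAI-extension-end}.
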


\begin{proof}
By
Lemmas~\ref{lemma:ExtendEncodings-correct}~\ref{lemma:sorted-extend-arc},
the list $\FileP(\sigma, \l_{PS}-\l_{P}, \l_{PS}+1)$ contains only partially extended encodings
relative to the pairs $(P,S)$ where $\sigma$ is the first symbol of $PS$.

Let us now consider a generic partially extended encoding
$([b',e'), q(S\$), q(P), q^{r}(P^{r}), \l_{P}, \l_{PS})$
written to $\FileP(\sigma, \l_{PS}-\l_{P}, \l_{PS}+1)$.
Notice that $e'-b'=\pi[\sigma]$ and that, while managing the next partially extended
encoding, $\Pi[\sigma]$ will be incremented by $e'-b'$, hence  during the
next iterations of the foreach loop $C[\sigma] +
\Pi[\sigma] + 1$ will be at least as
large as the value of $e'$ at the current iteration.
That completes the proof, since the start boundary is always equal to $C[\sigma]
+ \Pi[\sigma] + 1$ (see line~\ref{alg:EAI-b}).
\end{proof}

The following corollary summarizes this subsection.

\begin{corollary}
\label{corollary:partial-extensions-correct}
Let the lists $\FileE(\cdot, \cdot, \l_{PS})$ be the input of
Algorithm~\ref{alg:EAI} and assume that all those lists are correct.
Let $([b,e) = q(PS\$), q(\$ S), q(P), q^{r}(P^{r}), l_P, \l_{PS})$
be the current encoding.

Then \textbf{ExtendEncodings} produces
the correct lists $\FileP(\sigma, \l_{s}, \l_{PS}+1)$.
\end{corollary}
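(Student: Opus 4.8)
The plan is to obtain the corollary purely by assembling Lemmas~\ref{lemma:ExtendEncodings-correct}, \ref{lemma:ExtendEncodings-correct-all} and \ref{lemma:sorted-extend-arc}, checking the three requirements in the definition of a correct list $\FileP(\sigma, \l_S, \l_{PS}+1)$ of partially extended encodings: that it contains \emph{(a)} every partially extended $(\sigma P, S)$-encoding with $|S| = \l_S$ and $|\sigma PS| = \l_{PS}+1$, \emph{(b)} only such encodings, each exactly once, and \emph{(c)} these encodings sorted by increasing opening position.

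For \emph{(a)} I would fix such a partially extended $(\sigma P, S)$-encoding; by definition it derives from the $(P, S)$-encoding $f = ([b,e), q(\$S), q(P), q^{r}(P^{r}), \l_P, \l_{PS})$ with $\l_P = |P|$ and $\l_{PS} = |PS|$. Since $q(\sigma PS\$)$ is nonempty (otherwise the partially extended encoding is not a well-defined object), $[b,e) = q(PS\$)$ is nonempty too and $\sigma$ occurs in $\FileBWT[b:e-1]$; hence, using the hypothesis that the input lists $\FileE(\cdot,\cdot,\l_{PS})$ are correct (for the first call to \textbf{ExtendEncodings} this is Corollary~\ref{corollary:basic-encodings-correct}), $f$ lies in the list $\FileE(PS[1], \l_S, \l_{PS})$ that is among those processed by the \textbf{Merge}. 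Lemma~\ref{lemma:ExtendEncodings-correct} then yields that processing $f$ makes \textbf{ExtendEncodings} append the partially extended encoding $([b_1,e_1), q(\$S), q(P), q^{r}(P^{r}), \l_P, \l_{PS}+1)$, with $[b_1,e_1) = q(\sigma PS\$)$, to $\FileP(\sigma, \l_S, \l_{PS}+1)$. I would pause here to check the $\l_P = 0$ case: the procedure additionally sets $q(P) \gets q(\sigma)$ and $q^{r}(P^{r}) \gets q(\sigma)$, which merely initialises the two interval fields later finalised by \textbf{CompleteExtensions} and leaves the routing (which depends only on $C^{-1}(b_1) = \sigma$ and the two length fields) unchanged.

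For \emph{(b)} and \emph{(c)} I would argue as follows. By Lemma~\ref{lemma:ExtendEncodings-correct-all}, any partially extended encoding written to $\FileP(\sigma, \l_S, \l_{PS}+1)$ has the form $([b_1,e_1), q(\$S), q(P), q^{r}(P^{r}), \l_P, \l_{PS}+1)$, obtained from an input encoding $([b,e), q(\$S), q(P), q^{r}(P^{r}), \l_P, \l_{PS})$ via a nonempty backward $\sigma$-extension, so $[b_1,e_1) = q(\sigma PS\$)$; by correctness of the inputs this input encoding is a genuine $(P,S)$-encoding with $|P| = \l_P$ and $|PS| = \l_{PS}$, hence the output is precisely the partially extended $(\sigma P, S)$-encoding with $|S| = \l_S$ and $|\sigma PS| = \l_{PS}+1$. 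For the ``exactly once'' part I would invoke the disjointness observation preceding this subsection: two distinct encodings in one input list have representatives $PS$ of the same length, hence disjoint $PS\$$-intervals, hence distinct backward $\sigma$-extensions. Finally, Lemma~\ref{lemma:sorted-extend-arc} states that each output list is emitted in non-decreasing order of the closing position and with pairwise disjoint intervals, and disjoint intervals ordered by right endpoint are also ordered by left endpoint, giving the required sorting by opening position. Combining \emph{(a)}--\emph{(c)} yields the corollary.

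I expect the only delicate point to be \emph{(a)}: making sure the correctness hypothesis on the $\FileE(\cdot,\cdot,\l_{PS})$ lists genuinely supplies \emph{every} $(P,S)$-encoding that ought to be extended (in particular that the base case $\l_P = 0$ is covered by Corollary~\ref{corollary:basic-encodings-correct} rather than by an as-yet-unestablished invariant), and checking that the list index $(C^{-1}(b_1), \l_{PS}-\l_P, \l_{PS}+1)$ used at line~\ref{alg:EAI-extension-end} coincides with the index $(\sigma, |S|, |\sigma PS|)$ prescribed by the definition of a correct $\FileP$-list. Everything else is bookkeeping already discharged by the cited lemmas.
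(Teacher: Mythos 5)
Your proposal is correct and matches the paper's intent exactly: the paper gives no explicit proof, stating only that the corollary ``summarizes this subsection,'' i.e.\ it is meant to follow by assembling Lemmas~\ref{lemma:ExtendEncodings-correct}, \ref{lemma:ExtendEncodings-correct-all} and~\ref{lemma:sorted-extend-arc} together with the disjointness observation, which is precisely what you do. Your extra care with the $\l_P=0$ initialisation and the routing index is sound bookkeeping beyond what the paper spells out.
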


\begin{lemma}
\label{lemma:terminal-arc}
If the input encodings are correct, then
Algorithm~\ref{alg:EAI} outputs the arc $(r_{j}, r_{i})$ to
$\FileArc(\l_{P}, i)$ iff there exists a read $r_{j}=PS$ with $P$ and $S$
both nonempty and there exists a read $r_{i}$ whose prefix is $S$.
\end{lemma}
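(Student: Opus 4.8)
The plan is to pin down the single line of Algorithm~\ref{alg:EAI} at which arcs are emitted --- line~\ref{alg:EAI-nonemtpy-P-end} --- and to read off both implications from the code there, using two facts recalled earlier: that $\FileBWT[p]=\$$ holds exactly when the suffix in position $p$ is an entire read (equivalently $\SA[p]=(|r_{j}|,j)$ for some $j$), and that $q(\$S)$ identifies the set $R^{p}(S)$ of reads whose prefix is $S$.

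For the ``if'' direction I would take a read $r_{j}=PS$ with $P,S$ both nonempty and a read $r_{i}$ with prefix $S$, and set $\l_{P}=|P|$. First I would argue that the $(P,S)$-encoding is among the (correct) inputs of \textbf{ExtendEncodings}$(\l_{P})$: since $P$ is nonempty, $S$ is a nonempty proper substring of $r_{j}$; the $S\$$-interval is nonempty because $r_{j}$ has suffix $S$; and the $\$S$-interval is nonempty because $r_{i}$ has prefix $S$. Hence $S$ is a seed (Definition~\ref{definition:seed}), and --- by correctness of the input lists, which by Corollary~\ref{corollary:basic-encodings-correct} together with Lemma~\ref{lemma:ExtendEncodings-correct} contain the encoding of every pair whose concatenation is a suffix of some read and whose second component is a seed --- the encoding $(q(PS\$),q(\$S),q(P),q^{\rev}(P^{\rev}),\l_{P},|S|)$ lies in $\FileE(P[1],|S|,|PS|)$ and is processed by the \textbf{foreach} loop. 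Then I would look at the position $p_{0}$ with $\SA[p_{0}]=(|PS|,j)$: since $|r_{j}|=|PS|$ this is the entire-read suffix of $r_{j}$, so $\FileBWT[p_{0}]=\$$, and since $PS\$$ is a prefix of $LS[p_{0}]=r_{j}\$$ it lies in the $PS\$$-interval $[b,e)$. Therefore the \textbf{for} loop of line~\ref{alg:EAI-for-pi} reaches $p=p_{0}$, the test at line~\ref{alg:EAI-nonemtpy-P} succeeds ($\FileBWT[p_{0}]=\$$ and $\l_{P}>0$), the algorithm reads $(k,j)=\SA[p_{0}]$, and it scans the reads of $q(\$S)=R^{p}(S)$; since $r_{i}\in R^{p}(S)$, one iteration appends the arc $\langle j,i,q^{\rev}(P^{\rev})\rangle$ to $\FileArc(\l_{P},i)$, as required.

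For the ``only if'' direction I would observe that if $(r_{j},r_{i})$ is appended to $\FileArc(\l_{P},i)$ then this happens at line~\ref{alg:EAI-nonemtpy-P-end} while some encoding $([b,e)=q(PS\$),q(\$S),q(P),q^{\rev}(P^{\rev}),\l_{P},\l_{PS})$ of the (correct) input is processed, at a position $p$ with $b\le p<e$, $\FileBWT[p]=\$$, $\SA[p]=(k,j)$, and with $r_{i}$ among the reads of $q(\$S)$; the test at line~\ref{alg:EAI-nonemtpy-P} forces $\l_{P}>0$, so $P$ is nonempty. From $\FileBWT[p]=\$$ and $\SA[p]=(k,j)$ one gets $k=|r_{j}|$ and $LS[p]=r_{j}\$$; since $p$ lies in the $PS\$$-interval, $PS\$$ is a prefix of $r_{j}\$$, and because $P$ and $S$ contain no sentinel this forces $PS\$=r_{j}\$$, that is $r_{j}=PS$. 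The overlap $S$ is nonempty, being the representative of a seed; and $r_{i}\in q(\$S)=R^{p}(S)$ gives that $r_{i}$ has prefix $S$ --- exactly the right-hand side.

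The hard part will be the ``if'' direction's claim that the $(P,S)$-encoding really is present in the input family: here I must make precise what ``the input encodings are correct'' grants, namely that a correct family $\FileE(\cdot,\cdot,\cdot)$ is precisely the one obtained by starting from the basic encodings of all seeds (Corollary~\ref{corollary:basic-encodings-correct}) and iterating \textbf{ExtendEncodings} and \textbf{CompleteExtensions} (Lemma~\ref{lemma:ExtendEncodings-correct} together with the correctness of \textbf{CompleteExtensions}), so that the encoding of every pair $(P,S)$ with $PS$ a suffix of some read and $S$ a seed is indeed produced. Once that is in place the remainder is a direct reading of the code. A secondary point I would flag is the convention that a seed --- hence the overlap $S$ of any encoding --- is nonempty, which is what keeps an arc from being emitted with an empty overlap.
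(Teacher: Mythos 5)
Your proposal is correct and follows essentially the same route as the paper's own proof: both directions are read off from the test at line~\ref{alg:EAI-nonemtpy-P} (arc emitted iff the $PS\$$-interval has a nonempty backward $\$$-extension, with $P$ nonempty by the guard and $S$ nonempty and prefix-realized because it is a seed), combined with the assumed correctness of the input encoding lists to guarantee the $(P,S)$-encoding is actually processed. Your version is somewhat more detailed than the paper's (e.g., pinning down the position $p_0$ with $\SA[p_0]=(|r_j|,j)$ and arguing $PS\$=r_j\$$ explicitly), but the underlying argument is the same.
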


\begin{proof}
Notice that the encoding of the arc  $(r_{j}, r_{i})$ is output to
$\FileArc(\l_{PS} + 1, i)$ only if we are
currently extending the $(P,S)$-encoding (hence $S$ is a seed) and we have found that the
$PS\$$-interval has  a nonempty backward $\$$-extension, since $B[p]$ is the symbol
preceding $PS$ in a suffix and $B[p]=\$$.
By definition of seed, $S$ is nonempty and there exists a read $r_{i}$
with prefix $S$, while $P$ is
nonempty by the condition at line~\ref{alg:EAI-nonemtpy-P}.

Assume now that $r_{j}=PS$ is a read with $P$ and $r_{i}$ is a read
with prefix $S$.
The $S\$$-interval and the $PS\$$-interval are nonempty.
Moreover $R^{p}(S)\neq\emptyset$ and $S$ a seed, hence there is an iteration of
ExtendEncodings where we backward extend the $PS\$$-encoding.
Since $r_{j}=PS$, $PS$ has a nonempty backward $\$$-extension, hence the
condition at line~\ref{alg:EAI-nonemtpy-P} is satisfied.
\end{proof}

\subsection{Extending arc labels}

While the procedure \textbf{ExtendEncodings} backward extends the
$PS\$$-intervals, the actual arc labels are the $P$-intervals, therefore we need a
dedicated procedure, called \textbf{CompleteExtensions}, that scans the results of
\textbf{ExtendEncodings}, \ie a list of partially extended encodings and updates them by
extending the intervals $q(P)$ on $R$ and $q^{r}(P^{r})$ on $R^{r}$.

A procedure
\textbf{ExtendIntervals} has been originally described~\cite{bauer_lightweight_2013}
to compute all backward extensions of a set of \emph{disjoint} string-intervals, with only a single pass over $\FileBWT$.
In our case, the string-intervals are not necessarily disjoint,
therefore that procedure is not directly applicable.
We exploit the property that any two string-intervals are either nested
or disjoint to design a new procedure that computes all backward extensions
with a single scan of the list $\FileBWT$.

Our procedure \textbf{CompleteExtensions} takes in input a
list $I$ of partially extended encodings $([b,e), q(S\$), q(P), q^{r}(P^{r}),
l_, \l_{PS})$, sorted by increasing values of $b$
and (as a secondary criterion) by decreasing values of $e$.
Moreover the list $I$ is terminated with a sentinel partially extended encoding
$(\cdot, \cdot, [n+1, n+2),  \cdot, \cdot, \cdot)$---we recall that $n$ is the total
number of input characters.
For each input encoding coming from the list $\FileP(\sigma, \cdot, \cdot)$, the procedure
outputs the record $(p(PS\$), q(S\$), q(\sigma P), q^{r}(P^{r}\sigma),
l_P, \l_{PS})$.

\begin{algorithm}[tb!]
\SetKwInOut{PRE}{Input}%
\SetKwInOut{POST}{Output}
\newcommand{\Lmax}{\ensuremath{L_\mathrm{max}}}
\PRE{%
The BWT $B$ of a set $R$ of strings.
The correct lists $\FileP(\cdot, \cdot, \cdot)$ containing all
partially extended $(\sigma P, S)$-encoding such that $|P| = \l_{P}$.
}
\POST{%
The correct lists $\FileE(\cdot, \cdot, \cdot)$ containing all
$(\sigma P, S)$-encodings.
}
$I\gets$ Merge$(\{ \FileP(\sigma, \l_{PS} - \l_{P}, \l_{PS} + 1) :
\sigma\in\Sigma, \l_{PS} \ge \l_{P}\})$\;
Append the sentinel interval
$(\cdot, \cdot, [n+1, n+2), \cdot, \cdot, \cdot)$ to $I$\;
$\Pi\gets$ $|\Sigma|$-long vector $\bar{0}$\;\label{alg:EAL:init}
$Z\gets$ stack with the record $\langle (\cdot, \cdot, [1, \infty), \cdot,
\cdot, \cdot),  \Pi\rangle$\label{alg:EAL:initZ}\;
$p\gets 1$; $e_{z}\gets +\infty$\label{alg:EAL:preamble}\;
\ForEach{$(q_{1}, q_{2}, [b,e), [b', e^{r}), \l_{P}, \l_{PS}) \in I$}{\label{alg:EAL:close}\label{alg:EAL:intervals}
  $\langle (\cdot, \cdot, [\cdot,e_{z}), \cdot, \cdot, \cdot), \cdot\rangle\gets$ top$(Z)$\;
  \While{$e>e_{z}$\label{alg:EAL:begin-output}}{%
    \While{$p < e_{z}-1$\label{alg:EAL:Pi1}}{%
      $\Pi[B[p]] \gets \Pi[B[pi]] + 1$\label{alg:EAL:pi}
      $p\gets p+1$\;
    }
    $p\gets e_{z}$\label{alg:EAL:Pi-for-e}\;
    $\langle (q_{1z}=[b_{ps}, e_{ps}), q_{2z},  [b_{z}, e_{z}), [b'_z, e^{r}_z), \l_{pz}, \l_{psz})  , \Pi_{z}\rangle\gets$ pop$(Z)$\;
    $\sigma\gets C^{-1}(b_{ps})$\label{alg:EAL:extbegin}\;
      $prev \gets \sum_{c < \sigma} \left ( \Pi(c) - \Pi_{z}(c) \right)$\;
      $w \gets \Pi(\sigma) - \Pi_{z}(\sigma)$\;
        Append $(q_{1z}, q_{2z},
        [ C[\sigma] + \Pi_{z}[\sigma] + 1, C[\sigma] + \Pi_{z}[\sigma] + 1 + w),
        [b'_{z}+ prev, b'_{z}+ prev + w ),
        \l_{pz}, \l_{ez})$ to the list $\FileE(\sigma, \l_{psz}-\l_{pz}, \l_{psz})$\;\label{alg:EAL:extend}
      \label{alg:EAL:end-output}
  }
  \While{$p<b$\label{alg:EAL:begin-scan}}{%
    $\Pi[B[p]] \gets \Pi[B[p]] + 1$\label{alg:EAL:pi-2}\;
    $p\gets p+1$\label{alg:EAL:end-scan}\;
  }
  push$(Z, \langle (q_{1}, q_{2}, [b,e), [b', e^{r}), \l_{P}, \l_{PS}), \Pi\rangle)$\label{alg:EAL:pushZ}\label{alg:EAL:end}\;
\label{alg:EAL:bottom}
}
\caption{CompleteExtensions$(\l_{P})$}
\label{alg:Extend-Q-Intervals}
\end{algorithm}

Just as the procedure \textbf{ExtendEncodings}, we maintain an array $\Pi$,
where $\Pi[\sigma]$ is equal to the number of occurrences of the character $\sigma$ in
$\FileBWT[:p-1]$, and $p$ is the current position in $\FileBWT$.
The only difference \wrt \textbf{ExtendEncodings} is that $\sigma$ can be the
sentinel character $\$$.
We recall that $\Occ(\sigma, p)$ is the number of occurrences of $\sigma$ in
$B[:p-1]$~\cite{Ferragina2005}, therefore $\Pi[\sigma]=\Occ(\sigma,
p)$, where $p$ is a the number of symbols of $\FileBWT$ that have been read.
The array $\Pi$ is used to compute the backward extension at
line~\ref{alg:EAL:extend} of Algorithm~\ref{alg:Extend-Q-Intervals}.

We also maintain a stack $Z$ storing the partially extended encodings that have already been
read, but have not been extended yet.
A correct management of $Z$ allows to have the encodings in the correct order, that
is to read the encodings in increasing order of $b$, and to actually extend the
encodings in increasing order of $e$.
This ordering allows to scan sequentially $\FileBWT$.

We will start by showing that \textbf{CompleteExtensions} correctly manages the
array $\Pi$.

\begin{lemma}
\label{lemma:EAL-Pi-correct}
Assume that the input partial encodings are $([b,e) = q(\sigma PS\$), q(\$ S), q(P),
q^{r}(P^{r}), l_P, \l_{PS})$ and that they are sorted by increasing
value of $e$.
Then
after lines~\ref{alg:EAL:Pi-for-e} and~\ref{alg:EAL:end-scan}, and before
lines~\ref{alg:EAL:Pi1} and~\ref{alg:EAL:begin-scan} of
Algorithm~\ref{alg:Extend-Q-Intervals}, $\Pi[\sigma] = Occ(\sigma,p)$.
\end{lemma}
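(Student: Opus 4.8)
The plan is to prove the invariant by induction on the iterations of the \textbf{foreach} loop at line~\ref{alg:EAL:intervals}, tracking precisely the value of the position counter $p$ and of $\Pi$. The key observation, which must be stated up front, is that $\Pi$ is only ever modified by the two inner \textbf{while} loops (lines~\ref{alg:EAL:Pi1}--\ref{alg:EAL:pi} and lines~\ref{alg:EAL:begin-scan}--\ref{alg:EAL:pi-2}), and each such modification consists of reading $\FileBWT[p]$, incrementing $\Pi[\FileBWT[p]]$, and advancing $p$ by one. Hence, at any moment during the execution, the invariant ``$\Pi[\tau]=\Occ(\tau,p)$ for every $\tau\in\extalphabet$'' is equivalent to ``$\Pi$ counts exactly the occurrences of each symbol in $\FileBWT[:p-1]$'', and this equivalence is preserved by every individual increment step (reading $\FileBWT[p]$ adds exactly one occurrence of $\FileBWT[p]$ to the prefix $\FileBWT[:p-1]$ while moving from $p$ to $p+1$). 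The only other place where $p$ is touched is line~\ref{alg:EAL:Pi-for-e}, which sets $p\gets e_{z}$; I must argue that at that line $p$ already equals $e_z$, so that this assignment is a no-op and does not break the correspondence — this follows because the preceding \textbf{while} at line~\ref{alg:EAL:Pi1} runs until $p<e_z-1$ fails, i.e. until $p=e_z-1$, and then the missing single increment to reach $e_z$ is exactly what keeps $\Pi$ synchronized (alternatively, as written, after the loop $p = e_z-1$ and line~\ref{alg:EAL:Pi-for-e} advances it to $e_z$, but then one increment of $\Pi$ has been skipped; I will need to reconcile this with the off-by-one in the loop guard, presumably reading line~\ref{alg:EAL:Pi1} as $p<e_z$ or treating $\Occ$'s convention $\Occ(\cdot,1)=0$ carefully).

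Concretely, I would first establish the base case: after the preamble (line~\ref{alg:EAL:preamble}) we have $p=1$ and $\Pi=\bar 0$, and by the convention $\Occ(\sigma,1)=0$ the invariant $\Pi[\sigma]=\Occ(\sigma,p)$ holds. For the inductive step, I assume the invariant holds at the top of an iteration for the current value of $p$, and I trace the iteration. The outer \textbf{while} at line~\ref{alg:EAL:begin-output} fires only while $e>e_z$; each pass through it runs the inner \textbf{while} at line~\ref{alg:EAL:Pi1}, which advances $p$ to $e_z$ (reconciling the guard as above) while keeping $\Pi$ synchronized by the per-step argument, then pops $Z$ and performs the extension at line~\ref{alg:EAL:extend} — the extension reads $\Pi$ and $\Pi_z$ but does not modify $\Pi$ or $p$, so the invariant survives, giving the conclusion ``after line~\ref{alg:EAL:Pi-for-e}''. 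After the outer \textbf{while} terminates, the \textbf{while} at line~\ref{alg:EAL:begin-scan} advances $p$ to $b$, again preserving the invariant step by step; this gives the conclusion ``after line~\ref{alg:EAL:end-scan}''. The ``before lines~\ref{alg:EAL:Pi1} and~\ref{alg:EAL:begin-scan}'' assertions then hold because at those points $p$ has whatever value it had after the previous synchronized update (the pop/push and the extension do not touch $p$ or $\Pi$), so the invariant carried in from the inductive hypothesis (or from the previous sub-step) is still valid. Finally the push at line~\ref{alg:EAL:pushZ} and the reassignment of $e_z$ from $\mathrm{top}(Z)$ do not affect $\Pi$ or $p$, closing the induction.

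The only genuine subtlety — and the step I expect to be the main obstacle — is the bookkeeping around line~\ref{alg:EAL:Pi-for-e} and the exact guard of the loop at line~\ref{alg:EAL:Pi1}: one must pin down whether, when the extension at line~\ref{alg:EAL:extend} is performed, the position $p$ sits at $e_z$ (so that $\Pi[\sigma]=\Occ(\sigma,e_z)$, which is precisely what the formula $C[\sigma]+\Pi_z[\sigma]+1+w$ with $w=\Pi(\sigma)-\Pi_z(\sigma)$ needs) and not at $e_z-1$. I would handle this by reading the guard $p<e_z-1$ together with the explicit assignment $p\gets e_z$ as jointly achieving $p=e_z$ with $\Pi$ correctly counting $\FileBWT[:e_z-1]$ — i.e. the loop body is executed for $p=1,\dots,e_z-2$ via its own guard and the final increment $\FileBWT[e_z-1]$ is the one folded into the $p\gets e_z$ line, or else there is a deliberate one-step lag that the extension formula already accounts for. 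Once this convention is fixed, everything else is the routine per-step argument sketched above, and I would merely remark that the sentinel records pushed at lines~\ref{alg:EAL:initZ} (with closing coordinate $\infty$) and appended to $I$ (with interval $[n+1,n+2)$) guarantee the stack is never empty when accessed and that $p$ is driven all the way to $n+1$, so no suffix of $\FileBWT$ is left unread.
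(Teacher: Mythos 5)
Your proposal is correct and follows essentially the same route as the paper's own proof: induction on the number of input encodings read, with the base case at line~\ref{alg:EAL:preamble} ($p=1$, $\Pi=\bar 0$) and the inductive step reduced to inspecting the only lines that modify $\Pi$ or $p$, namely lines~\ref{alg:EAL:Pi1}--\ref{alg:EAL:Pi-for-e} and lines~\ref{alg:EAL:begin-scan}--\ref{alg:EAL:end-scan}. The off-by-one you flag is real --- with the guard $p<e_{z}-1$ followed by $p\gets e_{z}$ the symbol $B[e_{z}-1]$ is never counted into $\Pi$, so the claimed $\Pi[\sigma]=\Occ(\sigma,e_{z})$ fails exactly when $B[e_{z}-1]=\sigma$ --- and the paper's proof does not address it either (it disposes of these lines by ``direct inspection''), so your reading of the guard as $p<e_{z}$ is the right repair rather than a gap in your argument.
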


\begin{proof}
We can prove the lemma by induction on the number $i$ of input encodings that
have been read so far.
Notice that the lemma holds at line~\ref{alg:EAL:preamble}, since
$\Pi = \bar{0}$ and $p=1$.
When $i=1$, the condition at line~\ref{alg:EAL:begin-output} does not
hold, hence we only need to consider the value of $\Pi$ at
lines~\ref{alg:EAL:begin-scan} and~\ref{alg:EAL:end-scan}.
A direct inspection of lines~\ref{alg:EAL:begin-scan}--\ref{alg:EAL:end-scan}
shows that the lemma holds in this case.

Assume now that $i>1$.
Since the procedure modifies $\Pi$ or $p$ only at
lines~\ref{alg:EAL:Pi1}--\ref{alg:EAL:Pi-for-e} and
lines~\ref{alg:EAL:begin-scan}--\ref{alg:EAL:end-scan}, by inductive hypothesis
the lemma holds before line~\ref{alg:EAL:Pi1}.
Again, a direct inspection of lines~\ref{alg:EAL:Pi1}--\ref{alg:EAL:Pi-for-e}
proves that the lemma holds at line~\ref{alg:EAL:Pi-for-e}, which in turn
implies that it holds at line~\ref{alg:EAL:begin-scan}.
The same direct inspection of
lines~\ref{alg:EAL:begin-scan}--\ref{alg:EAL:end-scan} as before is sufficient
to complete the proof.
\end{proof}

\begin{algorithm}[tb!]
\SetKwInOut{PRE}{Input}
\SetKwInOut{POST}{Output}
\PRE{
A set $R$ of reads.
}
\POST{
The overlap graph $G_{O}$ of $R$.
}
$\l_{max}\gets $ the maximum length of a read in $R$\;
Construct \FileBWT, \FileLCP, \FileSA\;
BuildBasicArcIntervals$(R)$\tcc*{computes $\FileE(\cdot, \l_{PS}, \l_{PS})$}
\For{$\l_{P}\gets 0$ \KwTo $\l_{max} - 2$}{%
  ExtendEncodings($\l_{P}$)\tcc*{computes $\FileP(\cdot, \l_{PS} - \l_{p}, \l_{PS})$}
  CompleteExtensions($\l_{P}$)\tcc*{computes $\FileE(\cdot, \l_{PS} - \l_{P}, \l_{PS})$}
}
\caption{OverlapGraph$(R)$}
\label{alg:OG}
\end{algorithm}

Since elements are pushed on the stack $Z$ only at lines~\ref{alg:EAL:initZ}
and~\ref{alg:EAL:pushZ}, and  partially extended encodings are pushed on $Z$ without
any change,
a direct consequence of Lemma~\ref{lemma:EAL-Pi-correct}
is the following corollary.

\begin{corollary}
\label{corollary:Pi-in-Z}
Let $\langle (\cdot, \cdot, [b,e), \cdot, \cdot, \cdot) , \Pi\rangle$ be an
element of $Z$, and let $\sigma$ be any character in $\extalphabet$.
Then $\Pi[\sigma] = Occ(\sigma, b)$.
\end{corollary}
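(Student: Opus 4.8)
The plan is to exploit that elements enter the stack $Z$ only at lines~\ref{alg:EAL:initZ} and~\ref{alg:EAL:pushZ}, and that once pushed a record is never modified (partially extended encodings are pushed onto $Z$ verbatim, together with the current value of $\Pi$). Hence it suffices to verify the identity $\Pi[\sigma] = Occ(\sigma, b)$ for every $\sigma \in \extalphabet$ at the moment of each of these two pushes, which I would do by a short case analysis.

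For the push at line~\ref{alg:EAL:initZ} the pushed interval is $[1, \infty)$, so $b = 1$, and the accompanying array is the all-zero vector created at line~\ref{alg:EAL:init}; since $Occ(\sigma, 1) = 0$ for every $\sigma$ by the definition of the FM-index, the identity holds trivially. For the push at line~\ref{alg:EAL:pushZ}, observe that this line is reached immediately after the \texttt{while} loop of lines~\ref{alg:EAL:begin-scan}--\ref{alg:EAL:end-scan} terminates, so Lemma~\ref{lemma:EAL-Pi-correct} gives $\Pi[\sigma] = Occ(\sigma, p)$ there for every $\sigma$. It therefore remains to show that $p = b$ at that point: then plugging $p = b$ into Lemma~\ref{lemma:EAL-Pi-correct} yields exactly $\Pi[\sigma] = Occ(\sigma, b)$.

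To establish $p = b$ at line~\ref{alg:EAL:pushZ}, note that the \texttt{while} loop at line~\ref{alg:EAL:begin-scan} forces $p \ge b$ on exit, so the real content is the invariant that $p \le b$ whenever the body of the \texttt{foreach} loop of line~\ref{alg:EAL:intervals} is entered---equivalently, that the single left-to-right scan of $\FileBWT$ driven by $p$ never runs past the opening position of the encoding currently being read. I would prove this invariant by induction on the \texttt{foreach} iterations: the only place $p$ jumps forward is lines~\ref{alg:EAL:Pi1}--\ref{alg:EAL:Pi-for-e}, which set $p$ to the closing position $e_z$ of a record just popped from $Z$; each such popped record $(\cdot, \cdot, [b_z, e_z), \cdot, \cdot, \cdot)$ satisfies $e > e_z$ by the \texttt{while} condition at line~\ref{alg:EAL:begin-output}, and since any two string-intervals are nested or disjoint while a nesting $[b,e) \subseteq [b_z, e_z)$ would contradict $e > e_z$, the intervals $[b,e)$ and $[b_z, e_z)$ must be disjoint; because $I$ is read in non-decreasing order of $b$ (ties broken by decreasing $e$), this forces $e_z \le b$. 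Hence $p \le b$ after every pop, and the loop at line~\ref{alg:EAL:begin-scan} then advances $p$ exactly to $b$. The main obstacle is precisely this no-overshoot invariant: everything else is routine bookkeeping over Lemma~\ref{lemma:EAL-Pi-correct}, and the only delicate subcase is when several input records share the same opening position $b$, where one must use the secondary sort criterion (decreasing $e$) to ensure the record with the largest closing position is pushed first---hence popped last---so that the disjointness argument still gives $e_z \le b$ for every record popped in one iteration.
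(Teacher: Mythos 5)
Your proof is correct and follows the same route as the paper, which simply observes that encodings are pushed onto $Z$ unchanged at only the two lines you identify and then invokes Lemma~\ref{lemma:EAL-Pi-correct}. Your argument is in fact more complete than the paper's one-sentence justification: the no-overshoot invariant $p=b$ at line~\ref{alg:EAL:pushZ}, including the disjointness argument for popped intervals and the tie-breaking subcase for records sharing an opening position, is precisely the detail the paper leaves implicit.
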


\begin{lemma}
\label{lemma:EAL-extension-correct}
Let $\FileP(\cdot, \cdot, \l_{PS})$ be the correct lists that are
the input of Algorithm~\ref{alg:Extend-Q-Intervals}, and let
$(q(\sigma PS\$), q(\$ S), q(P),
q^{r}(P^{r}), l_P, \l_{PS})$ be a generic partially extended encoding
in one of such lists.
Then Algorithm~\ref{alg:Extend-Q-Intervals}
outputs the encoding $(q(\sigma PS\$), q(S\$), q(\sigma P),
q^{r}(P^{r}\sigma ), \l_P, \l_{PS})$
if and only if
$(q(\sigma PS\$) = [b_{ps}, e_{ps}), q(S\$), q(P), q^{r}(P^{r}), \l_P,
\l_{PS})$ is an input partial encoding.
\end{lemma}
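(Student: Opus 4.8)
The plan is to establish the two implications of the equivalence separately, leaning on the counter invariants already proved — Lemma~\ref{lemma:EAL-Pi-correct} for the running $\Pi$ and Corollary~\ref{corollary:Pi-in-Z} for the $\Pi$ stored alongside each record of $Z$ — together with the extension formulae of Proposition~\ref{prop:extension-formulae}. Throughout I write the third and fourth components of a record as $[b,e)=q(P)$ and $[b',e^{r})=q^{r}(P^{r})$, and I use that $I$ is the \textbf{Merge} of the correct input lists $\FileP(\cdot,\cdot,\l_{PS})$ together with a trailing sentinel whose $q(P)$-interval lies past the end of $\FileBWT$; since $q(\sigma PS\$)\subseteq q(P)$, ordering the records of $I$ by the opening of $q(\sigma PS\$)$ also orders them by non-decreasing opening of $q(P)$ (ties inside a common $q(P)$-interval being harmless), which is what the stack logic requires.

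For the ``only if'' direction I would first observe that \textbf{CompleteExtensions} appends to an $\FileE$ list only at line~\ref{alg:EAL:extend}, and reaches that line only immediately after popping from $Z$ inside the inner \texttt{while} opened at line~\ref{alg:EAL:begin-output}. Records enter $Z$ only at line~\ref{alg:EAL:initZ} — the bottom sentinel, whose $q(P)$-interval is $[1,\infty)$ and which is therefore never popped, every record of $I$ having a finite-ended $q(P)$-interval — and at line~\ref{alg:EAL:pushZ}, which pushes unchanged the record currently read from $I$. Hence every non-sentinel record popped from $Z$ is one of the given partially extended encodings $(q(\sigma PS\$),q(\$S),q(P),q^{r}(P^{r}),\l_{P},\l_{PS})$, and it remains to verify that line~\ref{alg:EAL:extend} then emits exactly $(q(\sigma PS\$),q(S\$),q(\sigma P),q^{r}(P^{r}\sigma),\l_{P},\l_{PS})$. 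When the record is popped, the scan position $p$ has just been advanced to the closing $e$ of its interval $q(P)$ (line~\ref{alg:EAL:Pi-for-e}), so by Lemma~\ref{lemma:EAL-Pi-correct} $\Pi[c]=\Occ(c,e)$ for every $c\in\extalphabet$, while by Corollary~\ref{corollary:Pi-in-Z} the stored $\Pi_{z}$ satisfies $\Pi_{z}[c]=\Occ(c,b)$. Thus $w=\Pi(\sigma)-\Pi_{z}(\sigma)$ counts the occurrences of $\sigma$ in $\FileBWT[b:e-1]$ and $\mathit{prev}=\sum_{c<\sigma}(\Pi(c)-\Pi_{z}(c))$ counts there the characters smaller than $\sigma$, where $\sigma=C^{-1}(b_{ps})$ is the character being prepended (the first character of $LS[b_{ps}]$). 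Proposition~\ref{prop:extension-formulae} then gives that the third emitted component $[C[\sigma]+\Pi_{z}[\sigma]+1,\,C[\sigma]+\Pi_{z}[\sigma]+1+w)$ is the backward $\sigma$-extension of $q(P)$, that is $q(\sigma P)$, and — since $q(P)$ on $R$ and $q^{r}(P^{r})$ on $R^{r}$ are linked — that the fourth component $[b'+\mathit{prev},\,b'+\mathit{prev}+w)$ is the forward $\sigma$-extension $q^{r}(P^{r}\sigma)$. The remaining components are copied verbatim, so the emitted record is exactly the claimed one.

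For the ``if'' direction, let $f=(q(\sigma PS\$),q(\$S),q(P),q^{r}(P^{r}),\l_{P},\l_{PS})$ be a record of one of the input lists. Then $f$ occurs in $I$, is read by exactly one iteration of the \texttt{foreach} loop, and is pushed on $Z$ unchanged at line~\ref{alg:EAL:pushZ} of that iteration, the push being unconditional. It suffices to show $f$ is afterwards popped exactly once: since any two string-intervals are nested or disjoint and records enter $Z$ in the order of $I$, the $q(P)$-intervals on $Z$ always form a chain ordered by inclusion above the bottom sentinel $[1,\infty)$, and $f$ is popped the first time a later record whose $q(P)$-interval ends strictly beyond $e$ reaches the top of $Z$, or at the latest when the trailing sentinel of $I$ is processed and empties every non-sentinel record off $Z$. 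When $f$ is popped, the computation analysed above produces $(q(\sigma PS\$),q(S\$),q(\sigma P),q^{r}(P^{r}\sigma),\l_{P},\l_{PS})$ and appends it to $\FileE(\sigma,\l_{PS}-\l_{P},\l_{PS})$, which is what had to be shown.

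The step I expect to be the main obstacle is the fourth component: justifying that a single left-to-right scan of $\FileBWT$, via the counters $\Pi$ and $\Pi_{z}$, computes the forward $\sigma$-extension $q^{r}(P^{r}\sigma)$ on $R^{r}$ without ever scanning the BWT of $R^{r}$ — that is, that the sub-interval of the linked interval $q^{r}(P^{r})$ surviving the extension begins $\mathit{prev}$ positions after its opening $b'$ and has width $w$, which is the bidirectional-index identity behind Proposition~\ref{prop:extension-formulae}. Making this airtight forces the precise timing enforced by the stack — records are read in non-decreasing order of the opening of $q(P)$ but extended in non-decreasing order of its closing — so that $\Pi_{z}$ is frozen exactly at $\Occ(\cdot,b)$ and $\Pi$ has been advanced exactly to $\Occ(\cdot,e)$ at the moment of the extension; these two facts are precisely Corollary~\ref{corollary:Pi-in-Z} and Lemma~\ref{lemma:EAL-Pi-correct}, so the proof introduces no new invariant and the remaining work is careful bookkeeping.
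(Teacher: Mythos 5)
Your proof follows essentially the same route as the paper's: the output set is characterized by the push-once/pop-once discipline of the stack $Z$, and the correctness of the emitted intervals is reduced to Lemma~\ref{lemma:EAL-Pi-correct} and Corollary~\ref{corollary:Pi-in-Z} (giving $\Pi=\Occ(\cdot,e)$ and $\Pi_z=\Occ(\cdot,b)$ at the moment of extension) combined with Proposition~\ref{prop:extension-formulae} for the backward extension of $q(P)$ and the forward extension of the linked interval $q^{\rev}(P^{\rev})$. You are in fact more explicit than the paper on the ``pushed exactly once and extended exactly once'' claim, which the paper asserts in one line here and defers to Lemmas~\ref{lemma:EAL-stack-nested} and~\ref{lemma:correct-EAL-extension}.

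One auxiliary claim in your preamble is false: $q(\sigma PS\$)$ is \emph{not} contained in $q(P)$. Since $\sigma PS\$$ has prefix $\sigma P$, one has $q(\sigma PS\$)\subseteq q(\sigma P)$, and $q(\sigma P)$ is in general disjoint from $q(P)$ (the two intervals lie in different regions of the GSA unless $\sigma=P[1]$), so sortedness by the opening of the first component does not by itself yield sortedness by the opening of $q(P)$. This does not break your main argument, because everything downstream conditions on the input being sorted in the order the stack discipline requires, which is exactly the hypothesis the paper places on the ``correct'' lists $\FileP(\cdot,\cdot,\cdot)$ and on Lemma~\ref{lemma:EAL-Pi-correct}; but you should drop the containment claim rather than use it to derive the ordering.
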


\begin{proof}
In Algorithm~\ref{alg:Extend-Q-Intervals} each input partially
extended encoding is pushed on the stack $Z$ exactly once and extended
exactly once.
Hence we only need to prove that for each partially extended input
encoding $(q(\sigma PS\$), q(S\$), q(\sigma P),
q^{r}(P^{r}\sigma ), \l_P, \l_{PS})$, the $(\sigma P, S)$-encoding is output.

Let $[b,e)$ be equal to $q(P)$, let $[b',e^{r})$ be equal to $q^{r}(P^{r})$ and
let $\sigma$ be the symbol $C^{-1}(b_{ps})$.
Notice that the output encoding is obtained by computing $q(\sigma P)$
and $q^{r}(P^{r}\sigma )$.
By Proposition~\ref{prop:extension-formulae}, $q(\sigma P)$
is equal to $[C[\sigma] + Occ(\sigma, b) + 1, C[\sigma] + Occ(\sigma, e) )$,
which is correctly computed at line~\ref{alg:EAL:extend} (by
Lemma~\ref{lemma:EAL-Pi-correct} and Corollary~\ref{corollary:Pi-in-Z}).

Moreover,  $q^{r}(P^{r}\sigma )$ is equal to  $[b'_{1},e^{r}_{1})$, by
Proposition~\ref{prop:extension-formulae},
Lemma~\ref{lemma:EAL-Pi-correct} and Corollary~\ref{corollary:Pi-in-Z}.
Moreover, the encoding $(q(PS\$), q(S\$), [b_{1},e_{1}), [b'_{1},e^{r}_{1}),
\l_P, \l_{PS})$ is output at line~\ref{alg:EAL:extend}.
\end{proof}

Notice that the value of $p$ never decreases, since its value is modified only
by increments.
This fact implies that $\FileBWT$ is scanned sequentially.
To complete the correctness of our algorithm, we need to show that the output
follows the desired ordering.
We will start with some lemmas showing the structure of the encodings stored in
the stack $Z$.

\begin{lemma}
\label{lemma:EAL-stack-nested}
The stack $Z$
of Algorithm~\ref{alg:Extend-Q-Intervals} contains a hierarchy
of encodings $(\cdot, \cdot, [b,e), \cdot, \cdot, \cdot)$
where all intervals
$[b,e)$ are nested, with the smallest at the top.
\end{lemma}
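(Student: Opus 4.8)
The plan is to prove the statement by induction on the sequence of operations performed on the stack $Z$, showing that the nesting invariant is preserved by every \texttt{push} and \texttt{pop}. First I would observe that $Z$ is modified only at lines~\ref{alg:EAL:initZ}, \ref{alg:EAL:end-output} (a \texttt{pop} inside the output loop), and~\ref{alg:EAL:pushZ} (a \texttt{push}). After line~\ref{alg:EAL:initZ} the stack contains only the sentinel record with interval $[1,\infty)$, so the invariant holds vacuously (a single interval is trivially a nested hierarchy, and it is the widest possible interval so anything pushed on top will be contained in it). A \texttt{pop} removes the topmost element, which obviously leaves a nested hierarchy; so the only nontrivial case is a \texttt{push} at line~\ref{alg:EAL:pushZ}.

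For the \texttt{push} case, let $(q_1, q_2, [b,e), [b',e^r), \l_P, \l_{PS})$ be the encoding being pushed, and let $(\cdot, \cdot, [b_z, e_z), \cdot, \cdot, \cdot)$ be the encoding currently on top of $Z$ (which exists, since the sentinel is never popped while more input remains — I would note that the \texttt{while} loop at line~\ref{alg:EAL:begin-output} has guard $e > e_z$, and $e_z = +\infty$ for the sentinel). I need to show $[b,e) \subseteq [b_z, e_z)$. The key point is that the input list $I$ is processed in increasing order of $b$, and just before the \texttt{push} the output \texttt{while} loop at lines~\ref{alg:EAL:begin-output}--\ref{alg:EAL:end-output} has popped every element whose closing position $e_z$ is strictly less than the current $e$; hence for the element now on top we have $e_z \ge e$. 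For the left endpoint: the element on top was pushed at an earlier iteration, when the current value of $b$ was smaller (or it is the sentinel with $b_z = 1$), so $b_z \le b$. Combining, $b_z \le b < e \le e_z$, i.e. $[b,e) \subseteq [b_z, e_z)$. Since by the inductive hypothesis the rest of $Z$ (below the old top) was already a nested hierarchy with the old top as its narrowest member, prepending $[b,e)$ at the top keeps it a nested hierarchy with the new element narrowest, as claimed.

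The one subtlety — and the step I expect to require the most care — is the strictness/degeneracy of containment: I must make sure the claim ``$[b,e)$ nested in $[b_z,e_z)$'' is exactly what the lemma intends, allowing equality of intervals when two partially extended encodings share the same $PS\$$-interval, or more importantly ruling out the ``crossing'' configuration $b_z \le b < e_z < e$ that would violate nestedness. That crossing configuration is precisely what the output \texttt{while} loop at line~\ref{alg:EAL:begin-output} prevents: it keeps popping (and emitting) exactly while $e > e_z$, so upon exiting we are guaranteed $e \le e_z$, never $e_z < e$. I would also remark that this argument silently uses the fact established in the surrounding text that any two string-intervals are either nested or disjoint, which is why the combination of $b_z \le b$ and $e \le e_z$ with $[b,e)$ nonempty forces genuine containment rather than a partial overlap. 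This relies on the input lists $\FileP(\cdot,\cdot,\cdot)$ being correct and their intervals being $PS\$$-intervals (equivalently, string-intervals), as supplied by Corollary~\ref{corollary:partial-extensions-correct}.
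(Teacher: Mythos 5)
Your proof is correct and follows essentially the same route as the paper's: the only nontrivial case is the push at line~\ref{alg:EAL:pushZ}, where sortedness of the input by opening position gives $b_z \le b$ and the exit condition of the while loop at line~\ref{alg:EAL:begin-output} gives $e \le e_z$, so the pushed interval is contained in the one below it. Your additional remarks on the crossing configuration and the nested-or-disjoint property of string-intervals are a harmless elaboration of what the paper leaves implicit.
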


\begin{proof}
We only have to prove that the lemma holds at line~\ref{alg:EAL:pushZ}, since it
is the only line where an encoding is pushed on a nonempty stack.
Let $Z$ be the stack just before the push, and let $(\cdot, \cdot,
[b_{z},e_{z}), \cdot, \cdot, \cdot)$ be the encoding at the top of $Z$.

Clearly the lemma holds when $Z$ contains only the sentinel encoding pushed at
line~\ref{alg:EAL:initZ}, therefore assume that the top encoding of $Z$ is an
input encoding.

Notice that an encoding is pushed on $Z$ without modification, before reading
the next input encoding.
Since the input encodings are sorted by increasing values of $b$, then $b\ge b_{z}$.
To reach line~\ref{alg:EAL:pushZ}, the condition at
line~\ref{alg:EAL:begin-output} must be false, hence $e\le e_{z}$.
Consequently $[b,e)$ is included in $[b_{z}, e_{z})$.
\end{proof}

\begin{lemma}
\label{lemma:correct-EAL-extension}\label{lemma:extension-is-correct}
Algorithm~\ref{alg:Extend-Q-Intervals} pops all input encodings
$(\cdot, \cdot, [b,e), \cdot, \cdot, \cdot)$ in nondecreasing order of $e$, but it
does not pop the sentinel encodings.
\end{lemma}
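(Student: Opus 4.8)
The plan is to reduce the statement to three facts: the two sentinel encodings are never popped, every genuine input encoding is eventually popped, and the right endpoints of the popped intervals form a nondecreasing sequence. The first two are quick. The record pushed onto $Z$ at line~\ref{alg:EAL:initZ} carries the interval $[1,\infty)$, so the guard $e>e_z$ of the while loop at line~\ref{alg:EAL:begin-output} never holds against it and it is never popped; the sentinel appended to $I$ carries $[n+1,n+2)$, is the last record read by the main \textbf{foreach} loop, and is pushed at line~\ref{alg:EAL:pushZ} only for the loop to terminate immediately, so it too is never popped. On the other hand each genuine input encoding is pushed exactly once (line~\ref{alg:EAL:pushZ}), and because $[n+1,n+2)$ dominates every interval carried by a genuine encoding, processing the $I$-sentinel drives the while loop at line~\ref{alg:EAL:begin-output} until $Z$ holds only the $[1,\infty)$-record; hence every genuine encoding is popped.

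For the ordering I would first dispatch the easy, single-iteration case. With the current input encoding's interval fixed to $[b,e)$, Lemma~\ref{lemma:EAL-stack-nested} says the intervals on $Z$ are nested with the smallest right endpoint on top, so each run of the while loop at lines~\ref{alg:EAL:begin-output}--\ref{alg:EAL:extend} pops a contiguous prefix of $Z$ from the top, and by nesting these pops occur in nondecreasing order of right endpoint, every popped right endpoint being strictly smaller than $e$. Thus, within a single execution of the for-loop body, the pops are nondecreasing in $e$ and all bounded above by $e$.

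To glue the iterations together I would carry the invariant: whenever the while loop at line~\ref{alg:EAL:begin-output} is about to run against a current interval $[b,e)$, every interval still on $Z$ has right endpoint at least the largest right endpoint $\mu$ popped so far. Granting it, the lemma follows, since the pops of the current iteration start at the top of $Z$ (whose right endpoints all dominate $\mu$) and proceed nondecreasingly. I would prove the invariant by induction on the number of input encodings read; the only delicate case is when $[b,e)$ is nested inside the current top of $Z$, so the push at line~\ref{alg:EAL:pushZ} happens with no preceding pop and the new smallest right endpoint on $Z$ is $e$: one must check $e\ge\mu$. Let $X$, with interval $[b_X,e_X)$, be the encoding realizing $\mu=e_X$; it was popped in an earlier iteration, and since $I$ lists the intervals by nondecreasing left endpoint we have $b\ge b_X$. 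As any two string-intervals are nested or disjoint, it suffices to rule out that $[b,e)$ is \emph{properly} contained in $[b_X,e_X)$ (otherwise $[b_X,e_X)$ is disjoint from $[b,e)$, whence $e>b\ge e_X$, or is contained in it, whence $e\ge e_X$). This exclusion is the main obstacle: it amounts to the claim that once $X$ has been popped, every input encoding whose interval is nested in $[b_X,e_X)$ has already been popped, whereas $[b,e)$ is only now being read. The clean way to establish it is to notice that the algorithm effects a depth-first traversal of the laminar forest of the intervals ordered by left endpoint and pops exactly in postorder; concretely I would strengthen the induction hypothesis to record that the set of already-popped encodings is downward closed in this forest, using Lemma~\ref{lemma:EAL-stack-nested} and the sortedness of $I$ to argue that any descendant of $X$ pushed while $X$ is on $Z$ must be popped before $X$ can return to the top. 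Once the properly-contained case is excluded we obtain $e\ge\mu$, the invariant is restored, and combining it with the single-iteration claim yields the lemma.
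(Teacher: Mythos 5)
Your overall architecture is sound and close in spirit to the paper's: the sentinel bookkeeping and the completeness claim are fine, the single-iteration case via Lemma~\ref{lemma:EAL-stack-nested} is exactly the paper's first step, and for the cross-iteration ordering the paper relies on the same two ingredients you use (the list $I$ is sorted by nondecreasing left endpoint, and any two string-intervals are nested or disjoint), though it performs a direct pairwise comparison of the most recently popped interval against an arbitrary earlier pop rather than carrying an explicit stack invariant through an induction.

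The one place your argument does not close is precisely the step you flag as ``the main obstacle'': ruling out that the newly read $[b,e)$ is properly contained in the already-popped $[b_X,e_X)$ with $e_X=\mu$. The justification you sketch --- that any descendant of $X$ \emph{pushed while $X$ is on $Z$} must be popped before $X$ returns to the top --- only covers descendants that have already been read; the dangerous scenario is a descendant of $X$ occurring in $I$ \emph{after} the iteration in which $X$ was popped, and nothing in your sketch excludes it, so the claimed downward-closure invariant is not actually established. Fortunately the exclusion is short and needs none of the postorder machinery. Let $g$, with interval $[b_g,e_g)$, be the encoding whose reading triggered the pop of $X$; then $e_g>e_X$ by the guard at line~\ref{alg:EAL:begin-output}, and $b_X\le b_g$ because $X$ was already on the stack when $g$ was read. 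Laminarity forces $[b_g,e_g)$ and $[b_X,e_X)$ to be disjoint: containment of $[b_g,e_g)$ in $[b_X,e_X)$ contradicts $e_g>e_X$, and containment the other way forces $b_g=b_X$, which is excluded by the secondary sort criterion (equal left endpoints ordered by decreasing right endpoint) since $X$ precedes $g$ in $I$. Hence $b_g\ge e_X$, and every encoding read after $g$ --- in particular the current one --- has $b\ge b_g\ge e_X$, so it cannot satisfy $b<e\le e_X$. With this observation inserted, your invariant is maintained and the rest of your argument goes through.
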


\begin{proof}
First, we will consider a single generic iteration of the while loop at lines~\ref{alg:EAL:begin-output}--\ref{alg:EAL:extend}.
By Lemma~\ref{lemma:EAL-stack-nested} the intervals in  $Z$ are nested,
therefore the intervals popped in a single iteration satisfy the lemma.

We can consider the intervals popped in different iterations.
Let $f_{1}=(\cdot, \cdot, [b_{1},e_{1}), \cdot, \cdot, \cdot)$ be the most recently popped
encoding, and let $f_{z}=(\cdot, \cdot, [b_{z},e_{z}), \cdot, \cdot, \cdot)$ be a generic
interval that has been popped
from $Z$ in a previous iteration; we will show that $e_{z}\le e_{1}$.
Moreover let $f=(\cdot, \cdot, [b,e), \cdot, \cdot, \cdot)$ be the
encoding read from $I$ at the iteration when  $f_{1}$ has been popped from $Z$.
By construction, $f_{z}$ precedes $f$ (which precedes $f_{1}$) in $I$.
Since the intervals in $I$ are in non-decreasing order of the start boundary,
$b_{z}\le b\le b_{1}$.
Moreover, the condition at line~\ref{alg:EAL:begin-output} that determines
when to pop an encoding, implies that $e>e_{z}$.
All string-intervals in $I$ are disjoint or nested, therefore $e \le
b_{1}$ or $e\ge e_{1}$.
If $e\le b_{1}$, then $e\le e_{1}$ and, a fortiori since $e_{z}<e$, then $e_{z}\le
e_{1}$.
Hence we only need to consider the case when $[b,e)$ includes $[b_{1}, e_{1})$,
that is $e \ge e_{1}$.
Now, let us consider the intervals $[b,e)$ and $[b_{z}, e_{z})$.
Since $e_{z}<e$ and $b_{z}\le b$, those intervals cannot be nested, hence
$e_{z}\le b$.
Since $b\le b_{1}$ and $b_{1}<e_{1}$, then $e_{z}\le e_{1}$.

Finally, we want to prove that all intervals in $I$, except for the sentinel
intervals, are popped from $Z$ (and backward extended).
Just after reading from $I$ the sentinel $(\cdot, \cdot, [n+1, n+2),
\cdot, \cdot, \cdot)$, all intervals in $Z$, but not the starting sentinel,
satisfy the condition at line~\ref{alg:EAL:begin-output}, completing the proof.
\end{proof}

\begin{corollary}
\label{corollary:extensions-correct}
The lists $\FileE(\sigma, \l_{P}, \l_{PS})$ are correct.
\end{corollary}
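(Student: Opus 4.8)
The goal is to prove Corollary~\ref{corollary:extensions-correct}: the lists $\FileE(\sigma, \l_{P}, \l_{PS})$ produced by \textbf{CompleteExtensions} are correct. Recall that a list $\FileE(\sigma, \l_{S}, \l_{PS})$ is correct when it contains exactly the $(\sigma' P, S)$-encodings with $\sigma$ the first character of $\sigma' P S$ (here $\sigma' P$ has length $\l_P = \l_{PS} - \l_S$, and $|PS\sigma'| = \l_{PS}$) and the encodings are sorted by increasing start boundary of their first interval. So the proof naturally splits into two parts: (i) the set of encodings is exactly right — nothing missing, nothing spurious, each with the correct six-tuple of data; and (ii) the ordering within each list is as required.

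For part (i), the plan is to combine the lemmas already established in the subsection. Lemma~\ref{lemma:EAL-extension-correct} gives the biconditional: an encoding $(q(\sigma P S\$), q(S\$), q(\sigma P), q^{r}(P^{r}\sigma), \l_P, \l_{PS})$ is output if and only if $(q(\sigma P S\$), q(S\$), q(P), q^{r}(P^{r}), \l_P, \l_{PS})$ is an input partial encoding. Since the input lists $\FileP(\cdot,\cdot,\cdot)$ are assumed correct (this is the hypothesis of Corollary~\ref{corollary:partial-extensions-correct}, which establishes exactly that \textbf{ExtendEncodings} produces correct $\FileP$ lists, so the invariant propagates through the main loop of \textbf{OverlapGraph}), the set of input partial encodings is precisely the set of partially extended $(\sigma P, S)$-encodings with $|P| = \l_P$. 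Applying the backward $\sigma$-extension to $q(P)$ and the forward $\sigma$-extension to $q^{r}(P^{r})$ — which is what lines~\ref{alg:EAL:extbegin}--\ref{alg:EAL:extend} compute, correctly, by Proposition~\ref{prop:extension-formulae} together with Lemma~\ref{lemma:EAL-Pi-correct} and Corollary~\ref{corollary:Pi-in-Z} — turns each such encoding into the full $(\sigma P, S)$-encoding $(q(\sigma P S\$), q(S\$), q(\sigma P), q^{r}(P^{r}\sigma), \l_P+1, \l_{PS})$. One also has to check the file-destination bookkeeping: the encoding is appended to $\FileE(C^{-1}(b_{ps}), \l_{psz} - \l_{pz}, \l_{psz})$, and since $b_{ps}$ is the start of $q(\sigma P S\$)$, $C^{-1}(b_{ps})$ is indeed the first character of $\sigma P S$, and $\l_{psz} - \l_{pz} = |S|$, so the encoding lands in the intended list. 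This settles membership.

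For part (ii), the ordering, the plan is to invoke Lemma~\ref{lemma:correct-EAL-extension} (equivalently Lemma~\ref{lemma:extension-is-correct}): all non-sentinel input encodings are popped in nondecreasing order of $e$ (the end boundary of the $PS\$$-interval $[b,e)$ carried in the third... wait, the interval being tracked in $Z$ — the one that controls popping — is the $q(\sigma P S\$)$ interval). Since each encoding is appended to its $\FileE$ list at the moment it is popped (line~\ref{alg:EAL:extend}), the encodings within any single list come out in nondecreasing order of the end boundary of their $q(\sigma P S\$)$ interval. Now I would argue, as in the observation following Lemma~\ref{lemma:1-list-disjoint-intervals} and again in Lemma~\ref{lemma:sorted-extend-arc}, that within a fixed list $\FileE(\sigma, \l_S, \l_{PS})$ all the strings $\sigma P S$ have the same length $\l_{PS}$ (and all start with $\sigma$), so their $q(\sigma P S\$)$-intervals are pairwise non-nested, hence pairwise disjoint (invoking Proposition~\ref{proposition:test-prefix} or simply the nested-or-disjoint dichotomy plus equal length), and for disjoint intervals ordering by end boundary coincides with ordering by start boundary. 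Therefore the list is sorted by increasing start boundary, as required. A small point to nail down: disjointness needs $\sigma P_1 S_1 \neq \sigma P_2 S_2$ for distinct encodings, which holds because two encodings with the same $PS\$$-interval would be the same encoding.

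The main obstacle, and the only place requiring care, is making the ordering argument airtight: one must be sure that the quantity governing the pop order in Lemma~\ref{lemma:correct-EAL-extension} is the end boundary $e$ of the same interval $q(\sigma P S\$)$ that also determines the start boundary used for the correctness of the next iteration of \textbf{ExtendEncodings}, and that "same length string $\Rightarrow$ non-nested intervals $\Rightarrow$ disjoint $\Rightarrow$ end-order equals start-order" chain is applied to the output intervals rather than to the label intervals $q(\sigma P)$ (which may well be nested across different encodings). Everything else is a straightforward assembly of the preceding lemmas, so the corollary follows by collecting: Lemma~\ref{lemma:EAL-extension-correct} for the exact membership, Lemma~\ref{lemma:EAL-Pi-correct}, Corollary~\ref{corollary:Pi-in-Z} and Proposition~\ref{prop:extension-formulae} for the correctness of the computed interval data, and Lemma~\ref{lemma:correct-EAL-extension} together with the equal-length/disjointness observation for the sort order.
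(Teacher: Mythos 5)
Your proposal is correct and follows essentially the same route as the paper, which dispatches the corollary as a direct consequence of Corollary~\ref{corollary:partial-extensions-correct} and Lemma~\ref{lemma:correct-EAL-extension}; you simply spell out the assembly in more detail, including the end-order-to-start-order conversion via disjointness of equal-length intervals that the paper states as an observation just before Lemma~\ref{lemma:pi-meaning}. No gaps.
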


\begin{proof}
It is a direct consequence of
Corollary~\ref{corollary:partial-extensions-correct} and Lemma~\ref{lemma:correct-EAL-extension}.
\end{proof}

\begin{corollary}
\label{corollary:alg:EAI:correct}
Algorithm~\ref{alg:OG} correctly computes the arcs of the overlap graph $G_{O}$.
\end{corollary}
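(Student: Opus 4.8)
The plan is to obtain Corollary~\ref{corollary:alg:EAI:correct} essentially as a bookkeeping argument on top of the correctness statements already proved for the three subroutines; the only genuinely new ingredient is an induction on the loop counter $\l_{P}$ of Algorithm~\ref{alg:OG} carrying the invariant ``all relevant encoding lists are correct''. Precisely, I will show that just before the call \textbf{ExtendEncodings}$(\l_{P})$, every list $\FileE(\sigma,\l_{S},\l_{PS})$ with $\l_{PS}-\l_{S}=\l_{P}$ is correct, \ie it contains exactly the $(P,S)$-encodings with $|S|=\l_{S}$, $|PS|=\l_{PS}$, and $\sigma$ the first symbol of $PS$, listed by increasing opening position of $q(PS\$)$.

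The base case $\l_{P}=0$ is exactly Corollary~\ref{corollary:basic-encodings-correct}: \textbf{BuildBasicArcIntervals} outputs precisely the basic encodings of the seeds of $R$, correctly sorted (which itself rests on Lemmas~\ref{lemma:all-seeds-are-found} and~\ref{lemma:only-seeds-are-found}). For the inductive step, assuming the invariant for $\l_{P}=t$, I will note that \textbf{ExtendEncodings}$(t)$ reads, through \textbf{Merge}, exactly these correct lists; Corollary~\ref{corollary:partial-extensions-correct} then says the lists $\FileP(\cdot,\cdot,\cdot)$ it writes are correct, while Lemma~\ref{lemma:sorted-extend-arc} says the intervals inside each $\FileP$ list are pairwise disjoint and appear in increasing order of their closing position; for disjoint intervals this is the same as increasing order of the opening position, which is exactly the input condition of \textbf{CompleteExtensions}. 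Applying Corollary~\ref{corollary:extensions-correct} to \textbf{CompleteExtensions}$(t)$ then produces the correct lists $\FileE(\cdot,\cdot,\cdot)$ with $|P|=t+1$, reestablishing the invariant for $\l_{P}=t+1$.

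Next I will turn correctness of the inputs into correctness of the emitted arcs. Since every call \textbf{ExtendEncodings}$(\l_{P})$ is fed correct lists, Lemma~\ref{lemma:terminal-arc} applies and tells us that this call appends $(r_{j},r_{i})$ to $\FileArc(\l_{P},i)$ if and only if some read $r_{j}=PS$ with $P,S$ both nonempty overlaps some read $r_{i}$ whose prefix is $S$, recording $q^{\rev}(P^{\rev})$ and $|P|$, \ie a representation of the left extension $lx_{j,i}=P$. I will observe that the $(P,S)$-encoding of such an arc is unique and is processed exactly once over the whole run (produced once by the induction, consumed once by \textbf{ExtendEncodings}$(|P|)$), and that inside that processing exactly one arc is emitted per $r_{i}\in q(\$S)=R^{p}(S)$, so each arc is emitted exactly once. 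Hence, over all iterations, the emitted arcs are exactly the ordered pairs $(r_{j},r_{i})$ with $r_{j}=PS$, $P,S\neq\epsilon$, and $S$ a prefix of $r_{i}$ --- which, by Definition~\ref{definition:overlap-graph}, are precisely the arcs of $G_{O}$ with a nonempty left extension, each carrying the prescribed label, and no spurious arc being produced. To close the argument I will check loop coverage: for such an arc $S$ is automatically a seed (it is a proper substring of $r_{j}$ and both $q(S\$)$ and $q(\$S)$ are nonempty), so its basic encoding is produced by \textbf{BuildBasicArcIntervals}, its $(P,S)$-encoding is produced by the induction, and since $|P|=|r_{j}|-|S|\le \l_{max}-1$, that encoding is consumed by one of the calls \textbf{ExtendEncodings}$(\l_{P})$ made by the main loop.

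I expect the main obstacle to lie in the ordering reconciliation inside the induction rather than in anything about the arcs themselves: \textbf{CompleteExtensions} requires its input sorted by opening position (secondarily by decreasing closing position), whereas \textbf{ExtendEncodings} naturally emits lists sorted by closing position, and the bridge is precisely that all intervals occurring in a single list are pairwise disjoint --- so the two orderings agree and the secondary criterion never triggers. This disjointness has to be threaded through every iteration, and it is also what makes the counters $\Pi$ and $\pi$ in both procedures advance monotonically along $\FileBWT$, enabling the single-pass scans. The one point where the statement is delicate at the level of definitions is the degenerate case of an empty left extension ($r_{j}$ being a prefix of $r_{i}$): such arcs correspond to $\l_{P}=0$, for which the guard $\l_{P}>0$ suppresses any output; this matches the standard convention that reads contained in other reads are removed before building the overlap graph, and is the only place where one might want to attach a caveat to the statement.
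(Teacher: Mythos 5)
Your proof takes essentially the same route as the paper's: the paper's own proof is a one-line chaining of Corollary~\ref{corollary:basic-encodings-correct}, Corollary~\ref{corollary:partial-extensions-correct}, Corollary~\ref{corollary:extensions-correct} and Lemma~\ref{lemma:terminal-arc}, and your explicit induction on $\l_{P}$ is just the careful version of that chaining (indeed more careful than the paper, whose proof text mis-attributes Lemma~\ref{lemma:terminal-arc} to \textbf{CompleteExtensions} and cites the I/O-complexity Corollary~\ref{corollary:IO-records} for arc correctness). The only small discrepancy is your closing claim that every encoding with $|P|\le \l_{max}-1$ is consumed: the main loop as written stops at $\l_{P}=\l_{max}-2$, but that is an off-by-one in the paper's pseudocode rather than a flaw in your argument.
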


\begin{proof}
It is a direct consequence of
Corollary~\ref{corollary:basic-encodings-correct} (which shows the
correctness of Algorithm~\ref{alg:BAI} to compute the basic
encodings), Corollary~\ref{corollary:partial-extensions-correct}  (which shows the
correctness of Algorithm~\ref{alg:EAI} to compute the partial
extensions of a set of encodings), Lemma~\ref{lemma:terminal-arc}
(which shows the correctness of Algorithm~\ref{alg:Extend-Q-Intervals} to complete
the extension of a set of partially extended encodings), and finally
Corollary~\ref{corollary:IO-records} which shows that the arcs of the
overlap graph $G_{O}$ are correctly output.
\end{proof}

Algorithm~\ref{alg:BAI} scans only once the lists \FileBWT, \FileSA,
\FileLCP (hence reading $3n$ records, with $n=|\FileBWT|$)
and outputs at most $n$ records, by Lemma~\ref{lemma:1-seed-opening-position}.

Each execution of
Algorithm~\ref{alg:EAI} scans only once the lists \FileBWT, \FileSA,
\FileLCP (hence reading $3n$ records)
as well as the lists $\FileE(\cdot,\cdot,\l_{PS})$ containing the
$(P,S)$-encodings with $|PS| = \l_{PS}$.
Let us now consider the $(P,S)$-encodings that are read during a
single execution of Algorithm~\ref{alg:EAI}, and notice that the
corresponding $(P,S)$-intervals are disjoint, since the length of
$|PS|$ is always equal to $\l_{PS}$.
This fact implies that at most $n$ $(P,S)$-encodings are read and at
most $n$ $(P,S)$-encodings are output.

The analysis of Algorithm~\ref{alg:Extend-Q-Intervals}  is similar to
that of Algorithm~\ref{alg:EAI}.
The only difference is that
Algorithm~\ref{alg:Extend-Q-Intervals} scans once the list \FileBWT, as well as
the lists $\FileE(\cdot,\cdot,\l_{PS})$.
The consequence is that Algorithm~\ref{alg:Extend-Q-Intervals}  reads at most $2n$ records
and writes at most $n$ records.

Since Algorithms~\ref{alg:EAI}
and~Algorithm~\ref{alg:Extend-Q-Intervals} are called at most $l$
times, the overall number of records that are read is at most $3n + 6ln$.
Notice that this I/O complexity matches the one of
BCRext~\cite{bauer_lightweight_2013}, which is the most-efficient
known external-memory algorithm to compute the data structures (GSA,
BWT, LCP) we use to index the input reads.

\begin{corollary}
\label{corollary:IO-records}
Given the lists \FileBWT, \FileSA, \FileLCP, it is possible to compute the
overlap graph of a set of reads $R$ with total length $n$ and where no read is
longer than $l$ characters, reading sequentially $(3+ 6l)n$ records.
\end{corollary}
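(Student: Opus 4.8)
The plan is to read this corollary as a bookkeeping statement: the correctness of the whole construction is already in hand, so only the count of sequentially read records is left to verify. First I would invoke Corollary~\ref{corollary:alg:EAI:correct}, which states that Algorithm~\ref{alg:OG} correctly outputs the arcs of $G_{O}$ (its vertex set being trivially $R$). Hence it suffices to sum, over a full run of Algorithm~\ref{alg:OG}, the number of records read from the input lists \FileBWT, \FileSA, \FileLCP and from the intermediate lists $\FileE(\cdot,\cdot,\cdot)$ and $\FileP(\cdot,\cdot,\cdot)$.

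Next I would account for the preprocessing. The single call to \textbf{BuildBasicArcIntervals} (Algorithm~\ref{alg:BAI}) scans each of \FileBWT, \FileLCP, \FileSA exactly once, for $3n$ records read; by Corollary~\ref{corollary:basic-encodings-correct} it writes precisely the $(\epsilon,S)$-encoding of each seed $S$, and by Lemma~\ref{lemma:1-seed-opening-position} together with Corollary~\ref{corollary:seed-has-one-opening} distinct seeds have distinct opening positions, so at most $n$ records are written. These output records are exactly the lists $\FileE(\cdot,\l_{PS},\l_{PS})$ that feed the main loop.

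Then I would bound a single iteration of the main loop, for a fixed value $\l_{P}$. \textbf{ExtendEncodings} (Algorithm~\ref{alg:EAI}) scans \FileBWT, \FileSA, \FileLCP once ($3n$ records) and, via \textbf{Merge}, the lists $\FileE(\cdot,\l_{PS}-\l_{P},\l_{PS})$; the key point is that every $PS\$$-interval processed in this call has the same length $|PS|=\l_{PS}$, so by the nested-or-disjoint dichotomy the corresponding intervals are pairwise disjoint and there are at most $n$ of them, whence \textbf{ExtendEncodings} reads at most $4n$ records and writes at most $n$ (the lists $\FileP(\cdot,\cdot,\l_{PS}+1)$). The ensuing call to \textbf{CompleteExtensions} (Algorithm~\ref{alg:Extend-Q-Intervals}) scans \FileBWT once ($n$ records) plus those at-most-$n$ partially extended encodings, so it reads at most $2n$ records and writes at most $n$. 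Thus one iteration of the loop reads at most $6n$ records.

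Finally I would combine the pieces: the loop runs $\l_{P}$ from $0$ to $l-2$, hence at most $l$ times, contributing at most $6ln$ records read, and adding the $3n$ records read by \textbf{BuildBasicArcIntervals} yields $(3+6l)n$, as claimed. The step I expect to require the most care is the disjointness argument that caps the size of each list $\FileE(\cdot,\cdot,\l_{PS})$ at $n$: this is precisely what keeps the number of passes linear in $l$ rather than in $\sum_i |r_i|$ (the blow-up noted for the naive adaptation of~\cite{Simpson2010}), and it leans on fixing $|PS|$ within a level together with the structural facts of Lemma~\ref{lemma:1-list-disjoint-intervals} and the observation following Corollary~\ref{corollary:basic-encodings-correct}, which also guarantee that the intervals in each list are consistently ordered so that \textbf{Merge} and the sequential scan of \FileBWT remain valid across iterations.
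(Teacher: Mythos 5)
Your proposal is correct and follows essentially the same route as the paper: a $3n$-record scan for \textbf{BuildBasicArcIntervals}, then per iteration of the main loop $4n$ records read by \textbf{ExtendEncodings} and $2n$ by \textbf{CompleteExtensions}, with the number of encodings per level capped at $n$ by the disjointness of the $PS\$$-intervals of fixed length $\l_{PS}$, and the loop executed at most $l$ times, giving $(3+6l)n$. The only difference is presentational: you explicitly invoke Corollary~\ref{corollary:alg:EAI:correct} for correctness, whereas the paper folds the counting into the surrounding discussion and lets the correctness corollary cite this one.
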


\section{Reducing the overlap graph to a string graph}

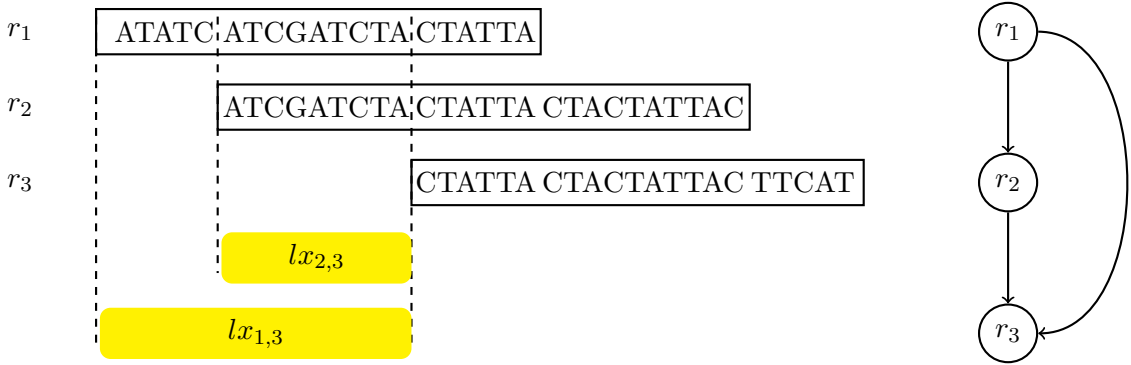
\begin{figure}[htb]
\centering
\begin{tikzpicture}[thick]
\node (r1) at (-1,2) {$r_{1}$};
\node (r2) at (-1,1) {$r_{2}$};
\node (r3) at (-1,0) {$r_{3}$};
\node (r11) at (0.9,2) {ATATC};
\node (r12) at (2.9,2) {ATCGATCTA};
\node (r13) at (5.0,2) {CTATTA};
\node (r21) at (2.9,1) {ATCGATCTA};
\node (r22) at (5.0,1) {CTATTA};
\node (r23) at (7.2,1) {CTACTATTAC};
\node (r31) at (5.0,0) {CTATTA};
\node (r32) at (7.2,0) {CTACTATTAC};
\node (r33) at (9.3,0) {TTCAT};
\draw (-0.0, 2.3) rectangle (5.85,1.7);
\draw (1.6, 1.3) rectangle (8.6,0.7);
\draw (4.15, 0.3) rectangle (10.1,-0.3);
\draw (0.0,2.2) edge [dashed] (0.0,-2.2);
\draw (1.6,2.2) edge [dashed] (1.6,-1.2);
\draw (4.15,2.2) edge [dashed] (4.15,-2.2);

\node (pe23) at (2.9,-1) [text width=2.20cm,rounded
corners,fill=yellow,inner sep=0.15cm, align=center]  {$lx_{2,3}$};
\node (pe13) at (2.1,-2) [text width=3.80cm,rounded
corners,fill=yellow,inner sep=0.15cm, align=center]  {$lx_{1,3}$};

\node (rr1) at (12,2) [shape=circle,draw] {$r_{1}$};
\node (rr2) at (12,0) [shape=circle,draw] {$r_{2}$};
\node (rr3) at (12,-2) [shape=circle,draw] {$r_{3}$};
\draw [->] (rr1) -- (rr2);
\draw [->] (rr2) -- (rr3);
\draw [->] (rr1) to [out=0,in=0] (rr3);
\end{tikzpicture}
\caption{Example of reducible arc of the overlap graph.
The read $r_{1}$ is equal to ATATCATCGATCTACTATTA, while the read $r_{2}$ is
equal to ATCGATCTACTATTACTACTATTAC and the read $r_{3}$ is CTATTACTACTATTACTTCAT.
The associated overlap graph is on the right.
The arc   $(r_{1},r_{3})$   is reducible.
}
\label{fig:arc-reduce}
\end{figure}

In this section we state a  characterization of string graphs based on
the notion of string-interval, then we will exploit such
characterization to reduce the overlap graph.

\begin{lemma}
\label{lemma:reducible-if-prefix-interval-is-suffix}
Let $G_{O}$ be the labeled overlap graph for a substring-free set $R$ of reads and let
$(r_{i},r_{j})$ be an arc of $G_{O}$.
Then, $(r_{i},r_{j})$ is reducible iff there exists another
arc $(r_{h},r_{j})$ of $G_{O}$ incoming in $r_{j}$ and such that
$lx^{\rev}_{h,j}$ is a proper prefix of $lx^{\rev}_{i,j}$.
\end{lemma}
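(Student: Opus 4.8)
The plan is to unpack the definition of reducibility in terms of assembly paths and translate the condition ``there is another path from $r_i$ to $r_j$ with the same assembly'' into a statement about a single intermediate read. Recall that, since we label arcs with left extensions, the assembly of the arc $(r_i,r_j)$ is $lx_{i,j}\, r_j$. For the $(\Leftarrow)$ direction, suppose there is an arc $(r_h,r_j)$ with $lx^{\rev}_{h,j}$ a proper prefix of $lx^{\rev}_{i,j}$; equivalently $lx_{h,j}$ is a proper suffix of $lx_{i,j}$, so $lx_{i,j} = W\, lx_{h,j}$ for some nonempty string $W$. I would then show that $(r_i,r_h)$ is an arc of $G_O$: since $r_i = lx_{i,j}\, ov_{i,j} = W\, lx_{h,j}\, ov_{i,j}$ and $r_h = lx_{h,j}\, ov_{h,j}$, and $ov_{h,j}$ is a prefix of $r_j$ which contains $ov_{i,j}$ as a prefix (both are prefixes of $r_j$, and $|ov_{h,j}| = |r_h| - |lx_{h,j}| > |r_i| - |lx_{i,j}| = |ov_{i,j}|$ fails in general — here is where I must be careful), so I need to argue the overlap of $r_i$ with $r_h$ is exactly $lx_{h,j}\, ov_{i,j}$, a nonempty suffix of $r_i$ that is a prefix of $r_h$. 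Then the path $r_i, r_h, r_j$ has assembly $lx_{i,h}\, lx_{h,j}\, r_j$, and one checks $lx_{i,h} = W$, so the assembly equals $W\, lx_{h,j}\, r_j = lx_{i,j}\, r_j$, making $(r_i,r_j)$ reducible.

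For the $(\Rightarrow)$ direction, assume $(r_i,r_j)$ is reducible, so there is a path $r_i = r_{i_1}, r_{i_2}, \dots, r_{i_k} = r_j$ with $k \ge 3$ whose assembly $lx_{i_1,i_2}\cdots lx_{i_{k-1},i_k}\, r_j$ equals $lx_{i,j}\, r_j$. Taking $h = i_{k-1}$, the arc $(r_h, r_j)$ is in $G_O$, and comparing the two assemblies (both end in $r_j$) gives $lx_{i_1,i_2}\cdots lx_{i_{k-2},i_{k-1}}\, lx_{i_{k-1},j} = lx_{i,j}$, so $lx_{h,j} = lx_{i_{k-1},j}$ is a suffix of $lx_{i,j}$; it is a \emph{proper} suffix because the prefix $lx_{i_1,i_2}\cdots lx_{i_{k-2},i_{k-1}}$ is nonempty (each arc in a path has a nonempty left extension, since $R$ is substring-free so no read is a prefix of another, hence $ov_{i_1,i_2} \ne r_{i_1}$, i.e. $lx_{i_1,i_2}$ is nonempty). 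Reversing, $lx^{\rev}_{h,j}$ is a proper prefix of $lx^{\rev}_{i,j}$, as required.

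The main obstacle I anticipate is the bookkeeping in the $(\Leftarrow)$ direction showing that $(r_i,r_h)$ is genuinely an arc with the right label, and more generally making precise the interplay between ``$lx_{h,j}$ is a proper suffix of $lx_{i,j}$'' and the existence of the overlap between $r_i$ and $r_h$. The substring-free hypothesis on $R$ is what rules out degenerate cases (e.g. $ov_{h,j}$ containing all of $r_i$, or left extensions being empty), and I would lean on Lemma~\ref{lemma:prefix-interval-path} to handle the assembly of the length-$3$ path cleanly rather than manipulating overlaps and right extensions directly. I expect the $(\Rightarrow)$ direction to be comparatively routine once the nonemptiness of left extensions along a path is established from substring-freeness.
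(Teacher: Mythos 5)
Your proposal is correct and follows essentially the same route as the paper's proof: both directions hinge on translating the reverse-prefix condition into ``$lx_{h,j}$ is a proper suffix of $lx_{i,j}$'', taking the last intermediate read of the alternate path for the forward implication, and building the three-vertex path $r_{i}, r_{h}, r_{j}$ for the converse (with substring-freeness forcing $ov_{i,j}$ to be a prefix of $ov_{h,j}$, so that $(r_{i},r_{h})$ is indeed an arc and $lx_{i,h}lx_{h,j}=lx_{i,j}$ --- exactly the point you flagged as needing care, and which the paper also settles by the substring-free hypothesis). The only cosmetic difference is in how properness is obtained in the forward direction: you cancel $r_{j}$ from the two assemblies and use that every left extension along a path is nonempty, whereas the paper argues that $lx_{h,j}=lx_{i,j}$ would make $r_{i}$ a substring of the last intermediate read; both arguments rest on the same hypothesis and are equally valid.
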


\begin{proof}
First notice that $lx^{\rev}_{h,j}$ is a proper prefix of $lx^{\rev}_{i,j}$ iff
and only if $lx_{h,j}$ is a proper suffix of $lx_{i,j}$.

By definition, $(r_{i},r_{j})$ is reducible if and only if there exists a second
path $r_{i}, r_{h_{1}}, \ldots , r_{h_{k}}, r_{j}$ representing the
string $XYZ$, where $X$, $Y$ and $Z$ are respectively the left
extension of $r_{j}$ with $r_{i}$, the
overlap of $r_{i}$ and $r_{j}$, and the right extension of $r_{i}$ with $r_{j}$.

Assume that such a path $(r_{i}, r_{h_{1}}, \ldots , r_{h_{k}}, r_{j})$
exists.
Since $r_{i}, r_{h_{1}}, \ldots , r_{h_{k}}, r_{j}$ represents $XYZ$ and
$Z=rx_{i,j}$, $r_{h_{k}}=X_{1}YZ_{1}$ where $X_{1}$ is a suffix of $X$ and
$Z_{1}$ is a proper prefix of $Z$. Notice that $X_{1}=lx_{h_{k}, j}$ and $R$ is
substring free, hence $X_{1}$ is a proper suffix of $X$, otherwise $r_{i}$ would
be a substring of $r_{h_{k}}$, completing this direction of the proof.

Assume now that there exists an arc $(r_{h},r_{j})$ such that $lx_{h,j}$ is a
proper suffix of $lx_{i,j}$. Again, $r_{h}=X_{1}Y_{1}Z_{1}$ where $X_{1}$,
$Y_{1}$ and $Z_{1}$ are respectively the left
extension of $r_{j}$ with $r_{h}$, the
overlap of $r_{h}$ and $r_{j}$, and the right extension of $r_{h}$ with $r_{j}$.
By hypothesis, $X_{1}$ is a proper suffix of
$X$. Since $r_{h}$ is not a substring of $r_{i}$, the fact that $X_{1}$ is a
suffix of $X$ implies that $Y$ is a substring of $Y_{1}$, therefore $r_{i}$ and
$r_{h}$ overlap and $|ov_{i, h}|\ge |Y|$, hence $(r_{i}, r_{h})$ is an
arc of $G_{O}$.

The string associated to the path $r_{i}, r_{h}, r_{j}$ is
$r_{i} rx_{i, h} rx_{h, j}$. By Lemma~\ref{lemma:prefix-interval-path},
$r_{i} rx_{i, h} rx_{h, j}= lx_{i, h} lx_{h, j} r_{j}$. At the same time the
string associated to the path $r_{i}, r_{j}$ is
$r_{i} rx_{i, j} = lx_{i, j} r_{j}$ by Lemma~\ref{lemma:prefix-interval-path},
hence it suffices to prove that $lx_{i, h} lx_{h, j}=lx_{i, j}$. Since
$lx_{h,j}$ is a proper suffix of $lx_{i,j}$, by definition of left extension,
$lx_{i, h} lx_{h, j}=lx_{i, j}$, completing the proof.
\end{proof}

Since the
encoding of an arc $(r_{h}, r_{j})$ contains both $q(lx^{\rev}_{h,j})$ and
$|lx^{\rev}_{h,j}|$, we can transform
Lemma~\ref{lemma:reducible-if-prefix-interval-is-suffix}
into an easily testable property, by way of Proposition~\ref{proposition:test-prefix}.
The following Lemma~\ref{lemma:string-graph-arc-reduces} shows that, if $(r_{x},
r_{z})$ can be reduced, then it can be reduced by an arc of the string
graph $G_{S}$, hence avoiding a comparison between all pairs of arcs
of $G_{O}$.

\begin{lemma}
\label{lemma:string-graph-arc-reduces}
Let $G_{O}$ be the overlap graph of a string $R$ of reads, let $G_{S}$
be the corresponding string graph,
and let  $(r_{x}, r_{z})$ be an arc of $G_{O}$ that is not an arc of $G_{S}$.
Then there exists an arc $(r_{s}, r_{z})$ of $G_{S}$ such that
$q^{r}(lx^{r}_{s,z})$ includes
$q^{r}(lx^{r}_{x,z})$ and $|lx_{x,z}| > |lx_{s,z}|$.
\end{lemma}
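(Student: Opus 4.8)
The plan is to combine Lemma~\ref{lemma:reducible-if-prefix-interval-is-suffix} with a minimality argument over incoming arcs, and then translate the resulting ``proper suffix'' relation into the claimed containment of reversed-left-extension intervals via Proposition~\ref{proposition:test-prefix}. First I would observe that since $(r_{x},r_{z})$ is an arc of $G_{O}$ but not of $G_{S}$, it is reducible, so by Lemma~\ref{lemma:reducible-if-prefix-interval-is-suffix} there is an incoming arc $(r_{h},r_{z})$ with $lx_{h,z}$ a proper suffix of $lx_{x,z}$ (equivalently $lx^{\rev}_{h,z}$ a proper prefix of $lx^{\rev}_{x,z}$). Among all arcs $(r_{s},r_{z})$ incoming in $r_{z}$ with $lx_{s,z}$ a proper suffix of $lx_{x,z}$, pick one, say $(r_{s},r_{z})$, whose left extension $lx_{s,z}$ is \emph{shortest} (equivalently, whose $lx^{\rev}_{s,z}$ is shortest).

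The key step is to show that this minimal $(r_{s},r_{z})$ is not reducible, hence belongs to $G_{S}$. Suppose it were reducible; then Lemma~\ref{lemma:reducible-if-prefix-interval-is-suffix} gives an incoming arc $(r_{t},r_{z})$ with $lx_{t,z}$ a proper suffix of $lx_{s,z}$. But a proper suffix of $lx_{s,z}$ is in turn a proper suffix of $lx_{x,z}$ (suffix-of-suffix), so $(r_{t},r_{z})$ is an incoming arc with $lx_{t,z}$ a proper suffix of $lx_{x,z}$ and $|lx_{t,z}| < |lx_{s,z}|$, contradicting minimality. Hence $(r_{s},r_{z})$ is an arc of $G_{S}$, and by construction $|lx_{s,z}| < |lx_{x,z}|$, which gives the length inequality $|lx_{x,z}| > |lx_{s,z}|$.

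It remains to derive the interval containment. Since $lx^{\rev}_{s,z}$ is a proper prefix of $lx^{\rev}_{x,z}$ and the $lx^{\rev}_{x,z}$-interval $q^{r}(lx^{\rev}_{x,z})$ on $R^{\rev}$ is nonempty (it is the reversed-left-extension interval recorded in the encoding of the existing arc $(r_{x},r_{z})$), Proposition~\ref{proposition:test-prefix} applies with $S_{1}=lx^{\rev}_{s,z}$ and $S_{2}=lx^{\rev}_{x,z}$: the $lx^{\rev}_{s,z}$-interval contains the $lx^{\rev}_{x,z}$-interval and $|lx^{\rev}_{s,z}| < |lx^{\rev}_{x,z}|$. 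That is exactly $q^{r}(lx^{r}_{s,z}) \supseteq q^{r}(lx^{r}_{x,z})$, and $|lx_{x,z}| > |lx_{s,z}|$, completing the proof.

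The main obstacle I anticipate is the minimality argument's reliance on transitivity of the ``proper suffix'' relation together with the substring-freeness of $R$ that is implicit in Lemma~\ref{lemma:reducible-if-prefix-interval-is-suffix}: one must make sure that every arc produced along the way is genuinely an arc of $G_{O}$ (so that Lemma~\ref{lemma:reducible-if-prefix-interval-is-suffix} can be reapplied) and that the chain of shrinking left extensions cannot stabilize without hitting an irreducible arc — which is guaranteed because left-extension lengths are positive integers that strictly decrease, so the minimum is attained. A secondary point to check carefully is that Proposition~\ref{proposition:test-prefix} is being invoked with its hypothesis on the correct (nonempty) interval, namely $S_{2}$; here $S_{2}=lx^{\rev}_{x,z}$ is nonempty because $(r_{x},r_{z})$ is a genuine arc whose overlap is a proper (nonempty) prefix of $r_{z}$, so the left extension $lx_{x,z}$ is nonempty.
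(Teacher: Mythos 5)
Your proposal is correct and follows essentially the same route as the paper: both pick the incoming arc of $G_{O}$ with shortest left extension among those whose left extension is a proper suffix of $lx_{x,z}$ (the paper phrases this directly in terms of interval containment, you phrase it via the suffix relation and convert with Proposition~\ref{proposition:test-prefix} at the end), and both establish irreducibility of that arc by the same minimality-plus-transitivity contradiction. The only difference is cosmetic, and your explicit check that $q^{r}(lx^{r}_{x,z})$ is nonempty before invoking Proposition~\ref{proposition:test-prefix} is a detail the paper leaves implicit.
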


\begin{proof}
Let $(r_{s}, r_{z})$ be the arc of $G_{O}$ whose left extension is the
shortest among all arcs of $G_{O}$ such that $q^{r}(lx^{r}_{s,z})$ includes
$q^{r}(lx^{r}_{x,z})$ and $|lx_{x,z}| > |lx_{s,z}|$.
By  Lemma~\ref{lemma:reducible-if-prefix-interval-is-suffix}, since
$(r_{x}, r_{z})$  is not an arc of $G_{S}$ such an arc must exist.
We want to prove that  $(r_{s}, r_{z})$  is an arc of $G_{S}$.

Assume to the contrary that $(r_{s}, r_{z})$  is not an arc of
$G_{S}$, that is there exists an arc $e_{1}=(r_{h}, r_{z})$ of $G_{O}$ such that
$q^{r}(lx_{s,z})$ includes $q^{r}(lx_{h,z})$ and $|lx_{s,z}| >
|lx_{h,j}|$.
Then $q^{r}(lx_{x,z})$ includes $q^{r}(lx_{h,z})$ and $|lx_{x,z}| >
|lx_{h,j}|$, contradicting the assumption that  $(r_{s}, r_{z})$ is
the arc of $G_{O}$ whose left extension is the
shortest among all arcs of $G_{O}$ such that $q^{r}(lx^{r}_{s,z})$ includes
$q^{r}(lx^{r}_{x,z})$ and $|lx_{x,z}| > |lx_{s,z}|$.
\end{proof}

Lemma~\ref{lemma:string-graph-arc-reduces} suggests that each arc $e$ of
$G_{O}$ should be tested only against arcs in $G_{S}$ whose left
extension is strictly shorter than that of $e$ to determine
whether $e$ is also an arc of $G_{S}$.
A simple comparison between each arc of the overlap graph and each arc
of the string graph would determine which arcs are irreducible, but
this approach would require to store in main memory all arcs of
$G_{S}$ incident on a vertex.
To reduce the main memory usage, we partition the arcs of $G_{S}$
incoming in a vertex $r_{z}$ into chunks, where each chunk can contain at most
$M$ arcs (for any given constant $M$)~\cite{Vitter:2001:EMA:384192.384193}.
Let $D_{z}$ be the set of arcs of $G_{S}$ incoming in $r_{z}$, and let
$d_{z}$ be its cardinality.
Since there are at most $M$ arcs in each chunk, we need $\lceil
d_{z}/M\rceil$ passes over $D_{z}$ to perform all comparisons.
There are some technical details that are due to the fact that the set
$D_{z}$ is not known before examining the arcs of $G_{O}$ incoming in
$r_{z}$ (see Algorithm~\ref{alg:reduce-graph-RG}).
Mainly, we need an auxiliary file to store whether each arc $e$ of $G_{O}$
has already been processed, that is if we have already decided
whether $e$ is an arc of $G_{S}$.

\begin{algorithm}[tb!]
\SetKwInOut{PRE}{Input}%
\SetKwInOut{POST}{Output}
\PRE{%
The number $M$ of $5$-integer records that can be stored in memory.
The lists $\FileArc(p, \cdot)$  of arc encodings of $G_{O}$.
}
\POST{The set $E$ of arc encodings of the irreducible arcs of $G_{O}$.}
$\l_{\max}$ the maximum length of a read in $R$\;
$|R|$ the number of reads in  $R$\;
\For{$z\gets 1$ to $|R|$\label{alg:RG:outer}}{%
  $D_{z}\gets \emptyset$\;
  \While{$\exists p: \FileArc(p,z)$ contains at least an encoding not marked processed\label{alg:RG:main-while}}{%
    $C\gets \emptyset$\tcc*{$C$ contains a chunk of edges of $G_{S}$}
    \For{$p\gets 1$ to $\l_{\max}$}{%
      \ForEach{unprocessed arc encoding $e=\langle i, z, q^{r}(Q^{r}), |Q|
        \rangle \in \FileArc(p,z)$\label{alg:RG:begin-1}}{\label{alg:reduce:foreach-1edge:start}%
        \ForEach{$\langle h, z, q^{r}(P^{r}), |P|\rangle \in C$}{\label{alg:reduce:foreach:start}%
          \If{$|P| < |Q|$ and $q^{r}(P^{r})$ contains
            $q^{r}(P^{r})$\label{alg:RG:test}}{%
            Mark $e$ as transitive and
            processed\;\label{alg:reduce:foreach:end}
          }
        }
        \If{$|C|<M$ and $e$ is not marked transitive}{%
          Add $e$ to $C$\;
          Mark $e$ as processed\;
        }\label{alg:reduce:foreach-1edge:end}
      }
      $D_{z} \gets D_{z} \cup C$\;\label{alg:RG:end-1}\label{alg:RG:main-while:end}
    }
  }
}
\Return $\cup_{z}D_{z}$\;
\caption{ReduceOverlapGraph($M$)}
\label{alg:reduce-graph-RG}
\end{algorithm}

Now we can start proving the the correctness of
Algorithm~\ref{alg:reduce-graph-RG}.
\begin{lemma}
\label{lemma:reduce-terminates}
Let $G_{O}$ be the overlap graph of a string $R$ of reads, and let
$G_{S}$ be the corresponding string graph.
Then the execution of Algorithm~\ref{alg:reduce-graph-RG} on $G_{O}$
terminates with all arcs of $G_{O}$ marked processed.
\end{lemma}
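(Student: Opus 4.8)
The plan is to show two things: first that the outer \texttt{for} loop over $z$ makes progress because its body always terminates, and second that when the body for a fixed $z$ finishes, every arc in every $\FileArc(p,z)$ has been marked processed. Since every arc encoding of $G_{O}$ lies in exactly one list $\FileArc(p,z)$ (namely the one determined by its target $z$ and the length $p$ of its left extension), establishing these two facts for each $z$ immediately gives the statement. The outer loop runs for $z=1,\dots,|R|$, so only finitely many iterations remain to be controlled.

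First I would argue that the \texttt{while} loop at line~\ref{alg:RG:main-while} terminates. The key observation is that every arc encoding $e\in\FileArc(p,z)$ that is examined in the inner \texttt{foreach} at line~\ref{alg:RG:begin-1} leaves that iteration marked processed: it is marked processed either when it is marked transitive (line~\ref{alg:reduce:foreach:end}) or when it is added to the chunk $C$ (lines~\ref{alg:reduce:foreach-1edge:start}--\ref{alg:reduce:foreach-1edge:end}); the only case in which $e$ could fail to be marked processed is if $|C|=M$ and $e$ is not transitive. Hence a single pass of the \texttt{for} loop over $p$ either marks \emph{all} remaining unprocessed arcs incoming in $r_{z}$ as processed, or it fills a full chunk of $M$ arcs and moves those $M$ arcs into $D_{z}$ (line~\ref{alg:RG:end-1}). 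In the latter case the number of unprocessed arcs incoming in $r_{z}$ strictly decreases by at least $M\ge 1$ after each \texttt{while} iteration; since $r_{z}$ has finitely many incoming arcs in $G_{O}$, the \texttt{while} loop performs at most $\lceil d'_{z}/M\rceil$ iterations, where $d'_{z}$ is the in-degree of $r_{z}$ in $G_{O}$, and then terminates because no list $\FileArc(p,z)$ contains an unprocessed encoding.

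Next I would observe that when the \texttt{while} loop for index $z$ exits, its guard is false, i.e.\ for every $p$ the list $\FileArc(p,z)$ contains no encoding not marked processed; equivalently every arc of $G_{O}$ incoming in $r_{z}$ is marked processed. Running this over all $z$ shows that upon termination of Algorithm~\ref{alg:reduce-graph-RG} every arc of $G_{O}$ is marked processed, which is exactly the claim. The main obstacle is the termination argument for the \texttt{while} loop: one has to rule out the degenerate situation in which a pass of the \texttt{for} loop neither processes all remaining arcs nor fills a chunk—so the crux is the bookkeeping observation that an arc examined at line~\ref{alg:RG:begin-1} is left unprocessed \emph{only} when the current chunk is already full, which is precisely what forces a strict decrease in the number of unprocessed incoming arcs across \texttt{while} iterations. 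The remaining steps are routine inspection of the control flow.
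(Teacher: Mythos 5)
Your proof is correct and follows essentially the same route as the paper's: both arguments reduce termination to the observation that each iteration of the \texttt{while} loop at line~\ref{alg:RG:main-while} strictly decreases the number of unprocessed arcs incoming in $r_{z}$, because an examined arc can be left unprocessed only when the chunk $C$ is already full, and $C$ is emptied at the start of each iteration. The paper states this more tersely (the first unprocessed arc encountered is necessarily marked processed, since $C$ is empty when it is examined), while you additionally quantify the decrease as at least $M$ when a chunk fills; both versions are sound.
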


\begin{proof}
To prove that the algorithm terminates, we only have to prove that all
arcs in the generic set $\FileArc(p,z)$ are marked processed, as in
that case the condition of the while at line~\ref{alg:RG:main-while}
becomes false.
As long as there is an unprocessed arc, the condition at
line~\ref{alg:RG:main-while} is satisfied, hence the corresponding
while loop is executed.
At each execution of such loop, the first unprocessed arc is added to
$C$ (since $C$ is emptied at the beginning of the iteration) and
marked processed.
Hence, eventually all arcs must be marked processed.
\end{proof}

\begin{lemma}
\label{lemma:removed-arcs-are-transitive}
Let $G_{O}$ be the overlap graph of a string $R$ of reads,  let
$G_{S}$ be the corresponding string graph, and let
$e=\langle i, z, q^{r}(Q^{r}), |Q| \rangle $ be an arc
encoding that is marked transitive.
Then $e$ is not an arcs of $G_{S}$.
\end{lemma}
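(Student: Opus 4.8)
The plan is to show that whenever an arc $e=\langle i, z, q^{r}(Q^{r}), |Q|\rangle$ of $G_{O}$ gets marked transitive in Algorithm~\ref{alg:reduce-graph-RG}, there is a witness arc already in the current chunk $C$ that certifies the reducibility of $e$, and then to invoke Lemma~\ref{lemma:reducible-if-prefix-interval-is-suffix} to conclude that $e$ is reducible, hence not an arc of $G_{S}$. First I would unwind the only place where an arc is marked transitive, namely line~\ref{alg:reduce:foreach:end}: this happens precisely when there is some $\langle h, z, q^{r}(P^{r}), |P|\rangle\in C$ with $|P|<|Q|$ and $q^{r}(P^{r})$ containing $q^{r}(Q^{r})$ (reading the test at line~\ref{alg:RG:test} as the obviously intended containment of the $Q$-interval inside the $P$-interval).

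Next I would translate this interval-containment statement into a statement about strings. By Proposition~\ref{proposition:test-prefix}, since $|P|<|Q|$ and the $Q^{r}$-interval is nonempty and contained in the $P^{r}$-interval, $P^{r}$ is a proper prefix of $Q^{r}$; that is, $lx^{\rev}_{h,z}$ is a proper prefix of $lx^{\rev}_{i,z}$. I would also need that the arc $\langle h, z, \ldots\rangle$ sitting in $C$ is genuinely an arc of $G_{O}$ incoming in $r_{z}$ — this is immediate because every element ever added to $C$ (and thence to $D_{z}$) came from some list $\FileArc(p,z)$, which by the postcondition of Algorithm~\ref{alg:EAI} (Lemma~\ref{lemma:terminal-arc}) contains exactly the arc encodings of $G_{O}$ incoming in $r_{z}$. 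So the two arcs $(r_{i},r_{z})$ and $(r_{h},r_{z})$ are both arcs of $G_{O}$, with $lx^{\rev}_{h,z}$ a proper prefix of $lx^{\rev}_{i,z}$. Lemma~\ref{lemma:reducible-if-prefix-interval-is-suffix} then says exactly that $(r_{i},r_{z})$ is reducible, and by definition of the string graph a reducible arc is not in $G_{S}$.

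The main obstacle I anticipate is purely bookkeeping rather than conceptual: making sure that what is stored in the encoding $\langle h,z,q^{r}(P^{r}),|P|\rangle$ really is $q^{r}(lx^{\rev}_{h,z})$ and $|lx_{h,z}|$, and similarly for $e$. This requires tracing back through Algorithm~\ref{alg:EAI}: the arc encoding appended at line~\ref{alg:EAI-nonemtpy-P-end} is $\langle j,i,q^{r}(P^{r})\rangle$ produced while extending a $(P,S)$-encoding that has become terminal, so by Corollary~\ref{corollary:alg:EAI:correct} and Lemma~\ref{lemma:terminal-arc} the interval component is indeed $q^{r}$ of the reverse of the left extension and the integer component is its length. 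Once that identification is granted, the argument is a one-line application of Proposition~\ref{proposition:test-prefix} followed by Lemma~\ref{lemma:reducible-if-prefix-interval-is-suffix}. I would therefore keep the proof short: state that $e$ is marked transitive only via lines~\ref{alg:reduce:foreach:start}--\ref{alg:reduce:foreach:end} because of a witness $\langle h,z,q^{r}(P^{r}),|P|\rangle\in C$; note $C\subseteq D_{z}\subseteq$ arcs of $G_{O}$ incoming in $r_{z}$; apply Proposition~\ref{proposition:test-prefix} to get that $lx^{\rev}_{h,z}$ is a proper prefix of $lx^{\rev}_{i,z}$; apply Lemma~\ref{lemma:reducible-if-prefix-interval-is-suffix} to conclude $(r_{i},r_{z})$ is reducible; and finish by recalling that the string graph $G_{S}$ is obtained from $G_{O}$ by deleting all reducible arcs, so $e\notin G_{S}$.
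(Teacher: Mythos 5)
Your proposal is correct and follows essentially the same route as the paper's own (much terser) proof: unwind the marking condition at the containment test, and combine it with Lemma~\ref{lemma:reducible-if-prefix-interval-is-suffix} to conclude the arc is reducible. Your explicit invocation of Proposition~\ref{proposition:test-prefix} to turn the interval containment plus length condition into the proper-prefix statement is exactly the step the paper leaves implicit (it is alluded to in the discussion preceding Lemma~\ref{lemma:string-graph-arc-reduces}), so you have simply filled in detail rather than taken a different path.
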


\begin{proof}
Since $e$ is marked transitive,
there exists an arc encoding $\langle h, z, q^{r}(P^{r}), |P|
\rangle \in C$ such that $|P| < |Q|$ and $q^{r}(P^{r})$ contains $q^{r}(P^{r})$.
By construction of arc encoding and by
Lemma~\ref{lemma:reducible-if-prefix-interval-is-suffix}, the arc
$(r_{i}, r_{z})$ cannot be an arc of $G_{S}$.
\end{proof}

\begin{lemma}
\label{lemma:string-graph-arcs-are-irreducible}
Let $G_{O}$ be the overlap graph of a string $R$ of reads, let
$G_{S}$ be the corresponding string graph, and let
$e=\langle i, z, q^{r}(Q^{r}), |Q| \rangle$ be an arc encoding
inserted into
$D_{z}$ by Algorithm~\ref{alg:reduce-graph-RG}.
Then $(r_{i}, r_{z})$ is an arc of $G_{S}$.
\end{lemma}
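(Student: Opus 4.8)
The plan is to prove the contrapositive in spirit: an arc encoding $e=\langle i, z, q^{r}(Q^{r}), |Q|\rangle$ is inserted into $D_{z}$ only when it survives being tested against all chunks $C$ that were accumulated from the arcs processed before it, and I will argue that this surviving condition is exactly the condition of Lemma~\ref{lemma:reducible-if-prefix-interval-is-suffix} for irreducibility, once we know the chunks together cover all of $D_{z}$.

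First I would set up the bookkeeping. Fix the vertex $r_{z}$ and trace the execution of the outer \textbf{while} loop at line~\ref{alg:RG:main-while}. Each pass of this loop builds one chunk $C$ (emptied at the start), processes unprocessed arcs in $\FileArc(p,z)$ for increasing $p$, and at the end adds $C$ to $D_{z}$. I would first observe an ordering invariant: by Lemma~\ref{lemma:string-graph-arc-reduces}, if $(r_{x},r_{z})$ is reducible then it can be reduced using an arc of $G_{S}$ whose left extension is \emph{strictly shorter}; since arcs are scanned in increasing order of $|P|=p$, any shorter-left-extension arc of $G_{S}$ incoming in $r_{z}$ is examined (and, as I will argue, placed into some chunk $D_{z}$) no later than the pass in which $e$ is examined. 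This is the crux of why a single left-to-right scan by $p$ suffices.

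Next I would argue, by induction on the number of passes of the \textbf{while} loop, that whenever an arc encoding $e$ is added to $C$ at line~\ref{alg:reduce:foreach-1edge:end}, every arc $e'=\langle h,z,q^{r}(P^{r}),|P|\rangle$ with $|P|<|Q|$ that has \emph{already been placed into $D_{z}$} (in this pass via $C$, or in an earlier pass) has been tested against $e$ at line~\ref{alg:RG:test}; and $e$ was not marked transitive, so for every such $e'$ it is \emph{not} the case that $|P|<|Q|$ and $q^{r}(P^{r})$ contains $q^{r}(Q^{r})$. Combining this with the ordering invariant from the previous paragraph: by the time $e$ is added to $D_{z}$, \emph{all} arcs of $G_{S}$ incoming in $r_{z}$ with strictly shorter left extension have already been committed to $D_{z}$ (this uses Lemma~\ref{lemma:string-graph-arcs-are-irreducible} itself applied to earlier-committed arcs — so the induction must be on committed arcs in order of commitment, with the statement of the present lemma as the inductive invariant), hence $e$ has been tested against all of them and passed. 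Therefore, by Lemma~\ref{lemma:reducible-if-prefix-interval-is-suffix} and Proposition~\ref{proposition:test-prefix}, no arc of $G_{S}$ reduces $(r_{i},r_{z})$; and by Lemma~\ref{lemma:string-graph-arc-reduces}, if $(r_{i},r_{z})$ were reducible at all it would be reduced by an arc of $G_{S}$ with strictly shorter left extension — a contradiction. Hence $(r_{i},r_{z})$ is irreducible, i.e.\ an arc of $G_{S}$.

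The main obstacle I expect is the circular-looking dependence: to show a committed arc $e$ is in $G_{S}$ I want to compare it only against arcs of $G_{S}$, but what is actually in $D_{z}$ at that moment are arcs that have merely been \emph{committed}, not yet certified. The clean way around this is to make the induction hypothesis assert simultaneously, for the $k$-th arc committed to $D_{z}$ (in commitment order), both that it lies in $G_{S}$ \emph{and} that every arc of $G_{S}$ incoming in $r_{z}$ with a shorter left extension is among the first $k-1$ committed arcs; the ordering-by-$p$ of the scan together with the chunking logic (an arc not marked transitive and with $|C|<M$ is always committed, and when $|C|=M$ a fresh pass re-examines it) is what feeds this second clause. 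A secondary point needing care is that across different passes the partial sums / interval data are static (arc encodings are never modified here, unlike in \textbf{CompleteExtensions}), so re-testing $e$ in a later pass against a later chunk gives the same verdict as it would have earlier — this makes the multi-pass structure harmless. Once these two invariants are in place the conclusion is immediate from Lemmas~\ref{lemma:reducible-if-prefix-interval-is-suffix} and~\ref{lemma:string-graph-arc-reduces}.
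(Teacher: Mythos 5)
Your argument is correct and is essentially the paper's own proof: both rest on the facts that a committed arc has survived comparison against every chunk formed while it was unprocessed, that the increasing-$p$ scan order guarantees every arc of $G_{S}$ into $r_{z}$ with strictly shorter left extension is committed before $e$, and that Lemma~\ref{lemma:string-graph-arc-reduces} reduces the test to exactly those arcs. The only difference is presentational: you break the apparent circularity with an explicit simultaneous induction on commitment order, whereas the paper handles it implicitly (and one can avoid the induction altogether by invoking Lemma~\ref{lemma:removed-arcs-are-transitive}, which guarantees that arcs of $G_{S}$ are never marked transitive and hence are always eventually committed).
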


\begin{proof}
Since $e$ is in $D_{z}$, previously $e$ has been added to $C$.
Let us consider the iteration when $e$ is added to $C$: notice that
$|C|<M$ and $e$ is not marked as processed at the beginning of the iteration.
Consequently, no arc encoding that is currently in  $C$ or that has been in
$C$ in a previous iterations of the while loop at
lines~\ref{alg:RG:main-while}--\ref{alg:RG:main-while:end} satisfies
the condition of
Lemma~\ref{lemma:reducible-if-prefix-interval-is-suffix}, that is no
arc in $C$ or in a previous occurrence of $C$ can reduce $e$.

Since the arcs incoming in $r_{z}$ are examined in increasing order of
their left extension, all arcs of $G_{S}$ that are incoming in $r_{z}$
and whose left extension is shorter than $e$ have already been
inserted in $C$, either in the current iteration or in one of the
previous iterations.
Consequently no arc of $G_{O}$ can reduce $e$, hence $e$ is an arc of $G_{S}$.
\end{proof}

\begin{theorem}
\label{theorem:correct-string-graph}
Let $G_{O}$ be the overlap graph of a string $R$ of reads.
Then Algorithm~\ref{alg:reduce-graph-RG} outputs the set $E$ of the arc
encodings of the irreducible arcs of $G_{O}$ reading or writing at
most $3|E(G_{O})|\lceil d/M\rceil$ records, where $E(G_{O})$ is the
arc set of $G_{O}$ and $d$ is the maximum indegree of $G_{S}$.
\end{theorem}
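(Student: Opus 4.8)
The plan is to prove correctness first and then bound the number of I/O records. For correctness I would show that $E:=\bigcup_{z}D_{z}$ is exactly the set of arc encodings of the irreducible arcs of $G_{O}$. By Lemma~\ref{lemma:reduce-terminates} the algorithm terminates with every arc encoding of $G_{O}$ marked processed, and by inspection of lines~\ref{alg:reduce:foreach-1edge:start}--\ref{alg:reduce:foreach-1edge:end} an arc encoding becomes processed only when it is either marked transitive or inserted into $C$ (hence into some $D_{z}$), and these two events are mutually exclusive. Lemma~\ref{lemma:removed-arcs-are-transitive} says that every encoding marked transitive is not an arc of $G_{S}$, so $E$ contains no reducible arc. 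Conversely, Lemma~\ref{lemma:string-graph-arcs-are-irreducible} says every encoding inserted into some $D_{z}$ is an arc of $G_{S}$; and by the contrapositive of Lemma~\ref{lemma:removed-arcs-are-transitive} together with Lemma~\ref{lemma:reducible-if-prefix-interval-is-suffix} an irreducible arc is never marked transitive, hence — being processed at termination — it must have been inserted into the appropriate $D_{z}$. Therefore $E$ is precisely the irreducible arc set, and the records of $E$ are valid arc encodings in the sense of the previous section because they are inherited from the lists $\FileArc(\cdot,\cdot)$ produced by Algorithm~\ref{alg:OG}.

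For the I/O bound, fix a read $r_{z}$ and let $\delta_{z}$ and $d_{z}$ be its indegree in $G_{O}$ and in $G_{S}$, respectively, so that $\sum_{z}\delta_{z}=|E(G_{O})|$ and $d_{z}\le d$. The key point is that each execution of the while loop at line~\ref{alg:RG:main-while} commits a fresh chunk of arcs of $G_{S}$ to $D_{z}$: since $C$ is emptied at the start of the iteration and, by Lemma~\ref{lemma:string-graph-arcs-are-irreducible}, only arcs of $G_{S}$ are ever placed in $C$, the iteration adds exactly $\min(M,\,\text{number of still-uncommitted arcs of }G_{S}\text{ incoming in }r_{z})$ new elements to $D_{z}$; hence the while loop runs at most $\lceil d_{z}/M\rceil$ times. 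Each such iteration scans the lists $\FileArc(p,z)$ for all $p$, reading all $\delta_{z}$ arc encodings incoming in $r_{z}$, and for each one it also reads and rewrites the associated ``processed'' flag in the auxiliary file — three records per arc encoding per pass. Summing over the passes and over $z$ gives at most $\sum_{z}3\,\delta_{z}\lceil d_{z}/M\rceil \le 3\lceil d/M\rceil\sum_{z}\delta_{z}=3|E(G_{O})|\lceil d/M\rceil$ records read or written; the records written to $\bigcup_{z}D_{z}$ number at most $|E(G_{S})|\le|E(G_{O})|$ and are absorbed in this bound.

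The main obstacle I expect is the counting step showing that the while loop runs at most $\lceil d_{z}/M\rceil$ times. One must argue carefully that no reducible arc ever occupies a slot of $C$, so that the $M$ slots of each chunk are spent only on arcs of $G_{S}$, and that the arcs incoming in $r_{z}$ are examined in nondecreasing order of left-extension length across both the inner \texttt{for} loop over $p$ and successive while iterations, so that by the time a reducible arc is (finally) marked transitive, every arc of $G_{S}$ that could reduce it has already been committed — otherwise a reducible arc could survive unprocessed through arbitrarily many iterations and the bound would fail. These facts follow from Lemmas~\ref{lemma:string-graph-arcs-are-irreducible} and~\ref{lemma:string-graph-arc-reduces}, but assembling them into the precise iteration count is the delicate part; everything else is bookkeeping.
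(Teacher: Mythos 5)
Your proof is correct and follows essentially the same route as the paper: correctness via Lemmas~\ref{lemma:reduce-terminates}, \ref{lemma:removed-arcs-are-transitive}, and~\ref{lemma:string-graph-arcs-are-irreducible}, and the I/O bound by charging $3$ records per incoming arc per pass and counting $\lceil d_{z}/M\rceil$ passes because each non-final while iteration commits a full chunk of $M$ arcs of $G_{S}$ to $D_{z}$. The ``delicate part'' you flag (that only arcs of $G_{S}$ occupy slots of $C$, thanks to the ordering by left-extension length) is exactly the content of Lemma~\ref{lemma:string-graph-arcs-are-irreducible}, which the paper also simply invokes at this point rather than re-proving.
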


\begin{proof}
By
Lemmas~\ref{lemma:reduce-terminates},~\ref{lemma:removed-arcs-are-transitive},
and~\ref{lemma:string-graph-arcs-are-irreducible}, Algorithm~\ref{alg:reduce-graph-RG} outputs the set $E$ of the arc
encodings of the irreducible arcs of $G_{O}$.

To determine the total number of records that are read by
Algorithm~\ref{alg:reduce-graph-RG}, we notice that each execution of
the while loop
at lines~\ref{alg:RG:main-while}--\ref{alg:RG:main-while:end} read the
records of all arcs incoming in $r_{z}$ as well as all records of the
auxiliary file storing whether an arc encoding has been processed.
Moreover during each iteration, such auxiliary file is written.
Therefore the  I/O complexity of an iteration of the while loop
regarding the arc incoming in $r_{z}$ is equal to $3$ times the number
of arcs of $G_{O}$ incoming in $r_{z}$.

Now we have to determine the number of iterations of the while loop at
lines~\ref{alg:RG:main-while}--\ref{alg:RG:main-while:end}.
A consequence of Lemmas~\ref{lemma:reduce-terminates},~\ref{lemma:removed-arcs-are-transitive},
and~\ref{lemma:string-graph-arcs-are-irreducible} is that the
condition at line~\ref{alg:RG:main-while} becomes false (and we exit
from the while loop) only when all arcs in $D_{z}$
are inserted in $C$ in some iteration.
The condition at line~\ref{alg:RG:test} that an arc encoding $e$ is
added to $C$ only if $|C|<M$ and $e$ is not transitive.
Therefore only the last iteration can terminate with a set $C$
containing fewer than $M$ elements.
Hence the number of iterations is equal to $\lceil |D_{z}|/M \rceil$.
Consequently the I/O complexity of an iteration of the for loop over
all reads in $R$
(lines~\ref{alg:RG:outer}--~\ref{alg:RG:main-while:end}) is equal to
$3 \lceil |D_{z}|/M \rceil |E_{O}(r_{z})|)$, where $E_{O}(r_{z})$ is the
set of arcs of $G_{O}$ that are incoming in $r_{z}$.

Summing over all iterations of the for loop at
lines~\ref{alg:RG:outer}--\ref{alg:RG:main-while:end} immediately
proves the theorem.
\end{proof}

\section{Conclusions}

The first contribution of this paper is a compact representation of the overlap
graph and of the string graph via string-intervals.
More precisely, we have shown how a string-interval can be used to represent the
set of reads sharing a common prefix, with a possible reduction in the overall
space used.

Then, we have proposed the first known external-memory algorithm to compute
the overlap graph, showing that it reads at most  $(3+ 6l)n$
records, where $n$ is the total length of
the input and $l$ is the maximum length of each input string, using
only a constant amount of main memory.
A fundamental technical contribution is the improvement of the CompleteExtensions
procedure that has been introduced in~\cite{Cox2012} to compute, with a single
scan of the BWT, all backward $\sigma$-extensions of a set of disjoint string-intervals.
Our improvement allows to extend a generic set of string-intervals.

Finally, we have described a new external-memory algorithm for reducing an
overlap graph to obtain the corresponding string graph, reading or
writing $3|E(G_{O})|\lceil d/M\rceil$ records, where $E(G_{O})$ is the arc
set of $G_{O}$ and $d$ is the maximum indegree of $G_{S}$,
while using an amount of main memory necessary
to store $M/5$ integers (as well as some constant-sized data structure).

There are some open problems that we believe are interesting.
The analysis of the algorithm complexity is not very detailed.
In fact, we conjecture that some clever organization of the records
and a more careful analysis will show that the actual I/O complexity
is better than the one we have shown in the paper.

Another direction is to assess the actual performance of the algorithm on data originating from a set of
sequences, such as those coming from transcriptomics~\cite{Beretta2013,Lacroix2008} or
metagenomics~\cite{pell_scaling_2012}, especially to verify the gain in disk usage.

\section*{Acknowledgements}

The authors acknowledge the support of the MIUR PRIN 2010-2011 grant
``Automi e Linguaggi Formali: Aspetti Matematici e Applicativi''  code 2010LYA9RH,
of the Cariplo Foundation grant 2013-0955 (Modulation of anti cancer immune
response by regulatory non-coding RNAs), of the FA 2013 grant ``Metodi
algoritmici e modelli: aspetti teorici e applicazioni in
bioinformatica'' code 2013-ATE-0281, and of the FA 2014 grant
``Algoritmi e modelli computazionali: aspetti teorici e applicazioni
nelle scienze della vita'' code 2014-ATE-0382.

\bibliographystyle{abbrv}
\bibliography{biblioBWTpaper}

\end{document}